\newcommand{\C}{\mathbb C}
\newcommand{\Z}{\mathbb Z}
\newcommand{\cH}{\mathcal{H}}
\DeclareMathOperator{\Res}{Res}
\DeclareMathOperator{\Span}{Span}
\numberwithin{equation}{section}
\newtheorem{theorem}{Theorem}[section]
\newtheorem{lemma}[theorem]{Lemma}
\newtheorem{prop}[theorem]{Proposition}
\newtheorem{example}[theorem]{Example}
\theoremstyle{definition}
\newtheorem{remark}[theorem]{Remark}
\title[Generalised spin Calogero--Moser systems]{Generalised spin Calogero--Moser systems from Cherednik algebras}
\author{Misha Feigin, Mikhail Vasilev, Martin Vrabec}
\address{School of Mathematics and Statistics, University of Glasgow, University Place, Glasgow, G12 8QQ, United Kingdom}
\email{Misha.Feigin@glasgow.ac.uk, Mikhail.Vasilev@glasgow.ac.uk}
\address{Université de Montréal, Centre de Recherches Mathématiques, Pavillon André-Aisenstadt, 2920 Chemin de la tour, Montréal, H3T 1J4, Canada}
\email{martinvrabec222@gmail.com}
\begin{document}

\begin{abstract}
    Integrable spin Calogero--Moser type systems with non-symmetric configurations
    of the singularities of the potential appeared in the work of Chalykh, Goncharenko, 
    and Veselov in 1999. We obtain
        various generalisations of these examples 
    by making use of the representation theory of Cherednik algebras.
\end{abstract}

\maketitle

\tableofcontents

MSC 2020: 81R12 (primary), 17B22, 16G99 (secondary)

\section{Introduction}

The usual Calogero--Moser integrable model describes a system of pairwise interacting particles on a line (rational case) or a circle (trigonometric case) with an inverse square distance potential~\cite{Calogero, Moser, Sutherland}. This potential is naturally associated with the root system of type $A$. Integrable generalisations for other root systems were introduced by Olshanetsky and Perelomov~\cite{OP}.

  Quantum integrable generalisations for some special configurations of vectors that are not root systems were first discovered by Chalykh, Veselov, and one of the authors~\cite{VFC, CFV'98, CFV'99}. Namely, they found the configurations $A_n(m)$ and $C_{n+1}(m,l)$, which are certain deformations of the root systems $A_n$ and $C_{n+1}$, respectively.   
  A different deformation of the $A$-type root system appeared in~\cite{CVlocus}.
  The configuration $A_n(m)$ leads to a Calogero--Moser type system where one of the particles has a different mass. In \cite{SV}, Sergeev and Veselov considered a generalisation of this system to an arbitrary number of particles of each of the two types. They also introduced several other configurations, including a deformation of the root system $BC_n$ that can be thought of as its generalisation to the case of two types of particles. The systems with two types of particles were also considered in the context of gauge theory \cite{Kimura}. Some systems containing three different types of particles were studied recently in \cite{GR, BC}.
  Further configurations leading to quantum integrable Calogero--Moser type systems have been discovered and investigated in~\cite{Feigin, FF, FV19}. 
  In the rational case in dimension $2$, there are many integrable cases. They were investigated by Berest and Loutsenko in~\cite{BL} in relation to Hadamard's problem  on hyperbolic differential equations satisfying Huygens' principle in a strong sense.

  
  For the study of special polynomial eigenfunctions of the systems with two types of particles,
  we refer the reader to \cite{Sergeev, SV'05, SV'09, HL, AHL, Feigin1, Hallnas}. In particular, the Jack polynomials are eigenfunctions of the trigonometric Calogero--Moser operator of type~$A$, and its deformation associated with the configuration $A_n(m)$ has super-Jack polynomials as eigenfunctions (see~\cite{Sergeev}).

A uniform proof of the integrability of the Olshanetsky--Perelomov operators using Dunkl operators~\cite{Dunkl} was obtained by Heckman~\cite{Heckman, HeckmanTrig}, which led to a connection between Cherednik algebras and quantum Calogero--Moser systems associated with root systems (see \cite{Et}). Rational generalised Calogero--Moser systems for non-symmetric configurations of vectors with special coupling parameters were related to the representation theory of rational Cherednik algebras by one of the authors in~\cite{Feigin}. A similar approach to the elliptic case is developed in~\cite{FS}.
Another approach to rational generalised Calogero--Moser systems using Cherednik algebras was proposed by Berest and Chalykh in~\cite{BC}. 

A quantum matrix (spin) version of the Calogero--Moser model appeared in the work of Ha and Haldane \cite{HH}, where they considered matrix differential operators acting on functions valued in the $N$-fold tensor product of vector spaces. Its integrability was established by Minahan and Polychronakos, 
who constructed commuting quantum integrals 
using Polychronakos' version of Dunkl operators in \cite{MP}, and independently at the same time by Hikami and Wadati in \cite{HW} by introducing a Lax pair. 
Bernard, Gaudin, Haldane, and Pasquier produced extra quantum integrals for this model in~\cite{BAB} by using a Yangian symmetry. 

For the Olshanetsky--Perelomov operators, the quantum matrix version 
was considered by Cherednik in~\cite{Ch2}. In this case the Hamiltonian acts on functions of coordinates taking values in a module over the corresponding Weyl group (or Coxeter group in the rational case). Quantum integrals can be obtained from suitable restrictions of invariant combinations of Dunkl operators (see formula (2.30) for the Hamiltonian and Corollary 2.8 in \cite{Ch2} for the commutative family).

Other versions of matrix Calogero--Moser systems are known as well. The first appearance goes back to the work of Gibbons and Hermsen~\cite{GH}, in the classical case. Relation of their system with the Yangian was found in \cite{BAB1}, and this system was related to the pole dynamics of solutions of the matrix KP equation in~\cite{KBBT} (see \cite{Krich} for a relation of the scalar Calogero--Moser dynamics with  solutions of the KP equation). A type $B$ spin version of this system appeared in \cite{ChS} (together with corresponding nonlinear PDEs and generalisations for other wreath product complex reflection groups). It was shown in \cite{FH} that an addition of the harmonic potential preserved (super)integrability of these models.
For further spin Calogero--Moser type systems we refer to \cite{LX} and references therein. In particular, matrix Calogero--Moser type systems were associated to coadjoint orbits in the dual of Lie algebras in \cite{Resh}, and their superintegrability together with action-angle duality was studied. Recently a blending of classical and quantum spin Calogero--Moser type systems appeared in \cite{LRS}. Let us also refer to papers \cite{Crampe} and \cite{CrampeYoung} dealing with other matrix Calogero--Moser type systems related to Lie algebras and complex reflection groups, respectively.

First quantum matrix models associated with configurations of vectors that are not root systems were introduced in the rational case by Chalykh, Goncharenko, and Veselov in~\cite{CGV}, where they proved the $D$-integrability of these systems. They considered matrix operators of the form
\begin{equation*}
    L = \Delta - \sum\limits_{\alpha \in \mathcal{A}}\frac{m_{\alpha}(m_{\alpha} - P_{\alpha}) (\alpha, \alpha)}{(\alpha, x)^2},
\end{equation*} 
where $\mathcal{A}$ is a finite collection of vectors in a vector space $V$, $P_{\alpha}$ is a matrix acting on the vector space $U \cong V$ as a reflection about the hyperplane orthogonal to $\alpha$, and $m_\alpha$ is an integer scalar multiplicity.
In addition to root systems, where one recovers systems from \cite{Ch2}, one of the examples of such a configuration $\mathcal A$ considered in~\cite{CGV} is the deformed $A_n$-type collection of vectors $A_{n}(m)$ consisting of the vectors $\{e_i - e_j \mid 1 \leq i < j \leq n \}$ with multiplicity~$m_{e_i -e_j} = m$ and $\{e_i - \sqrt{m} e_{n + 1} \mid 1 \leq i \leq n \}$ with multiplicity $m_{e_i - \sqrt{m} e_{n + 1}} = 1$. 
Another example considered in~\cite{CGV} is the deformed $C$-type configuration $C_{n+1}(m,l)$.

In this paper, we obtain 
generalisations of these operators $L$ admitting many 
commuting integrals, to various other configurations of vectors $\mathcal{A}$, to more general vector spaces $U$, and to the trigonometric case by using the representation theory of Cherednik algebras.

We start with the polynomial representation $\mathcal{P}$ of the rational or trigonometric Cherednik algebra, which is realised 
with the help of Dunkl operators. For special values of the parameters of the Cherednik algebra, the polynomial representation becomes reducible with a submodule $\mathcal{I}$ given by polynomials vanishing on the orbit of an intersection of Coxeter mirrors. We enlarge  
the quotient representation~$\mathcal{P}/\mathcal{I}$ of the Cherednik algebra by taking the tensor product with 
a right module $\mathcal{U}$ of the corresponding Coxeter group $W$. The action of invariant polynomials of Dunkl operators on the diagonal invariants $(\mathcal{U} \otimes \mathcal{P}/\mathcal{I})^{W}$ produces commuting differential operators with values in endomorphisms of $U = \mathcal{U}^{W_0}$, where $W_0 \subset W$ is the parabolic subgroup associated with the submodule $\mathcal{I}$. This extends the construction of scalar generalised Calogero--Moser systems from~\cite{Feigin}, which can be recovered by setting~$\mathcal{U}$ to be the trivial representation.

The structure of the paper is as follows. In Section~\ref{invariant ideals}, we review the definition of parabolic strata for finite Coxeter groups and the conditions that the strata have to satisfy for the ideal of polynomials vanishing on them to be invariant under the associated rational Cherednik algebra. In Section~\ref{rational case}, we show that parabolic strata defining invariant ideals can be used to obtain matrix operators of rational spin Calogero--Moser type, including the case with a harmonic term, and quantum integrals for them, and explain the relation of these Hamiltonians to projections of Coxeter root systems. We compute the Hamiltonians explicitly for all classical root systems. 

In Section~\ref{trig case}, we develop an analogous construction for trigonometric Cherednik algebras associated with crystallographic root systems (including the non-reduced root system of type $BC$), which leads to generalisations of trigonometric spin Calogero--Moser operators. 
We present the resulting Hamiltonians explicitly in the case of all classical root systems and also in several interesting examples involving the exceptional types. For type $A$, we then provide in Section~\ref{extra integals} also additional quantum integrals for the corresponding deformed trigonometric spin Calogero--Moser operators by exploiting a Yangian symmetry.

\section{Invariant parabolic ideals for rational Cherednik algebras}\label{invariant ideals}
Here we review the basic information about invariant parabolic ideals for the polynomial
representation of rational Cherednik algebras following \cite{Feigin}. 

Let $W$ be a finite real
reflection group acting by orthogonal transformations in the complexified reflection 
representation $V = \mathbb{C}^N$. Let~$R$ and $\Gamma$ be the corresponding (reduced) root system and
Coxeter graph, respectively. We assume that a positive subsystem $R_{+} \subset R$ is 
chosen so that vertices of $\Gamma$ are identified with simple roots.
We denote by $\Gamma_0 \subset \Gamma$ and $\Gamma_0^v$ a subgraph of the Coxeter graph and the 
set of simple roots corresponding to the vertices of the subgraph $\Gamma_0$, respectively. We denote by $W_0 \subset W$ the parabolic subgroup of the Coxeter group generated by the reflections about the hyperplanes orthogonal to the roots $\Gamma_0^v$.

Let $c \colon R \to \C$, $\alpha \mapsto c_{\alpha}$ be a $W$-invariant function.  
The rational Cherednik
algebra $\cH_c$ associated with the root system $R$ is the algebra acting faithfully on the space of polynomials
$\mathbb{C}[x] = \mathbb{C}[x_1, \ldots, x_N]$ generated by polynomials $p \in \mathbb{C}[x]$, the reflection group~$W$, and Dunkl operators
\begin{equation}
    \label{Dunkl}
    \nabla_{\xi} = \partial_{\xi} - \sum\limits_{\alpha \in R_{+}} \frac{c_{\alpha} (\alpha, \xi)}{(\alpha, x)}(1 - s_{\alpha}), \qquad \quad (\xi \in V)
\end{equation}
where $(\cdot, \cdot)$ is the standard $\C$-bilinear form on $V$, $\partial_\xi$ is the directional derivative along the vector $\xi$, and $s_\alpha$ is the orthogonal reflection about the hyperplane $(\alpha, x)$ = 0 \cite{EG}.
It is well known that Dunkl operators commute among themselves, $[\nabla_{\xi} , \nabla_{\eta}] = 0$ for all~$\xi, \eta \in V$ \cite{Dunkl}. Let $\{e_i\}_{i=1}^N$ be the standard orthonormal basis of $V$, and let $\nabla_i = \nabla_{e_i}$.

Let $\Gamma_0$ be a subgraph of the Coxeter graph $\Gamma$ obtained by
specifying some of the vertices of $\Gamma$ and preserving all the edges between these vertices.
The subgraph $\Gamma_0$ defines the subspace
\begin{equation*}
    \pi = \pi_{\Gamma_0} = \{x \in V \mid (\beta , x)=0, \ \forall \beta \in \Gamma_0^v \}.
\end{equation*}
The associated parabolic stratum is defined as 
\begin{equation*}
    D_{\Gamma_0} = \bigcup\limits_{w \in W} w(\pi).
\end{equation*}
The corresponding parabolic ideal $I_{\Gamma_0}$ is the set of polynomials vanishing on the stratum,
$I_{\Gamma_0} = \{p \in \mathbb{C}[x] \mid p|_{D_{\Gamma_0}} = 0 \}$. Then the following theorem
holds.

\begin{theorem} \cite{Feigin}
    \label{invariance}
    Let $\Gamma_0 = \coprod_{i = 1}^l \Gamma_i$ be the decomposition of the subgraph into connected components. Then the parabolic ideal $I_{\Gamma_0}$ is invariant under the rational Cherednik algebra $\cH_c$ if and only if the following relation holds for all $i=1, \dots, l$:
    \begin{equation*}
        \sum\limits_{\alpha \in R \cap V_i} \frac{c_{\alpha} (\alpha , u) (\alpha, v)}{(\alpha, \alpha)} = (u, v)
    \end{equation*}
    for all $u,v \in V_i$, where $V_i$ is the linear space spanned by the roots $\Gamma_i^v$.
\end{theorem}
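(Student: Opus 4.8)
The plan is to prove that the parabolic ideal $I_{\Gamma_0}$ is invariant under $\mathcal{H}_c$ by checking invariance under each of the generators of the Cherednik algebra. Since $I_{\Gamma_0}$ is by construction an ideal in $\mathbb{C}[x]$, it is automatically preserved under multiplication by polynomials $p \in \mathbb{C}[x]$. Moreover, the stratum $D_{\Gamma_0} = \bigcup_{w \in W} w(\pi)$ is manifestly $W$-invariant, so $I_{\Gamma_0}$ is preserved by the $W$-action. Hence the whole question reduces to deciding when $I_{\Gamma_0}$ is stable under the Dunkl operators $\nabla_\xi$ of~\eqref{Dunkl}, and it suffices to check this for a spanning set of $\xi \in V$.

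\medskip

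First I would fix $p \in I_{\Gamma_0}$ and analyse $\nabla_\xi p$ near the subspace $\pi = \pi_{\Gamma_0}$; because $D_{\Gamma_0}$ is the $W$-orbit of $\pi$ and the Dunkl operators are $W$-equivariant in the appropriate sense, invariance on $\pi$ will propagate to all of $D_{\Gamma_0}$. The ordinary derivative term $\partial_\xi p$ need not lie in $I_{\Gamma_0}$, so the crucial cancellation must come from the sum over roots. The key step is to restrict attention to the roots $\alpha \in R_+$ and sort them according to their restriction to $\pi$: those $\alpha$ with $\alpha \perp V_i$ for every component, equivalently $(\alpha, x)$ not vanishing identically on $\pi$, contribute reflections $s_\alpha$ that move $\pi$ to a different facet, whereas the roots lying in the span $V_i$ of a connected component $\Gamma_i^v$ fix $\pi$ pointwise under $s_\alpha$ and produce genuine singular contributions. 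The heart of the matter is to expand $p$ in a Taylor series transverse to $\pi$ and track the leading-order behaviour of $\frac{c_\alpha(\alpha,\xi)}{(\alpha,x)}(1 - s_\alpha)p$ as $x \to \pi$.

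\medskip

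The main obstacle, and the place where the stated condition $\sum_{\alpha \in R \cap V_i} \frac{c_\alpha (\alpha,u)(\alpha,v)}{(\alpha,\alpha)} = (u,v)$ enters, is controlling the potentially singular terms of the form $\frac{1}{(\alpha, x)}$ coming from roots $\alpha \in R \cap V_i$. For such a root, $(\alpha, x)$ vanishes on $\pi$, so $\frac{(1 - s_\alpha)p}{(\alpha, x)}$ has a removable singularity only if the numerator vanishes to sufficiently high order; writing $p$ modulo $I_{\Gamma_0}$ in terms of coordinates transverse to $\pi$, one finds that the residual first-order part of $\nabla_\xi p$ along $\pi$ is governed precisely by the quadratic form $\sum_{\alpha \in R \cap V_i} \frac{c_\alpha(\alpha,\xi)(\alpha, \cdot)}{(\alpha,\alpha)}$ acting on the normal directions $V_i$. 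Requiring this operator to agree with the identity (the contribution of the plain derivative $\partial_\xi$) on each $V_i$ is exactly the condition that the unwanted transverse first-order term cancels, so that $\nabla_\xi p$ again vanishes on $\pi$ and hence on $D_{\Gamma_0}$.

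\medskip

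For the converse direction I would run the computation in reverse: choosing a test polynomial $p \in I_{\Gamma_0}$ whose transverse linear part along some component $V_i$ is generic, the surviving first-order coefficient of $\nabla_\xi p|_\pi$ is a nonzero multiple of $\left(\sum_{\alpha \in R \cap V_i} \frac{c_\alpha(\alpha,u)(\alpha,v)}{(\alpha,\alpha)} - (u,v)\right)$ for suitable $u,v \in V_i$, so if the bilinear identity fails for some $i$ then $\nabla_\xi p \notin I_{\Gamma_0}$, proving invariance forces the stated relation. The delicate bookkeeping will be to verify that the various components $\Gamma_i$ do not interfere, i.e.\ that roots lying in distinct $V_i$, or roots not contained in any $V_i$, contribute nothing to the obstruction along a fixed component; this follows from orthogonality of the components and from the fact that such roots give terms already regular on $\pi$, but it is the part that requires the most careful verification.
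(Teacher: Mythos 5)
The paper itself offers no proof of Theorem~\ref{invariance}: it is quoted from~\cite{Feigin}, so the comparison can only be with the argument given there, which your plan reproduces in substance --- reduction to the Dunkl operators, $W$-equivariant localisation at a generic point of $\pi$, the observation that only roots in $\pi^{\perp}=V_1\oplus\cdots\oplus V_l$ (each necessarily lying in a single $V_i$, by the standard fact on parabolic root subsystems) give terms that survive restriction to $\pi$, and the matching of first-order transverse Taylor data, which produces the stated bilinear condition in both directions. Your approach is correct; the only repairs needed are the slip that ``$\alpha\perp V_i$ for every component'' is \emph{not} equivalent to ``$(\alpha,x)$ does not vanish identically on $\pi$'' (the latter means $\alpha\notin V_1\oplus\cdots\oplus V_l$, and it is this condition that makes those terms regular and vanishing on $\pi$), and, for the ``if'' direction, the short degree count showing that Taylor terms of $p$ of transverse order at least $2$ contribute nothing to $\nabla_\xi p|_{\pi}$, so that cancelling the first-order part really does force $\nabla_\xi p$ to vanish on $\pi$ and not merely to first order.
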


\section{Generalised rational spin Calogero--Moser systems}\label{rational case} 

In this section, we explain how $\cH_c$-invariant parabolic ideals $I_{\Gamma_0}$ lead to operators of rational spin (equivalently, matrix) Calogero--Moser type and quantum integrals for them.

\subsection{Review of the scalar case}

Suppose $I_{\Gamma_0}$ is an invariant parabolic ideal as above. Let us define the quotient module 
$M = \mathbb{C}[x]/I_{\Gamma_0}$. Any polynomial function of Dunkl operators can act on the module~$M$. Moreover, $W$-invariant polynomials of Dunkl operators preserve the space of invariants $M^{W}$. Consider the operators
\begin{equation*}
    H_p = \Res p(\nabla)
\end{equation*}
for $W$-invariant polynomials $p \in \mathbb{C}[x]^W$, $p(\nabla) = p(\nabla_1, \ldots, \nabla_N)$, where $\Res$ denotes restriction to $M^W$. Let us note that $M^W$ is, more generally, preserved by
the centraliser $\cH_c^W=C_{\cH_c}(\mathbb{C}[W])$ of the group algebra~$\mathbb{C}[W]$ of the Coxeter group inside the rational Cherednik algebra. In particular, it is a module for the spherical subalgebra $S\cH_c \subset \cH_c^W$, where $S\cH_c = e\cH_ce$ with $e = |W|^{-1} \sum_{w \in W}w$. 

To write down the operators $H_p$ explicitly in local coordinates on $\pi$, it is convenient to consider action of $p(\nabla)$ on $W$-invariant (formal) sums of germs of analytic functions on $W$-orbits of small neighbourhoods on~$D_{\Gamma_0}$ of a generic point of $\pi$  (see~\cite[Section~3]{Feigin} for the details), rather than on global functions. This way the operator $H_p$ takes the form of a differential operator on $\pi$, which we denote as $\Res_\pi p(\nabla)$. 
In particular, by using that
\begin{equation}\label{q1}
    \sum\limits_{i = 1}^N \nabla_i^2  = \sum\limits_{i = 1}^N \partial_{x_i}^2 - \sum\limits_{\alpha \in R_+} \frac{2 c_{\alpha}}{(\alpha, x)} \partial_{\alpha} + \sum\limits_{\alpha \in R_+}\frac{c_{\alpha} (\alpha, \alpha)}{(\alpha , x)^2} (1 - s_{\alpha}),
\end{equation}
one obtains the generalised Calogero--Moser Hamiltonian  
\begin{equation*}
 \Res_\pi \left( \sum\limits_{i = 1}^N \nabla_i^2 \right) = \Delta_y - \sum\limits_{\substack{\alpha \in R_+ \\ \widehat{\alpha} \neq 0}}\frac{2 c_{\alpha}}{(\widehat{\alpha} , y)} \partial_{\widehat{\alpha}},
\end{equation*}
where $y = (y_1, \ldots, y_n)$ are orthonormal coordinates on the space~$\pi$, $\Delta_y = \partial_{y_1}^2 + \cdots + \partial_{y_n}^2$, and $\widehat{\alpha}$ is the orthogonal projection of
$\alpha$ onto~$\pi$. The operators $\Res_\pi p(\nabla)$ for $p \in \C[x]^W$ give quantum integrals for this Hamiltonian.

\subsection{Matrix case}\label{spin case}
We can act with Dunkl operators also on the space $\widetilde{M} = U \otimes M$,
where $U$ is any complex vector space (and $\otimes = \otimes_\C$) called the vector part in what follows, where we make any element $h$ of the rational Cherednik algebra $\cH_c$ act as $\mathrm{id} \otimes h$ on $\widetilde{M}$. Assume now that $U$ is a right $W$-module and denote the action of $w \in W$ on $v \in U$ by $v \cdot w$. 

Let us denote by $\widetilde{M}^W$ the space of invariants under the diagonal (left) action of $W$ on $\widetilde{M}$ given by 
\begin{equation}\label{diag action}
    w \cdot (v \otimes f) = (v \cdot w^{-1}) \otimes (w f)
\end{equation}
for $w \in W$, $v \in U$, $f \in M$, and extended linearly. 
The space $\widetilde{M}^W$ is isomorphic to~$U \otimes_W M$, that is, the quotient of $\widetilde{M}$ by the relations
\begin{equation*}
    v \otimes (w f) =  (v \cdot w) \otimes f. 
\end{equation*}
Under the action of $\cH_c$ on $\widetilde{M}$ from above, the subalgebra $\cH_c^W$ preserves the space~$\widetilde{M}^W$.

We show below that $\cH_c^W$ can act not just on $\widetilde{M}^W$, but also on certain analytic vector-valued germs of functions defined on $D_{\Gamma_0}$ near generic points of $\pi$. 
Let $x_0 \in \pi$ be a generic point (meaning that if a Coxeter mirror of $W$ contains $x_0$ then it also contains the entire space~$\pi$), and let $Wx_0 = \cup_{w \in W} w(x_0)$ be the $W$-orbit of $x_0$. We define the space 
\[\mathcal{C}_{Wx_0}(U) = \mathcal{C}_{Wx_0}(D_{\Gamma_0}, U) = \bigoplus\limits_{x \in Wx_0} \mathcal{C}_{x}(D_{\Gamma_0}, U),\]
where $ \mathcal{C}_{x}(D_{\Gamma_0}, U)$ is
the space of germs of analytic functions defined on~$D_{\Gamma_0}$ near the point $x \in Wx_0$ 
with values in $U$. Note $\mathcal{C}_{x}(D_{\Gamma_0}, U) \cong$ $U \otimes \mathcal{C}_{x}(D_{\Gamma_0}, \C)$.
We have the following lemma (cf.\ \cite{Feigin} in the case when~$U$ is the trivial representation).

\begin{lemma} \label{invar} 
    Assume that the stratum $D_{\Gamma_{0}}$ defines an invariant parabolic ideal for the rational Cherednik algebra $\cH_c$.  Let $x_0 \in \pi$ be generic. Then the space $\mathcal{C}_{W x_0}(U)$ is an $\cH_c$-module.
\end{lemma}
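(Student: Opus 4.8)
The plan is to reduce the statement to the scalar case $U=\C$ treated in \cite{Feigin} and then to re-derive that case by realising germs on $D_{\Gamma_0}$ as a quotient of germs of \emph{ambient} analytic functions. Since $\cH_c$ is declared to act on $U\otimes(-)$ through $\id\otimes h$ and $\mathcal{C}_x(D_{\Gamma_0},U)\cong U\otimes\mathcal{C}_x(D_{\Gamma_0},\C)$, we have $\mathcal{C}_{Wx_0}(U)\cong U\otimes\mathcal{C}_{Wx_0}(\C)$ as vector spaces, with $\cH_c$ acting as $\id_U\otimes(-)$; hence it suffices to equip $\mathcal{C}_{Wx_0}(\C)$ with the structure of an $\cH_c$-module, after which the vector part $U$ is simply carried along.

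First I would record the local geometry near the orbit. Because $x_0$ is generic, its $W$-stabiliser is exactly $W_0$, so the only reflection hyperplanes through $x_0$ are the $H_\alpha$ with $\alpha\in R_0:=R\cap\pi^\perp$, the roots vanishing on $\pi$; moreover a short normaliser argument shows that the only translate $w(\pi)$ passing through $x_0$ is $\pi$ itself. Thus near each orbit point $x=wx_0$ the stratum $D_{\Gamma_0}$ is the single smooth branch $w(\pi)$, and the restriction map from ambient analytic germs at $x$ onto analytic germs on $D_{\Gamma_0}$ at $x$ is surjective with kernel the germs vanishing on $w(\pi)$. Choosing a $W$-invariant neighbourhood of the orbit that is a disjoint union of small balls, each meeting only the mirrors through its centre, I would set $\mathcal{A}=\bigoplus_{x\in Wx_0}\mathcal{O}^{\mathrm{an}}_x$ and let $\mathcal{A}_0\subset\mathcal{A}$ be the germs vanishing on $D_{\Gamma_0}$, so that $\mathcal{C}_{Wx_0}(\C)=\mathcal{A}/\mathcal{A}_0$.

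Next I would check that $\mathcal{A}$ is an $\cH_c$-module under the defining formulas: multiplication by $p\in\C[x]$ acts summand-wise, $w\in W$ sends the germ at $x$ to the pullback germ at $wx$ (using that $W$ permutes $Wx_0$ and preserves $D_{\Gamma_0}$), and $\nabla_\xi$ acts by \eqref{Dunkl}. The only point needing care is that the singular terms preserve analyticity: for $\alpha\in R_0$ the difference $(1-s_\alpha)F$ vanishes on $H_\alpha$ and is therefore divisible by $(\alpha,x)$ in the analytic local ring, while for $\alpha\notin R_0$ the factor $1/(\alpha,x)$ is already regular on the ball around $x_0$ and $s_\alpha$ merely moves germs between different summands. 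Being the standard Dunkl--Cherednik operators on an honest space of analytic functions, they satisfy the defining relations of $\cH_c$, so $\mathcal{A}$ is a module.

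The crux is to show that $\mathcal{A}_0$ is an $\cH_c$-submodule, for this is exactly where the invariance hypothesis enters. Stability under multiplication and under $W$ is immediate from $W$-invariance of $D_{\Gamma_0}$; the real content is that $\nabla_\xi$ preserves vanishing on $D_{\Gamma_0}$. This is the analytic, localised form of Theorem~\ref{invariance}: restricting $\nabla_\xi F$ to $\pi$ for $F$ vanishing on $\pi$, the tangential part of $\partial_\xi$ vanishes automatically, and the transverse part $\partial_{\xi^\perp}F|_\pi$, with $\xi^\perp$ the projection of $\xi$ onto $\pi^\perp=\Span R_0$, must be cancelled by the divided-difference terms $\sum_{\alpha\in R_0}\tfrac{c_\alpha(\alpha,\xi)}{(\alpha,x)}(1-s_\alpha)F$. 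That this cancellation takes place is precisely the sum rule of Theorem~\ref{invariance} on each $V_i$, now read for analytic germs rather than polynomials; since the criterion is pointwise along the generic stratum, the computation is identical to the polynomial one. I expect this transfer from the polynomial ideal to the analytic ideal, together with the verification of the cancellation, to be the main obstacle. Granting it, $\mathcal{C}_{Wx_0}(\C)=\mathcal{A}/\mathcal{A}_0$ inherits the $\cH_c$-module structure, and tensoring with $U$ via $\id_U\otimes(-)$ yields the assertion for $\mathcal{C}_{Wx_0}(U)$.
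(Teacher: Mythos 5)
Your proposal is correct and follows essentially the same route as the paper: the paper's action --- extend a germ to a small $W$-invariant ambient neighbourhood, apply \eqref{Dunkl} with each $s_\alpha$ acting as $\id \otimes s_\alpha$, restrict back to $D_{\Gamma_0}$, with independence of the chosen extension deferred to \cite{Feigin} --- is exactly your quotient $\mathcal{A}/\mathcal{A}_0$ construction, since well-definedness of the extension-and-restriction recipe is the same statement as $\mathcal{A}_0$ being an $\cH_c$-submodule of $\mathcal{A}$. Your additional details (the reduction to $U=\C$, the normaliser argument showing $\pi$ is the only branch of $D_{\Gamma_0}$ through $x_0$, and the identification of the crucial cancellation with the sum rule of Theorem~\ref{invariance}) simply make explicit what the paper leaves implicit.
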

\begin{proof}
     The elements of the Coxeter group $w \in W$ move the germs at one point to another one, namely, $w\colon  \mathcal{C}_{x}(D_{\Gamma_0}, U) \to \mathcal{C}_{w(x)}(D_{\Gamma_0}, U)$ for any $x \in W x_0$, with the action given by $(wF)(y) = F(w^{-1}y)$ for $y \in D_{\Gamma_0}$ near $w(x)$ and $F \in \mathcal{C}_{x}(D_{\Gamma_0}, U)$. The polynomial part of the rational 
    Cherednik algebra $\cH_c$ acts by multiplication. 
    The Dunkl operators act  on any analytic extension of the given direct sum of germs to a small $W$-invariant open set in the ambient space $V$ by formula~\eqref{Dunkl} with each reflection $s_\alpha$ acting by $\mathrm{id} \otimes s_\alpha$, and then we restrict the result to~$D_{\Gamma_0}$. Similarly to~\cite{Feigin}, the result does not depend on the choice of the extension.
\end{proof}

Let $\mathcal{C}_{Wx_0}^W(U)$ be the subset of those elements of $\mathcal{C}_{Wx_0}(U)$ that are fixed by the diagonal action of $W$ determined by $(wF)(y) = F(w^{-1}y) \cdot w^{-1}$ (similarly to the formula~\eqref{diag action}) for $F \in \mathcal{C}_{x}(D_{\Gamma_0}, U)$.
Then any element of $\mathcal{C}_{Wx_0}^W(U)$ is uniquely determined by the germ near the point $x_0$. In other words, $\mathcal{C}_{Wx_0}^W(U) \cong \mathcal{C}_{x_0}^{W_0}(D_{\Gamma_0},U) \cong  U^{W_0} \otimes \mathcal{C}_{x_0}(D_{\Gamma_0}, \C)$ as vector spaces, where $U^{W_0}$ is the subspace of vectors in $U$ fixed under the action of $W_0 \subseteq W$. Since $\mathcal{C}_{Wx_0}^W(U)$ is an $\cH_c^W$-module as a consequence of Lemma~\ref{invar}, we can also treat $\mathcal{C}_{x_0}^{W_0}(D_{\Gamma_0},U)$ as an $\cH_c^W$-module. We denote the action of an element $a \in \cH_c^W$ on $\mathcal{C}_{x_0}^{W_0}(D_{\Gamma_0},U)$ by $\widetilde{\Res}_{\pi} a$.

\begin{theorem}\label{maintheorem}
Assume that the stratum $D_{\Gamma_{0}}$ defines an invariant parabolic ideal for the rational Cherednik algebra $\cH_c$. Then the operator~$\sum_{i = 1}^N \nabla_i^2$ restricted to $\mathcal{C}_{x_0}^{W_0}(D_{\Gamma_0}, U)$
has the generalised spin Calogero--Moser form
\begin{equation}
\label{H2}
\begin{aligned}
    H_2 &= \widetilde{\Res}_{\pi}\left( \sum\limits_{i = 1}^N \nabla_i^2 \right) \\ &=
    \Delta_y - \sum\limits_{\substack{\alpha \in R_+ \\ \widehat{\alpha} \neq 0}}\frac{2 c_{\alpha}}{(\widehat{\alpha} , y)} \partial_{\widehat{\alpha}} + 
    \sum\limits_{\substack{\alpha \in R_{+} \\ \widehat{\alpha} \neq 0}}\frac{c_{\alpha} (\alpha, \alpha)}{(\widehat{\alpha}, y)^2}(1 - P_{\alpha}),
\end{aligned}
\end{equation}
where 
$P_{\alpha}$ denotes the action of the reflection $s_{\alpha} \in W$ on the vector space~$U$. Moreover, for any $p(\nabla) \in \mathbb{C}[\nabla_1, \ldots, \nabla_N]^W$, the operators
$
    \widetilde{\rm Res}_{\pi} p(\nabla)
$
pairwise commute for different choices of invariant polynomials, and in particular, all of them commute with the operator~\eqref{H2}.
\end{theorem}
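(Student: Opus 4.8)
The plan is to prove the two assertions separately: first extract the explicit form \eqref{H2} from the identity \eqref{q1}, and then deduce the commutativity from the fact that $\widetilde{\Res}_{\pi}$ is nothing but a module action. For the explicit form, I would compute the germ at $x_0$ of $\left(\sum_{i=1}^N \nabla_i^2\right)F$ for $F \in \mathcal{C}_{Wx_0}^W(U)$, using that this germ determines the element of $\mathcal{C}_{x_0}^{W_0}(D_{\Gamma_0},U)$ completely. The first two terms on the right-hand side of \eqref{q1} contain no reflections, so they act as $\id_U$ tensored with the corresponding scalar operators; their restriction is therefore identical to the scalar computation of \cite{Feigin}, tensored with $\id_U$, yielding $\Delta_y - \sum_{\widehat{\alpha}\neq 0}\frac{2c_\alpha}{(\widehat{\alpha},y)}\partial_{\widehat{\alpha}}$. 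In particular, the apparent singularities produced by the roots with $\widehat{\alpha}=0$ in the first-order term cancel exactly as in the scalar case, by virtue of the invariance condition of Theorem~\ref{invariance}, which does not involve $U$.

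The new content is the reflection term $\sum_{\alpha \in R_+}\frac{c_\alpha (\alpha,\alpha)}{(\alpha,x)^2}(1-s_\alpha)$, and the key step is to record how $s_\alpha$ acts on the diagonal-invariant germs. First I would fix the local geometry: for generic $x_0 \in \pi$ the stabiliser of $x_0$ in $W$ is exactly $W_0$, elements of $W_0$ fix $\pi$ pointwise, and hence $D_{\Gamma_0}$ coincides with $\pi$ in a neighbourhood of $x_0$; this underlies the identification $\mathcal{C}_{x_0}^{W_0}(D_{\Gamma_0},U)\cong U^{W_0}\otimes \mathcal{C}_{x_0}(D_{\Gamma_0},\C)$, so that the germ of $F$ at $x_0$ is a $U^{W_0}$-valued analytic function of $y$. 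Writing the diagonal invariance as $F_{s_\alpha x_0}(s_\alpha y)=F_{x_0}(y)\cdot s_\alpha$ and combining it with the raw action $(s_\alpha F)_{x_0}(y)=F_{s_\alpha x_0}(s_\alpha y)$ from Lemma~\ref{invar}, I obtain the uniform identity $(s_\alpha F)_{x_0}(y)=F_{x_0}(y)\cdot s_\alpha = P_\alpha\bigl(F_{x_0}(y)\bigr)$ for every root $\alpha$. For $\widehat{\alpha}=0$ the reflection $s_\alpha$ lies in $W_0$ and fixes $\pi$ pointwise, so $F_{x_0}(y)\cdot s_\alpha = F_{x_0}(y)$ and these (singular) terms vanish identically on the diagonal invariants; for $\widehat{\alpha}\neq 0$ one has $(\alpha,x)=(\widehat{\alpha},y)\neq 0$ on $\pi$ near $x_0$ and $(1-s_\alpha)$ restricts to $(1-P_\alpha)$, producing precisely the third sum in \eqref{H2}.

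For the commutativity I would argue purely algebraically. For $W$-invariant $p$, the operator $p(\nabla)$ commutes with every $w\in W$ because $w\nabla_\xi w^{-1}=\nabla_{w\xi}$, so $p(\nabla)\in \cH_c^W$; by Lemma~\ref{invar} and the discussion preceding the theorem, $\cH_c^W$ acts on $\mathcal{C}_{x_0}^{W_0}(D_{\Gamma_0},U)$, and $\widetilde{\Res}_{\pi}\,a$ is by definition this action of $a\in\cH_c^W$. Since any module action is a ring homomorphism, $\widetilde{\Res}_{\pi}$ is multiplicative, i.e.\ $\widetilde{\Res}_{\pi}(ab)=(\widetilde{\Res}_{\pi}a)(\widetilde{\Res}_{\pi}b)$. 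Because the Dunkl operators commute, the elements $\{p(\nabla): p\in\C[x]^W\}$ form a commutative subalgebra of $\cH_c^W$, so $p(\nabla)q(\nabla)=q(\nabla)p(\nabla)$; applying $\widetilde{\Res}_{\pi}$ gives $[\widetilde{\Res}_{\pi}p(\nabla),\widetilde{\Res}_{\pi}q(\nabla)]=0$. Commutativity with the operator \eqref{H2} is then the special case $p(x)=\sum_{i=1}^N x_i^2$.

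The main obstacle is the reflection-term analysis of the second paragraph. One must justify the local geometry near the generic point (the stabiliser being $W_0$ and $D_{\Gamma_0}$ reducing to $\pi$), treat the genuinely singular $\widehat{\alpha}=0$ contributions correctly by showing that their numerators vanish on the diagonal invariants rather than merely invoking regularity, and make precise the passage between germs on the different sheets $\pi$ and $s_\alpha(\pi)$ through diagonal invariance. By contrast, once \eqref{H2} is established the commutativity statement is essentially formal, being inherited from the commutativity of Dunkl operators together with the $\cH_c^W$-module structure furnished by Lemma~\ref{invar}.
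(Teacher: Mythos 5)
Your proposal is correct and takes essentially the same approach as the paper: the paper's own proof is a one-line appeal to identity~\eqref{q1} combined with the scalar argument of \cite[Theorem 5]{Feigin}, and your write-up simply supplies the details that reference leaves implicit (the diagonal invariance turning the geometric action of $s_\alpha$ into $P_\alpha$, the vanishing of the singular $\widehat{\alpha}=0$ contributions on diagonal invariants, and the homomorphism property of $\widetilde{\Res}_\pi$ giving commutativity from the commutativity of Dunkl operators).
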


\begin{proof}
The result follows immediately from equality~\eqref{q1} by similar arguments as in the proof of \cite[Theorem 5]{Feigin}.    
\end{proof}
Let us denote the higher integrals by
\begin{equation}
\label{higherham}
    H_p = \widetilde{\rm Res}_{\pi} p(\nabla) \qquad
    (p \in \mathbb{C}[x]^{W}).
\end{equation}
Now we would like to rewrite the operator \eqref{H2} in the potential gauge. 
Let $\widehat{R}_+ = \{ \widehat{\alpha} \mid \alpha \in R_+\}$.

\begin{theorem} \label{potential gauge}
Let $\pi \subset V$ be an intersection of mirrors
\begin{equation*}
    \pi = \{x \in V \mid (\beta, x) = 0, \ \forall \beta \in \Gamma^v_0 \}
\end{equation*}
corresponding to the Coxeter subgraph $\Gamma_0 \subset \Gamma$. 
Define the generalised coupling constants
\begin{equation*}
    \widehat{c}_{\widehat{\alpha}} = \sum\limits_{\substack{\gamma \in R_{+} \\ \widehat{\gamma} = \widehat{\alpha}}} c_{\gamma}.
\end{equation*}
Then
\begin{equation}
    \label{Ham}
    \begin{aligned}
        f^{-1}&\bigg( \Delta_y - \sum\limits_{\substack{\alpha \in R_+ \\ \widehat{\alpha} \neq 0}}\frac{2 c_{\alpha}}{(\widehat{\alpha} , y)} \partial_{\widehat{\alpha}} + 
        \sum\limits_{\substack{\alpha \in R_{+} \\ \widehat{\alpha} \neq 0}}\frac{c_{\alpha} (\alpha, \alpha)}{(\widehat{\alpha}, y)^2}(1 - P_{\alpha}) \bigg) f 
        \\
        &= \Delta_y - \sum\limits_{\widehat{\alpha} \in \widehat{R}_+ \setminus \{0\}} \frac{(\widehat{\alpha},\widehat{\alpha})}{(\widehat{\alpha}, y)^2} \bigg(\widehat{c}_{\widehat{\alpha}}\sum\limits_{\substack{\widehat\beta \in \widehat{R}_+ \setminus \{0\} \\\widehat{\beta} \sim \widehat{\alpha}}}\widehat{c}_{\widehat\beta} + \widehat{c}_{\widehat{\alpha}} \widehat{P}_{\widehat{\alpha}} \bigg) \eqqcolon L,
    \end{aligned}
\end{equation}
where $\sim$ denotes proportionality of vectors, $y \in \pi$, and
\begin{align*}
   & f = \prod\limits_{\widehat\alpha \in \widehat{R}_{+} \setminus \{0\} } (\widehat{\alpha}, y )^{\widehat{c}_{\widehat\alpha}}, \nonumber
    \\ 
    \widehat{c}_{\widehat{\alpha}} \widehat{P}_{\widehat{\alpha}} = \widehat{c}_{\widehat{\alpha}} +
    &\frac{1}{(\widehat{\alpha}, \widehat{\alpha})} \sum_{\substack{ \gamma \in R_+ \\\widehat{\gamma} = \widehat{\alpha}}} 
    c_{\gamma}(\gamma, \gamma) (P_{\gamma} - 1). 
\end{align*}
\end{theorem}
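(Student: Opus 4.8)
The plan is to conjugate $H_2$ by $f$ directly and to match the outcome against $L$ order by order. I work in the orthonormal coordinates $y=(y_1,\dots,y_n)$ on $\pi$ with orthonormal basis $\{\epsilon_j\}$, so that $\partial_{\widehat\alpha}=\sum_j(\widehat\alpha,\epsilon_j)\partial_{y_j}$, $\sum_j(\widehat\alpha,\epsilon_j)(\widehat\beta,\epsilon_j)=(\widehat\alpha,\widehat\beta)$ for $\widehat\alpha,\widehat\beta\in\pi$, and $(\widehat\gamma,y)=(\gamma,y)$ for $\gamma\in R_+$. Since $\log f=\sum_{\widehat\alpha\in\widehat{R}_+\setminus\{0\}}\widehat{c}_{\widehat\alpha}\log(\widehat\alpha,y)$, the standard conjugation identities read
\[
f^{-1}\Delta_y f=\Delta_y+2\!\!\sum_{\widehat\alpha\in\widehat{R}_+\setminus\{0\}}\!\!\frac{\widehat{c}_{\widehat\alpha}}{(\widehat\alpha,y)}\,\partial_{\widehat\alpha}+\frac{\Delta_y f}{f},\qquad f^{-1}\partial_{\widehat\alpha}f=\partial_{\widehat\alpha}+\partial_{\widehat\alpha}\log f,
\]
with $\tfrac{\Delta_y f}{f}=\Delta_y\log f+\sum_j(\partial_{y_j}\log f)^2$. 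The second-order part is thus $\Delta_y$, as required, and the multiplication operator $\sum_\alpha c_\alpha(\alpha,\alpha)(\widehat\alpha,y)^{-2}(1-P_\alpha)$ of $H_2$ is unaffected by the conjugation.

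For the first-order terms I collect the contribution $2\sum_{\widehat\alpha}\widehat{c}_{\widehat\alpha}(\widehat\alpha,y)^{-1}\partial_{\widehat\alpha}$ produced by $f^{-1}\Delta_y f$ and the contribution $-2\sum_{\alpha\in R_+,\,\widehat\alpha\neq0}c_\alpha(\widehat\alpha,y)^{-1}\partial_{\widehat\alpha}$ produced by gauging the first-order part of $H_2$. Grouping the roots $\alpha$ according to their projection and using the definition $\widehat{c}_{\widehat\alpha}=\sum_{\gamma:\,\widehat\gamma=\widehat\alpha}c_\gamma$ turns the second sum into the first, so the two cancel and no first-order term survives; this is precisely the passage to the potential gauge.

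The zeroth-order part is the crux. Differentiating $\log f$ gives $\Delta_y\log f=-\sum_{\widehat\alpha}\widehat{c}_{\widehat\alpha}(\widehat\alpha,\widehat\alpha)(\widehat\alpha,y)^{-2}$ and $\sum_j(\partial_{y_j}\log f)^2=\sum_{\widehat\alpha,\widehat\beta}\widehat{c}_{\widehat\alpha}\widehat{c}_{\widehat\beta}(\widehat\alpha,\widehat\beta)[(\widehat\alpha,y)(\widehat\beta,y)]^{-1}$, while the gauged first-order part contributes, after the same regrouping, $-2\sum_{\widehat\alpha,\widehat\beta}\widehat{c}_{\widehat\alpha}\widehat{c}_{\widehat\beta}(\widehat\alpha,\widehat\beta)[(\widehat\alpha,y)(\widehat\beta,y)]^{-1}$. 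Hence the off-diagonal behaviour of the potential is governed by the single double sum $\sum_{\widehat\alpha,\widehat\beta}\widehat{c}_{\widehat\alpha}\widehat{c}_{\widehat\beta}(\widehat\alpha,\widehat\beta)[(\widehat\alpha,y)(\widehat\beta,y)]^{-1}$ (entering with coefficient $-1$), and the whole statement hinges on showing that only its proportional part survives, i.e.\ on the purely scalar identity
\[
\sum_{\substack{\widehat\alpha,\widehat\beta\in\widehat{R}_+\setminus\{0\}\\ \widehat\alpha\not\sim\widehat\beta}}\widehat{c}_{\widehat\alpha}\,\widehat{c}_{\widehat\beta}\,\frac{(\widehat\alpha,\widehat\beta)}{(\widehat\alpha,y)(\widehat\beta,y)}=0,\qquad y\in\pi .
\]
This vanishing of the non-proportional cross terms is the main obstacle. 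It carries no $P_\alpha$ and is exactly the identity underlying the scalar potential-gauge computation, so I would either invoke it from \cite{Feigin} or prove it directly: ungrouping via $\widehat{c}_{\widehat\alpha}=\sum_{\gamma:\,\widehat\gamma=\widehat\alpha}c_\gamma$ reduces it to the classical root-system identity $\sum_{\gamma\not\sim\delta\in R_+}c_\gamma c_\delta(\gamma,\delta)[(\gamma,x)(\delta,x)]^{-1}=0$, which I would then restrict to $\pi$ by writing $x=y+z$ with $z\in\pi^\perp$ and letting $z\to0$. The delicate points in this limit---replacing $(\gamma,\delta)$ by $(\widehat\gamma,\widehat\delta)$ and absorbing the divergent contributions of the roots orthogonal to $\pi$---are controlled precisely by the invariance relation of Theorem~\ref{invariance}, and I expect this to be the most technical step.

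Granting the identity, the rest is assembly. Its proportional part equals $\sum_{\widehat\alpha}(\widehat\alpha,\widehat\alpha)(\widehat\alpha,y)^{-2}\widehat{c}_{\widehat\alpha}\sum_{\widehat\beta\sim\widehat\alpha}\widehat{c}_{\widehat\beta}$ (using $(\widehat\alpha,\widehat\beta)[(\widehat\alpha,y)(\widehat\beta,y)]^{-1}=(\widehat\alpha,\widehat\alpha)(\widehat\alpha,y)^{-2}$ when $\widehat\beta\sim\widehat\alpha$), which with its coefficient $-1$ reproduces the first summand of $L$. Finally, grouping the matrix term $-\sum_\alpha c_\alpha(\alpha,\alpha)(\widehat\alpha,y)^{-2}P_\alpha$ and the scalar $\sum_\alpha c_\alpha(\alpha,\alpha)(\widehat\alpha,y)^{-2}$ coming from $(1-P_\alpha)$ by projection, and combining them with $\Delta_y\log f$, yields $-\sum_{\widehat\alpha}(\widehat\alpha,\widehat\alpha)(\widehat\alpha,y)^{-2}\widehat{c}_{\widehat\alpha}\widehat{P}_{\widehat\alpha}$ directly from the definition $\widehat{c}_{\widehat\alpha}\widehat{P}_{\widehat\alpha}=\widehat{c}_{\widehat\alpha}+(\widehat\alpha,\widehat\alpha)^{-1}\sum_{\gamma:\,\widehat\gamma=\widehat\alpha}c_\gamma(\gamma,\gamma)(P_\gamma-1)$. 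Collecting the two surviving pieces gives exactly $L$, completing the proof.
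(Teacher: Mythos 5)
Your conjugation bookkeeping is correct and matches the paper's proof, which performs exactly this computation by reference to \cite[Proposition~2]{Feigin}: the first-order terms cancel after grouping roots by their projections, the double sum enters with net coefficient $-1$, its proportional part gives the first summand of the potential in $L$, and the remaining terms regroup into $-\sum_{\widehat\alpha}(\widehat\alpha,\widehat\alpha)(\widehat\alpha,y)^{-2}\widehat{c}_{\widehat\alpha}\widehat{P}_{\widehat\alpha}$ by the definition of $\widehat{c}_{\widehat\alpha}\widehat{P}_{\widehat\alpha}$. So your first option --- importing the scalar cross-term identity from \cite{Feigin} --- lands you precisely on the paper's argument, and with that option the proof is complete.

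Your fallback ``direct'' proof of the key identity, however, contains a genuine gap and a misconception, so do not rely on it as stated. First, the identity
\begin{equation*}
\sum_{\substack{\widehat\alpha,\widehat\beta\in\widehat{R}_+\setminus\{0\}\\ \widehat\alpha\not\sim\widehat\beta}}\widehat{c}_{\widehat\alpha}\,\widehat{c}_{\widehat\beta}\,\frac{(\widehat\alpha,\widehat\beta)}{(\widehat\alpha,y)(\widehat\beta,y)}=0, \qquad y\in\pi,
\end{equation*}
holds for \emph{any} $W$-invariant multiplicity $c$, with no condition on $\Gamma_0$ whatsoever --- consistently with the fact that Theorem~\ref{potential gauge} itself assumes no invariance of the parabolic ideal --- so your expectation that the delicate steps are ``controlled precisely by the invariance relation of Theorem~\ref{invariance}'' points at the wrong tool. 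Second, the restriction $x=y+z$, $z\to 0$, is harder than your sketch suggests: the pairs excluded on $V$ (those with $\gamma\sim\delta$) do not coincide with the pairs that must be excluded on $\pi$ (those with $\widehat\gamma\sim\widehat\beta$, a strictly larger set in general, e.g.\ when both $\widehat\alpha$ and $2\widehat\alpha$ occur as projections); the numerators $(\gamma,\delta)$ and $(\widehat\gamma,\widehat\delta)$ differ by $(\gamma',\delta')$ with $\gamma'=\gamma-\widehat\gamma$; and the pairs with one root in $\Span\Gamma_0^v$ produce $|z|^{-1}$ divergences whose finite parts depend a priori on the direction of $z$. The clean self-contained argument is the rational analogue of the paper's own proof of Theorem~\ref{trigpot} (identity~\eqref{trigeq}): the double sum is a rational function on $\pi$, homogeneous of degree $-2$, whose only possible singularities are simple poles along the hyperplanes $(\widehat\alpha,y)=0$; the residue there vanishes because $\{\beta\in R:\widehat\beta\nsim\widehat\alpha\}$ decomposes into orbits of $W_\alpha=\langle W_0,s_\alpha\rangle$, on each of which $(\widehat\beta,y)$ is constant (such $y$ are fixed by $W_\alpha$) while $\sum_{\beta}c_\beta\widehat\beta$ is the projection of a $W_\alpha$-invariant vector and hence orthogonal to $\widehat\alpha$; a pole-free rational function homogeneous of degree $-2$ is zero. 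Replace your limit sketch with this argument, or simply cite \cite{Feigin} as the paper does.
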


\begin{proof}
Similarly to~\cite[Proposition 2]{Feigin}, we compute that the left-hand side of~\eqref{Ham} equals
\begin{align*}
    \Delta_y &- \sum\limits_{\substack{\alpha \in R_+ \\ \widehat{\alpha} \neq 0}} \frac{c_{\alpha}(\widehat{\alpha},\widehat{\alpha})}{(\widehat{\alpha},y)^2} - \sum\limits_{\substack{\alpha \in R_+ \\ \widehat{\alpha} \neq 0}} \sum\limits_{\substack{\beta \in R_+ \\ 0 \neq \widehat{\beta} \sim \widehat{\alpha}}} \frac{c_{\alpha} c_{\beta}(\widehat{\alpha},\widehat{\alpha})}{(\widehat{\alpha},y)^2}  
    + \sum\limits_{\substack{\alpha \in R_+ \\ \widehat{\alpha} \neq 0}}  \frac{c_{\alpha} (\alpha, \alpha)}{(\widehat{\alpha},y)^2} (1 - P_{\alpha}).
\end{align*} 
Equality~\eqref{Ham} then follows by using
the definition of
$\widehat{P}_{\widehat{\alpha}}$ and $\widehat{c}_{\widehat{\alpha}}$.
\end{proof}

\begin{remark}
    Let us note that the elements
    \begin{equation*}
        S_{\widehat{\alpha}} = \sum_{\substack{ \gamma \in R_+ \\\widehat{\gamma} = \widehat{\alpha}}} 
        c_{\gamma}(\gamma, \gamma) s_{\gamma} \in \C[W]
    \end{equation*}
    lie in the centraliser of $\mathbb{C}[W_0]$ inside the group algebra of the Coxeter group $\mathbb{C}[W]$. Indeed, let $\beta \in \Gamma_0^v$, then
    \begin{equation*}
    \begin{aligned}
        [s_{\beta}, S_{\widehat{\alpha}}] &= 
        \sum\limits_{\substack{\gamma \in R_+ \\ \widehat{\gamma} = \widehat{\alpha}}}
        c_{\gamma} (\gamma, \gamma) [s_{\beta}, s_{\gamma}] \\
        &= 
        \sum\limits_{\substack{\gamma \in R_+ \\ \widehat{\gamma} = \widehat{\alpha}}}
        c_{\gamma} (\gamma, \gamma) (s_{s_{\beta}(\gamma)} - s_{\gamma}) s_{\beta} = 0,
    \end{aligned}
    \end{equation*}
    where the last equality is obtained by changing the index of summation in the first term to
    $\widetilde{\gamma} = s_{\beta} (\gamma)$, which is possible due to the fact that 
    the multiplicity function $c_{\gamma}$ is $W$-invariant and $\beta \in \Gamma_0^v$.
\end{remark}

In the particular case where $U = V$ is the reflection representation of the Coxeter group, $\widehat{P}_{\widehat{\alpha}}$ is 
a reflection in the space $V$ as we prove in the next proposition.
\begin{prop} \label{projref}
Let $U = V$ be the reflection representation of the Coxeter group $W$. 
Suppose that $\widehat{c}_{\widehat{\alpha}} \neq 0$ for  $\widehat{\alpha} \in \widehat{R}_+ \setminus \{0\}$. Then the operator
\begin{equation}
    \widehat{P}_{\widehat{\alpha}} = 1 + \frac{1}{\widehat{c}_{\widehat{\alpha}} (\widehat{\alpha}, \widehat{\alpha})} \sum_{\substack{ \gamma \in R_+ \\\widehat{\gamma} = \widehat{\alpha}}} 
    c_{\gamma}(\gamma, \gamma) (P_{\gamma} - 1)  \label{P hat}
\end{equation}
preserves
the space $\pi$, and when acting on $\pi$, it is equal to the orthogonal reflection 
about the hyperplane orthogonal to $\widehat{\alpha}$.
\end{prop}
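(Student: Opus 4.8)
The plan is to compute $\widehat{P}_{\widehat\alpha}$ directly on an arbitrary vector $x \in \pi$ and to check that the outcome is $s_{\widehat\alpha}(x) = x - \tfrac{2(\widehat\alpha, x)}{(\widehat\alpha,\widehat\alpha)}\widehat\alpha$, which lies in $\pi$ because $\widehat\alpha \in \pi$; both assertions of the proposition then follow at once. Since $U = V$, the operator $P_\gamma$ is the orthogonal reflection $s_\gamma$, so $(\gamma,\gamma)(P_\gamma - 1)x = -2(\gamma, x)\gamma$, and~\eqref{P hat} gives
\begin{equation*}
    \widehat{P}_{\widehat\alpha}\, x = x - \frac{2}{\widehat{c}_{\widehat\alpha}(\widehat\alpha,\widehat\alpha)} \sum_{\substack{\gamma \in R_+ \\ \widehat\gamma = \widehat\alpha}} c_\gamma (\gamma, x)\, \gamma .
\end{equation*}
For $x \in \pi$ we have $(\gamma, x) = (\widehat\gamma, x) = (\widehat\alpha, x)$, because $\gamma - \widehat\gamma \in \pi^{\perp}$ and $x \perp \pi^{\perp}$. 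Hence the statement reduces to the single claim that
\begin{equation*}
    T \coloneqq \sum_{\substack{\gamma \in R_+ \\ \widehat\gamma = \widehat\alpha}} c_\gamma\, \gamma = \widehat{c}_{\widehat\alpha}\, \widehat\alpha ,
\end{equation*}
since substituting this identity yields exactly $\widehat{P}_{\widehat\alpha}\, x = x - \tfrac{2(\widehat\alpha,x)}{(\widehat\alpha,\widehat\alpha)}\widehat\alpha$.

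To establish this identity I would decompose $T$ along the orthogonal splitting $V = \pi \oplus \pi^{\perp}$, where $\pi^{\perp} = \Span \Gamma_0^v$. Its $\pi$-component is $\sum_{\gamma} c_\gamma \widehat\gamma = \big(\sum_{\gamma} c_\gamma\big)\widehat\alpha = \widehat{c}_{\widehat\alpha}\,\widehat\alpha$ by the very definition of $\widehat{c}_{\widehat\alpha}$, so it remains only to show that the $\pi^{\perp}$-component of $T$ vanishes. The mechanism for this is a $W_0$-symmetry of the summation. First, the index set is in fact $\{\gamma \in R : \widehat\gamma = \widehat\alpha\}$: expanding a root in the simple-root basis $\Delta = \Gamma_0^v \sqcup (\Delta \setminus \Gamma_0^v)$, its projection onto $\pi$ depends only on the coefficients along $\Delta \setminus \Gamma_0^v$ (the projected roots $\widehat\beta$, $\beta \in \Delta \setminus \Gamma_0^v$, being linearly independent), and $\widehat\alpha \neq 0$ forces at least one such coefficient to be nonzero; since a root cannot have mixed signs, any root with projection $\widehat\alpha$ is positive. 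Secondly, this set is stable under $W_0$, as $W_0$ fixes $\pi$ pointwise and preserves $\pi^{\perp}$, whence $\widehat{w_0 \gamma} = \widehat\gamma$ for $w_0 \in W_0$.

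Combining $W_0$-stability of the index set with the $W$-invariance of the multiplicity $c$, the vector $T$ is $W_0$-invariant; as orthogonal projection onto $\pi^{\perp}$ is $W_0$-equivariant, the $\pi^{\perp}$-component of $T$ lies in $(\pi^{\perp})^{W_0}$. But any $v \in \pi^{\perp}$ fixed by all $s_\beta$, $\beta \in \Gamma_0^v$, satisfies $(\beta, v) = 0$ for all such $\beta$, i.e.\ $v \in (\Span \Gamma_0^v)^{\perp} = \pi$, so $v \in \pi \cap \pi^{\perp} = \{0\}$; thus $(\pi^{\perp})^{W_0} = \{0\}$ and the $\pi^{\perp}$-component of $T$ is zero. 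This proves $T = \widehat{c}_{\widehat\alpha}\widehat\alpha$ and hence the proposition. The main obstacle is precisely the vanishing of the $\pi^{\perp}$-part of $T$; the cleanest route is the symmetry argument above, whose one delicate point is confirming that every root projecting to $\widehat\alpha$ is positive, so that the sum over $R_+$ is genuinely a sum over a full $W_0$-orbit set and is therefore $W_0$-invariant, rather than only invariant up to sign changes coming from roots leaving $R_+$.
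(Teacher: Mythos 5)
Your proof is correct and follows essentially the same route as the paper's: both reduce the claim to the identity $\sum_{\gamma \in R_+,\, \widehat\gamma = \widehat\alpha} c_\gamma(\gamma,y)\gamma = \widehat{c}_{\widehat\alpha}(\widehat\alpha,y)\widehat\alpha$ and establish it by observing that the sum is $W_0$-invariant, hence lies in $\pi = V^{W_0}$ and therefore equals its own orthogonal projection onto $\pi$. The only difference is that you spell out a point the paper leaves implicit --- that every root projecting to $\widehat\alpha \neq 0$ is automatically positive, so the sum over $R_+$ runs over a genuinely $W_0$-stable set --- which is a useful clarification but not a change of method.
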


\begin{proof}
Let us define 
\begin{equation*}
    \widetilde{P}_{\widehat{\alpha}} = \sum_{\substack{ \gamma \in R_+ \\ \widehat{\gamma} = \widehat{\alpha}}} 
     c_{\gamma}(\gamma, \gamma) P_{\gamma}.
\end{equation*}
For any $y \in \pi$, we have
\begin{equation}
\label{refl}
    \widetilde{P}_{\widehat{\alpha}}(y) 
    = \sum\limits_{\substack{\gamma \in R_{+} \\ \widehat{\gamma} = \widehat{\alpha}}}c_{\gamma}(\gamma, \gamma) y - 2\sum\limits_{\substack{\gamma \in R_{+} \\\widehat{\gamma} = \widehat{\alpha}}} c_{\gamma} (\gamma,y) \gamma. 
\end{equation}
To prove that the vector \eqref{refl} belongs to the subspace $\pi$, we need to simplify the second sum in \eqref{refl}. We claim that 
\begin{equation}
\label{elem}
     \sum\limits_{\substack{\gamma \in R_{+} \\ \widehat{\gamma} = \widehat{\alpha}}} c_{\gamma} (\gamma,y) \gamma =  \widehat{c}_{\widehat{\alpha}} (\widehat{\alpha},y) \widehat{\alpha}.
\end{equation}
Indeed, 
the set $S_{y} = \{c_{\gamma} (\gamma,y)\gamma \mid \gamma \in R_+, \, \widehat{\gamma} = \widehat{\alpha}\}$  
is $W_0$-invariant, where we use that $W_0$ is generated by reflections about the hyperplanes orthogonal to simple roots~$\Gamma_0^v$. 
Thus, the sum in the left-hand side of \eqref{elem} is fixed by $W_0$, hence belongs to the subspace $\pi$, and relation \eqref{elem} follows.

Now,
by using formula~\eqref{P hat} and equalities~\eqref{refl} and \eqref{elem},
we get that the action of the operator $\widehat{P}_{\widehat{\alpha}}$ on $\pi$ is
\begin{equation*}
    \widehat{P}_{\widehat{\alpha}} (y) = y +
    \frac{1}{\widehat{c}_{\widehat{\alpha}} (\widehat{\alpha}, \widehat{\alpha})} \bigg(  \widetilde{P}_{\widehat{\alpha}}(y)-\sum_{\substack{ \gamma \in R_+ \\\widehat{\gamma} = \widehat{\alpha}}} 
    c_{\gamma}(\gamma, \gamma) y \bigg)
    = y - \frac{2 (\widehat{\alpha}, y)}{(\widehat{\alpha}, \widehat{\alpha})} \widehat{\alpha},
\end{equation*}
which is the formula for the reflection on $\pi$ about the hyperplane orthogonal to~$\widehat{\alpha}$.
\end{proof}
\begin{remark} One can extend considerations of this subsection as follows. Let $\chi\colon W \to \mathbb C$ be a homomorphism. 
Instead of considering the space of invariants $\widetilde{M}^W$ one can take the subspace $\widetilde{M}^{W, \chi} \subset \widetilde{M}$ consisting of elements $a \in \widetilde{M}$ such that
\begin{equation*}
    w(a) = \chi(w) a.
\end{equation*}
Equivalently, $\widetilde{M}^{W, \chi}$ can be defined as the quotient of $\widetilde{M}$ by the sum of images ${\rm Im}(w - \chi(w))$ over $w \in W$. Then, analogues of Theorems~\ref{maintheorem} and \ref{potential gauge} take place with a replacement of $P_{\alpha}$ by $\chi(s_{\alpha}) P_{\alpha}$.

\end{remark}
Particular cases of the operator \eqref{Ham} are related to the generalised matrix Calogero--Moser Hamiltonians found in~\cite{CGV}, as we explain in the next subsection. 

\subsection{Restricted operators from classical root systems}
In this subsection, we write down the restricted Hamiltonians~\eqref{Ham} explicitly in the case where $R$ is a classical root system. 

\subsubsection{$A$-series} \label{rat type A}
 
We first briefly review how the scalar version of the so-called deformed Calogero--Moser system can be obtained using parabolic submodules of the rational Cherednik algebra of type $GL_N$, that is, for the symmetric group~$S_N$ acting on $V = \C^N$ \cite{Feigin}. Note that Theorem~\ref{invariance} as well as Lemma~\ref{invar} remain true when the rank of the root system is less than the dimension of the ambient space $V$. 

Let $c = \frac{1}{k}$ for an integer $k \geq 2$. Recall that, in this case, the polynomial representation $\mathbb{C}[x]$ becomes reducible. Namely, let us define 
\begin{equation*}
    \pi_{m,k} = \{(x_1, \dots, x_N) \in V \mid x_{1}= \dots = x_{k},  \ldots ,  x_{(m-1)k + 1}= \dots = x_{mk}  \},
\end{equation*}
$m \in \Z_{\geq 1}$, $mk \leq N$, 
and 
\begin{equation}
\label{strata}
    D_{m, k} = \underset{w \in S_N}{\bigcup} w(\pi_{m, k}),
\end{equation}
and denote by $I_{m ,k}$ the ideal of the polynomials vanishing on $D_{m,k}$.
Then the ideal $I_{m, k}$ is a submodule inside the polynomial module of the rational Cherednik algebra. Hence, the quotient $M = \mathbb{C}[x]/I_{m, k}$ is also its module. This corresponds to taking 
\begin{equation*}
    \Gamma_0^v = \{ e_i - e_{i+1} \mid i = jk+1, \dots, (j+1)k -1, \ 0 \leq j \leq m-1 \}
\end{equation*}
in Section~\ref{invariant ideals}.

Symmetric polynomials of Dunkl 
operators preserve the space of invariants $M^{S_N}$. The following operators lead to the integrable Hamiltonian and its integrals for the so-called deformed Calogero--Moser system:
\begin{equation*}
    \Res \left(\sum\limits_{i = 1}^N \nabla_{i}^j\right) \quad (j \in \mathbb{Z}_{\geq 0}), 
\end{equation*}
where $\Res$ denotes the restriction to the space $M^{S_N}$ \cite{Feigin}.

In the spin case, we consider the module $\widetilde{M} = U \otimes (\mathbb{C}[x]/I_{m, k})$, where $U$ is a vector space on which the rational Cherednik algebra acts trivially, and we assume that $U$ is a right $S_N$-module.  
We define $\widetilde{M}^{S_N}$ as the space of invariants under the (diagonal) action of $W = S_N$ on $\widetilde M$ given by~\eqref{diag action}.
This is a subspace such that the action of the symmetric group on the polynomial part is equivalent to acting 
on the vector part. Symmetric polynomials of Dunkl operators preserve the space
$\widetilde{M}^{S_N}$. 
Let $\widetilde{\Res}_\pi$ be defined as in Section~\ref{spin case} with $\pi = \pi_{m,k}$,
and let us define the 
following commuting operators
\begin{equation*}
    H_j = \widetilde{\Res}_\pi \left(\sum\limits_{i = 1}^N \nabla_i^j\right).
\end{equation*}

For every group of coordinates $x_{jk + 1} = x_{jk + 2} = \dots = x_{jk + k}$ ($j = 0 , \ldots, m-1$), let us define new coordinates
\begin{align*}
    &z_{jk + 1} = \frac{x_{jk + 1} + \dots + x_{jk + k}}{\sqrt{k}}, \quad  z_{jk + 2}  = \frac{x_{jk + 1} - x_{jk + 2}}{\sqrt{k}}, 
    \\
     &z_{jk + 3} = \frac{x_{jk + 2} - x_{jk + 3}}{\sqrt{k}}, \ \ldots \ , \quad  z_{jk + k} = \frac{x_{jk + k -1} - x_{jk + k}}{\sqrt{k}},
\end{align*}
while the other $x$-coordinates remain unchanged. On $\pi_{m,k}$, most of the new coordinates vanish. 
In what follows, we will rename the non-zero $z$-coordinates to $y_1, \dots, y_m$, namely $y_{j+1} = z_{jk + 1}$ for $j = 0, \dots, m-1$, and the remaining $x$-coordinates will be relabelled to $y_{m+j} = x_{mk+j}$ for $j=1, \dots, n$, where $n = N - mk$.
Then the first integral is simply a momentum operator
\begin{equation*}
    H_1 = \sqrt{k} \sum\limits_{i = 1}^m  \partial_{y_i} + \sum\limits_{i = m+ 1}^{m + n} \partial_{y_i}.
\end{equation*}

\begin{prop}\label{rat A prop}
The Hamiltonian $H_2$ takes the form of the deformed spin Calogero--Moser operator
\begin{align}
    &\sum\limits_{i = 1}^{m + n} \partial_{y_i}^2 + 
    \sum\limits_{m + 1 \leq i < j \leq m+n} \frac{2(1 - \widetilde{P}_{ij})}{k(y_i - y_j)^2} + 
    \sum\limits_{i = 1}^m \sum\limits_{j = m + 1}^{m + n}
     \frac{2(k - \widetilde{P}_{ij})}{(y_i - \sqrt{k}y_j)^2}  \label{Hamiltonian}
   \\ 
    &\ + \sum\limits_{1 \leq i < j \leq m} \frac{2(k^2 - \widetilde{P}_{i j})}{(y_{i} - y_{j})^2} - 
    \sum\limits_{m+1 \leq i < j \leq m+n}\frac{2}{k(y_i - y_j)} (\partial_{y_i} - \partial_{y_j}) \nonumber 
    \\ 
    &\ - \sum\limits_{i = 1}^m \sum\limits_{j = m + 1}^{m + n} \frac{2}{y_i - \sqrt{k} y_j} 
    (\partial_{y_{i}} - \sqrt{k}\partial_{y_j}) 
    - \sum\limits_{1 \leq i < j \leq m} \frac{2k}{y_{i} - y_{j}} (\partial_{y_{i}} - 
    \partial_{y_{j}}), \nonumber
\end{align}
with
\begin{equation}
\label{perm}
    \widetilde{P}_{ij} = 
    \begin{cases}
        P_{m(k-1)+i, \, m(k-1)+j}, & m + 1 \leq i < j \leq m + n, \\
        \sum\limits_{r = (i - 1)k +1}^{i k}P_{r, \, m(k-1)+j}, & 1 \leq i \leq m, \ m + 1 \leq j \leq m + n, \\
        \sum\limits_{r = (i - 1)k +1}^{i k}\sum\limits_{s = (j - 1)k +1}^{j k} P_{r s}, & 1 \leq i < j \leq m,
    \end{cases}
\end{equation}
where $P_{ij}$ denotes the action of the transposition $(i,j) \in S_N$ on the vector space $U$.
\end{prop}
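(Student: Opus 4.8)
The plan is to specialise the general formula~\eqref{H2} from Theorem~\ref{maintheorem} to the type $A$ situation and then organise the resulting sum over positive roots according to the combinatorial type of each root relative to the block decomposition defining $\pi_{m,k}$. For the symmetric group $S_N$ acting on $V=\C^N$, every positive root has the form $\alpha=e_p-e_q$ with $p<q$, so $(\alpha,\alpha)=2$, $c_\alpha=\tfrac1k$, and $P_\alpha=P_{pq}$ is the action on $U$ of the transposition $(p,q)$. Thus all three terms of~\eqref{H2} become explicit sums of transpositions once the projected roots $\widehat\alpha$ are known.

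First I would fix coordinates on $\pi=\pi_{m,k}$. Grouping the first $mk$ indices into blocks $B_r=\{(r-1)k+1,\dots,rk\}$, $r=1,\dots,m$, the orthonormal vector spanning the $y_r$ direction is $u_r=\tfrac{1}{\sqrt k}\sum_{b\in B_r}e_b$, while the singleton index $mk+c$ (with $c=1,\dots,n$) spans the $y_{m+c}$ direction. A short computation then gives the orthogonal projections $\widehat{e_a}=\tfrac{1}{\sqrt k}u_r$ for $a\in B_r$ and $\widehat{e_{mk+c}}=e_{mk+c}$.

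Next I would split the positive roots $e_p-e_q$ into four classes according to whether $p,q$ lie in the same block, in two different blocks, in a block and the singleton region, or both in the singleton region. Roots of the first kind project to $\widehat\alpha=0$ and are discarded, consistent with $W_0$ being generated precisely by the reflections in these roots. For each of the remaining three classes I would record the projected root $\widehat\alpha$, the inner products $(\widehat\alpha,\widehat\alpha)$ and $(\widehat\alpha,y)$, the directional derivative $\partial_{\widehat\alpha}$, and, crucially, how many positive roots share the same projection ($k^2$ roots for a pair of blocks, $k$ roots for a block--singleton pair, and a single root for a pair of singletons). Substituting these data into the first-order part $-\sum 2c_\alpha(\widehat\alpha,y)^{-1}\partial_{\widehat\alpha}$ and keeping track of the $\sqrt k$ factors reproduces the three derivative sums in~\eqref{Hamiltonian}; substituting them into the potential part $\sum c_\alpha(\alpha,\alpha)(\widehat\alpha,y)^{-2}(1-P_\alpha)$ and grouping the transpositions $P_{pq}$ over each block (or block--singleton) pair produces exactly the operators $\widetilde P_{ij}$ of~\eqref{perm}, with the scalar term $1$ accumulating to $k^2$, $k$, or $1$ according to the class.

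The conceptual content is light, since it rests entirely on Theorem~\ref{maintheorem}; the main obstacle is purely bookkeeping. I expect the delicate points to be (i) tracking the factors of $\sqrt k$ introduced by the coordinate change, so that, for instance, the denominators correctly become $y_i-\sqrt k\,y_j$ in the block--singleton terms, and (ii) the translation between original indices $p,q\in\{1,\dots,N\}$ and the relabelled indices $i,j\in\{1,\dots,m+n\}$, which is what makes the summation ranges in~\eqref{perm} (for example $q=m(k-1)+j$ for a singleton) come out correctly. Once these substitutions are organised class by class, matching with~\eqref{Hamiltonian} and~\eqref{perm} is immediate.
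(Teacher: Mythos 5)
Your proposal is correct and follows essentially the same route as the paper's proof: specialise formula~\eqref{H2} to the $y$-coordinates on $\pi_{m,k}$, classify the positive roots $e_p-e_q$ by how they sit relative to the blocks, discard those projecting to zero, and combine the $1$, $k$, or $k^2$ roots sharing a common projection to assemble the terms of~\eqref{Hamiltonian} and the operators~\eqref{perm}. The paper states this in three sentences; your write-up simply makes the bookkeeping (projections, $\sqrt{k}$ factors, index relabelling) explicit, and the details you supply check out.
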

\begin{proof}
We use formula \eqref{H2} with the above-introduced $y$-coordinates on $\pi = \pi_{m,k}$. 
Consider the non-zero vectors in the projection onto~$\pi$ of the positive half of the $A$-type root system $A_{N-1}^+ = \{e_i - e_j \mid 1 \leq i < j \leq N \}$. Then for any one of them, we have $1$, $k$, or $k^2$ roots in $A_{N-1}^+$ that project to it. By combining all the terms for the roots in $A_{N-1}^+$ that project to the same vector, we get the result~\eqref{Hamiltonian}. 
\end{proof}
In the potential gauge, the Hamiltonian \eqref{Hamiltonian} takes the form 
\begin{equation}\label{Hamiltonian alt gauge}
    \begin{aligned}
        L &= \sum\limits_{i = 1}^{m + n} \partial_{y_i}^2 - \sum\limits_{m + 1 \leq i < j \leq m + n} \frac{2 (k^{-2} + k^{-1} \widetilde{P}_{ij})}{(y_i - y_j)^2} 
        \\
        &\quad - \sum\limits_{i = 1}^m \sum\limits_{j = m + 1}^{m + n}\frac{2(1 + \widetilde{P}_{ij})}{(y_i - \sqrt{k}y_j)^2} - \sum\limits_{1 \leq i < j \leq m} \frac{2(k + \widetilde{P}_{ij})}{(y_i - y_j)^2}
    \end{aligned}
\end{equation}
by using Theorem~\ref{potential gauge}.

We now make a connection with the deformed matrix Calogero--Moser operators introduced by Chalykh, Goncharenko, 
and Veselov~\cite{CGV}. The Hamiltonian \eqref{Hamiltonian alt gauge} in the case where $U=V$ is the reflection representation and $N= mk + 1$ (i.e., $n=1$) is related to a Hamiltonian
introduced in \cite{CGV}, as we now explain. 

Let $A_{m,n;k}^+$ be a positive part of the deformed root system of type~$A$ in the sense of Sergeev and Veselov \cite{SV} (see also~\cite{Serganova}):
\begin{align*}
    A_{m,n; k}^+ &= \{ \tfrac{1}{\sqrt{k}}(\widetilde{e}_i - \widetilde{e}_j) \mid 1 \leq i < j \leq m \}
    \\
    &\qquad \cup \{ \tfrac{1}{\sqrt{k}} \widetilde{e}_i - \widetilde{e}_j  \mid 1 \leq i \leq m, \; m+1 \leq j \leq m+n\}
    \\
    &\qquad \cup  \{ \widetilde{e}_i - \widetilde{e}_j \mid m+1 \leq i < j \leq m+n \}  \subset V_{m,n} \cong 
    \mathbb{C}^{m + n},
\end{align*}
where $\widetilde{e}_1, \ldots, \widetilde{e}_{m+n}$ is an orthonormal basis for $V_{m,n}$.

Assuming that $k \in \mathbb{Z}_{\geq 2}$ and $N = mk + n$,  
let us define the linear isometry $\phi \colon V_{m,n} \to V$ by
\begin{equation*}
    \begin{array}{cc}
        \phi(\widetilde{e}_{i}) = \frac{1}{\sqrt{k}}\sum\limits_{j \in J_{i}} e_j, \quad 1 \leq i \leq m,
        & \quad  \phi(\widetilde{e}_{m+i}) = e_{mk+i}, \quad 1 \leq i \leq n,
    \end{array}
\end{equation*}
where $J_i = \{(i-1)k + 1, \, (i-1)k + 2, \dots, ik \}$.
Then the image $\phi(A_{m,n; k}^+)$ is the set of non-zero vectors in the projection onto~$\pi$ of $A_{N-1}^+$.
In the above-introduced $y$-coordinates on $\pi$, we have $(\phi(\widetilde{e}_i), x) = y_i$ for all $i = 1, \dots, m+n$.

Let $s_{ij}$ denote the reflection on $V$ about the hyperplane orthogonal to the vector $e_i - e_j \in A_{N-1}^+$, and let $\widetilde{s}_{ij}$ denote the reflection on $V_{m,n}$ about the hyperplane orthogonal to the vector
in $A_{m,n; k}^+$ corresponding to the indices $i$, $j$. 
Consider the map $g \colon \{ \widetilde{s}_{ij} \} \to \C[S_N]$ given by
\begin{align*}
    g(\widetilde{s}_{m+i, \, m+j}) &= s_{mk+i, \, mk+j}, \quad 1 \leq i < j \leq n, 
    \\
    g(\widetilde{s}_{i, \, m+j}) &= \frac{1-k}{1+k} + \frac{2}{1+k} \sum\limits_{l \in J_{i}} s_{l, \, mk+j}, \quad 
    1 \leq i \leq m, \ 1 \leq j \leq n,
    \\
    g(\widetilde{s}_{ij}) &= 1-k+\frac{1}{k} \sum\limits_{r \in J_{i}} \sum\limits_{l \in J_{j}} s_{rl}, \quad 1 \leq i < j \leq m.
\end{align*}
The map $g$ sends the reflection about the hyperplane orthogonal to $v \in A_{m,n; k}^+$ to the operator  $\widehat{P}_{\widehat{\alpha}}$ given by~\eqref{P hat} with $\widehat{\alpha} = \phi(v)$.  Proposition~\ref{projref} claims that when acting on the subspace 
\begin{equation*}
    \phi(V_{m,n}) = \Span\{e_1 + \dots + e_k, \ldots, e_{(m-1)k+ 1} + \dots + e_{mk}, e_{mk+1}, \ldots, e_N\} = \pi,
\end{equation*}
the operator $\widehat{P}_{\widehat{\alpha}}$ coincides with the reflection about the hyperplane orthogonal to $\widehat{\alpha}$.  This implies the following statement.

\begin{prop}
\label{prop37}
The maps $\phi$ and $g$ defined above, together with the actions of the group $S_N$ and the reflections
$\widetilde{s}_{ij}$ $(1 \leq i < j \leq m+n)$, make the following diagram commute
\begin{equation*}
\begin{tikzcd}[row sep=large,column sep=huge]
    V_{m,n}
        \arrow[r, "\phi"]
        \arrow[d, "\widetilde{s}_{ij}"] &
    \pi \subset V
        \arrow[d, "g(\widetilde{s}_{ij})"] \\
    V_{m,n} 
        \arrow[r, "\phi"] &
   \pi \subset V
\end{tikzcd}
\end{equation*}
\end{prop}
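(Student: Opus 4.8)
The plan is to deduce the commutativity of the square from two ingredients already in place: the identification of $g(\widetilde{s}_{ij})$ with the operator $\widehat{P}_{\widehat{\alpha}}$ of~\eqref{P hat}, and Proposition~\ref{projref}, which computes the action of $\widehat{P}_{\widehat{\alpha}}$ on $\pi$. Fix a pair $(i,j)$, let $v \in A_{m,n; k}^+$ be the root whose orthogonal reflection is $\widetilde{s}_{ij}$, and set $\widehat{\alpha} = \phi(v)$. The square commutes precisely when the operator identity
\begin{equation*}
    g(\widetilde{s}_{ij}) \circ \phi = \phi \circ \widetilde{s}_{ij}
\end{equation*}
holds as maps $V_{m,n} \to \pi$, and this is what I would establish.

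First I would use that $\phi$ is a linear isometry onto $\pi$, so that conjugation by $\phi$ carries orthogonal reflections to orthogonal reflections. Writing $\phi^{-1}$ for the inverse of $\phi$ regarded as an isomorphism $V_{m,n} \xrightarrow{\sim} \pi$, the isometry property gives $\phi(v^{\perp}) = \{w \in \pi \mid (\widehat{\alpha}, w) = 0\}$ and $\phi(\widetilde{s}_{ij}(v)) = -\widehat{\alpha}$, so $\phi \circ \widetilde{s}_{ij} \circ \phi^{-1}$ fixes that hyperplane pointwise and negates $\widehat{\alpha}$. Hence $\phi \circ \widetilde{s}_{ij} \circ \phi^{-1}$ is exactly the orthogonal reflection of $\pi$ about the hyperplane orthogonal to $\widehat{\alpha}$, i.e.\ the map $y \mapsto y - \tfrac{2(\widehat{\alpha}, y)}{(\widehat{\alpha}, \widehat{\alpha})}\widehat{\alpha}$ for $y \in \pi$.

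Next I would invoke the two stated facts. By the description of $g$ recorded above, $g(\widetilde{s}_{ij})$ equals the operator $\widehat{P}_{\widehat{\alpha}}$ from~\eqref{P hat} with $\widehat{\alpha} = \phi(v)$. By Proposition~\ref{projref}, this $\widehat{P}_{\widehat{\alpha}}$ preserves $\pi$ and acts there as the orthogonal reflection about the hyperplane orthogonal to $\widehat{\alpha}$ --- the very map identified in the previous paragraph. Therefore $g(\widetilde{s}_{ij})|_\pi = \phi \circ \widetilde{s}_{ij} \circ \phi^{-1}$, and composing with $\phi$ on the right yields $g(\widetilde{s}_{ij}) \circ \phi = \phi \circ \widetilde{s}_{ij}$, which is the commutativity of the square for the chosen $(i,j)$.

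The only substantive computation hidden in this argument is the verification that each of the three explicit formulas defining $g$ --- in the mixed case the $\frac{1-k}{1+k}$ and $\frac{2}{1+k}$ coefficients, and in the $1 \le i < j \le m$ case the $1-k$ and $\frac{1}{k}$ coefficients --- reproduces $\widehat{P}_{\widehat{\alpha}}$ from~\eqref{P hat}. I expect this matching to be the main obstacle: one must track how the common type-$A$ data $c_\gamma = \tfrac{1}{k}$, $(\gamma, \gamma) = 2$, and the count $1$, $k$, or $k^2$ of positive roots projecting to a given $\widehat{\alpha}$ combine in the definition of $\widehat{c}_{\widehat{\alpha}}$ and in~\eqref{P hat}. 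This verification is mechanical once the dictionary $\widehat{\alpha} = \phi(v)$ is fixed, and everything else in the proof is formal, following from the isometry property of $\phi$ together with Proposition~\ref{projref}.
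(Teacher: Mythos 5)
Your proof is correct and takes essentially the same route as the paper: the paper likewise obtains the commutativity of the square by combining the identification $g(\widetilde{s}_{ij}) = \widehat{P}_{\widehat{\alpha}}$ with $\widehat{\alpha} = \phi(v)$ (asserted in the text, with the formula-matching left implicit), Proposition~\ref{projref}, and the fact that the isometry $\phi$ conjugates $\widetilde{s}_{ij}$ into the orthogonal reflection of $\pi$ about the hyperplane orthogonal to $\widehat{\alpha}$. The verification you defer (that the three explicit formulas for $g$ reproduce~\eqref{P hat} using $c_\gamma = \tfrac1k$, $(\gamma,\gamma)=2$, and the counts $1$, $k$, $k^2$ of projecting roots) does check out in all three cases, so there is no gap.
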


Using Propositions \ref{projref}, \ref{prop37} and Theorem~\ref{potential gauge}, one can see that when $U=V$ and $n=1$, the potential gauge form of the Hamiltonian~\eqref{Hamiltonian} is related to a matrix operator defined in \cite{CGV} as follows.

Given a configuration $\mathcal A$ with a multiplicity function $c\colon {\mathcal A}\to {\mathbb Z}$ the operator 
$L_{\mathcal A}^\varepsilon$ of the form 
\begin{equation}
\label{Lepsilon}
L_{\mathcal A}^\varepsilon=\Delta_y - \sum_{\alpha\in \mathcal A} \frac{c_\alpha(c_\alpha - \varepsilon P_\alpha)(\alpha, \alpha)}{(\alpha,y)} 
\end{equation}
with $\varepsilon =1$ 
was considered in \cite{CGV}. $D$-integrability of such operators was established in \cite{CGV} for root systems as well as for ${\mathcal A}=A^+_{m,1; k}$ with multiplicities described above. In a similar way one can establish $D$-integrability of the corresponding operator $L_{\mathcal A}^\varepsilon$ with $\varepsilon=-1$ which coincides with the operator~\eqref{Hamiltonian alt gauge}. It is unclear to us whether one can realise such an operator with $\varepsilon =1$ via a restriction-type procedure similar to the one described in this paper.

\subsubsection{$B$-series}\label{rat type B}
A positive half of the root system $B_N$ is $B_N^+ = \{ e_i \pm e_j \mid 1 \leq i < j \leq N\} \cup \{ e_i  \mid i= 1, \dots, N\}$.
Let $c_1$ be the multiplicity of the roots $e_i \pm e_j$ and $c_2$ the multiplicity of the roots $e_i$. Then the most general parabolic 
stratum leading to an invariant parabolic ideal can be defined as follows. 

Let $\pi_{m,k;\, l} \subset \mathbb{C}^N$ for $mk+l \leq N$ be defined by the equations
\begin{align*}
    &x_1 = x_2 = \dots = x_k, \ x_{k+1} = x_{k + 2} = \dots = x_{2k}, \ \dots, \\
    &x_{(m-1)k+1} = x_{(m-1)k+2} = \dots = x_{mk}, \\
    &x_{N - l + 1} = x_{N - l + 2} = \dots = x_{N} = 0,    
\end{align*}
where we allow $k=1$ or $l=0$, which means that the corresponding relations are absent. The corresponding parabolic stratum is then defined by
\begin{equation*}
    D_{m,k; \, l} = \bigcup\limits_{w \in \mathcal{B}_N} w(\pi_{m,k; \, l}),
\end{equation*}
where $\mathcal{B}_N$ is the Weyl group of type $B_N$.
For the corresponding parabolic ideal to be invariant,
the multiplicities must satisfy the following. When $k > 1$ and $l \neq 0$, we must have $c_1 = \frac{1}{k}$ and $c_2 = \frac{1}{2} - \frac{l - 1}{k}$. When $k > 1$ and $l = 0$, we must have $c_1 = \frac{1}{k}$ and there is no restriction on $c_2$. When $k = 1$ and $l \neq 0$, we must have $2c_1(l-1) + 2c_2 = 1$.

The non-zero vectors in the projection of $B_N^+$ onto the space~$\pi_{m,k; \, l}$ are 
\begin{align}
        B_{m,n; \, k}^+ &= \{\widetilde{e}_i \pm \widetilde{e}_j \mid m+1 \leq i < j \leq m+n \} \label{projroot} \\
        &\qquad \cup \{\tfrac{\widetilde{e}_i}{\sqrt{k}} \pm \widetilde{e}_j \mid 1 \leq i \leq m, \ m+1 \leq j \leq m+n\} \nonumber
        \\ 
        &\qquad \cup\{\tfrac{\widetilde{e}_i \pm \widetilde{e}_j}{\sqrt{k}} \mid 1 \leq i < j \leq m \} \cup \{\widetilde{e}_i \mid m+1 \leq i \leq m+n \} \nonumber
        \\
        &\qquad \cup\{\tfrac{\widetilde{e}_i}{\sqrt{k}} \mid 1 \leq i \leq m\} \cup \{\tfrac{2 \widetilde{e}_i}{\sqrt{k}} \mid 1 \leq i \leq m \}, \nonumber
\end{align}
where $n = N - mk - l$ and 
\begin{equation*}
    \begin{array}{cc}
        \widetilde{e}_{i} = \frac{1}{\sqrt{k}}\sum\limits_{j \in J_{i}} e_j, \quad 1 \leq i \leq m,
        & \quad  \widetilde{e}_{m+i} = e_{mk+i}, \quad 1 \leq i \leq n,
    \end{array}
\end{equation*}
with $J_i = \{(i-1)k + 1, \, (i-1)k + 2, \dots, ik \}$, and respective generalised multiplicities are 
\begin{alignat*}{3}
    &\widehat{c}_{\widetilde{e}_i \pm \widetilde{e}_j} = c_1, \qquad 
    &&\widehat{c}_{(\sqrt{k})^{-1}\widetilde{e}_i \pm \widetilde{e}_j} = c_1 k, \qquad 
    &&\widehat{c}_{(\sqrt{k})^{-1}(\widetilde{e}_i \pm \widetilde{e}_j)} = c_1 k^2, \\
    &\widehat{c}_{\widetilde{e}_i} = 2c_1 l + c_2, \qquad 
    &&\widehat{c}_{(\sqrt{k})^{-1}\widetilde{e}_i} = 2c_1 kl + c_2 k, \qquad
    &&\widehat{c}_{2(\sqrt{k})^{-1}\widetilde{e}_i} = \frac{k-1}{2}. 
\end{alignat*}
In the case $k=1$, the vectors $\{\tfrac{2 \widetilde{e}_i}{\sqrt{k}} \mid 1 \leq i \leq m \}$ should be excluded from $B_{m,n; \, k}^+$.


The following proposition is a corollary of Theorem~\ref{potential gauge}.
\begin{prop}

Let $y_1, \dots, y_m$ be as in Section~\ref{rat type A}, and let $y_{m+i} = x_{mk+i}$ for $i=1, \dots, n$.
The generalised spin Calogero--Moser Hamiltonian (in potential gauge) corresponding to the above-defined stratum  $\pi_{m,k;\, l}$ is
\begin{align}
    L &= \sum_{i=1}^{m+n} \partial_{y_i}^2 - \sum\limits_{m+1 \leq i < j \leq m+n} 2 c_1 \left(\frac{c_1 + \widehat{P}_{\widetilde{e}_i-\widetilde{e}_j}}{(y_i - y_j)^2} + \frac{c_1 + \widehat{P}_{\widetilde{e}_i+ \widetilde{e}_j}}{(y_i + y_j)^2} \right) \label{DeformedHam}
    \\
    & \ -\sum\limits_{1 \leq i < j \leq m} 2 c_1 k^2\left(\frac{c_1 k^2 + \widehat{P}_{\frac{\widetilde{e}_i - \widetilde{e}_j}{\sqrt{k}}}}{(y_i -y_j)^2} + \frac{c_1 k^2 + \widehat{P}_{\frac{\widetilde{e}_i + \widetilde{e}_j}{\sqrt{k}}}}{(y_i + y_j)^2} \right) \nonumber
    \\
    & \ - \sum\limits_{i = 1}^m \sum\limits_{j = m+1}^{m+n} c_1 k(k+1) \left(\frac{c_1 k +  \widehat{P}_{\frac{\widetilde{e}_i}{\sqrt{k}}-\widetilde{e}_j}}{(y_i - \sqrt{k} y_j)^2} + \frac{c_1 k + \widehat{P}_{\frac{\widetilde{e}_i}{\sqrt{k}} + \widetilde{e}_j}}{(y_i + \sqrt{k} y_j)^2}  \right) \nonumber
    \\ 
    & \ -\sum\limits_{ i = m+1}^{m+n} \frac{(2 c_1 l + c_2)^2 + (2 c_1 l + c_2)\widehat{P}_{\widetilde{e}_i}}{y_i^2} \nonumber
    \\
    & \ -\sum\limits_{i = 1}^m \frac{1}{y_i^2}\left(\left(2c_1kl + c_2k + \frac{k-1}{2}\right)^2 
     + 
     k \left(2c_1 l  +  c_2\right)\widehat{P}_{\frac{\widetilde{e}_i}{\sqrt{k}}}
      + \frac{k-1}{2}  \widehat{P}_{\frac{2\widetilde{e}_i}{\sqrt{k}}}\right), \nonumber
\end{align}
where $c_2$ is a free parameter if $l = 0$ and otherwise $c_2 = \frac{1}{2} - c_1(l - 1)$, and $c_1$ is a free parameter if $k=1$ and otherwise $c_1 = \frac1k$. 
\end{prop}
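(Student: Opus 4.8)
The plan is to derive this proposition as a direct specialisation of Theorem~\ref{potential gauge} to the root system $R = B_N$ equipped with the stratum $\pi_{m,k;\,l}$, feeding in the explicit list \eqref{projroot} of nonzero projected roots $\widehat{R}_+\setminus\{0\} = B_{m,n;\,k}^+$ together with the generalised multiplicities $\widehat{c}_{\widehat{\alpha}}$ already tabulated above. Since $\dim\pi = m+n$ and $(\phi(\widetilde{e}_i), x) = y_i$ in the chosen coordinates, each summand of the potential-gauge Hamiltonian \eqref{Ham} indexed by a direction $\widehat{\alpha}$ can be rewritten in terms of the $y_i$. First I would organise $\widehat{R}_+\setminus\{0\}$ into its proportionality classes and, for each class, evaluate the geometric prefactor $(\widehat{\alpha},\widehat{\alpha})/(\widehat{\alpha}, y)^2$.

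For the four families $\widetilde{e}_i\pm\widetilde{e}_j$ with $m+1\le i<j\le m+n$, $(\widetilde{e}_i\pm\widetilde{e}_j)/\sqrt{k}$ with $1\le i<j\le m$, $\widetilde{e}_i/\sqrt{k}\pm\widetilde{e}_j$ with $1\le i\le m<j\le m+n$, and $\widetilde{e}_i$ with $m+1\le i\le m+n$, each direction is represented by a \emph{single} generalised root, so that $\sum_{\widehat{\beta}\sim\widehat{\alpha}}\widehat{c}_{\widehat{\beta}} = \widehat{c}_{\widehat{\alpha}}$ and the corresponding factor in \eqref{Ham} collapses to $\widehat{c}_{\widehat{\alpha}}(\widehat{c}_{\widehat{\alpha}} + \widehat{P}_{\widehat{\alpha}})$. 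A short computation of the prefactors (for instance $(\widehat{\alpha},\widehat{\alpha})/(\widehat{\alpha}, y)^2 = (k+1)/(y_i\pm\sqrt{k}y_j)^2$ in the mixed family) together with the tabulated $\widehat{c}_{\widehat{\alpha}}$ then reproduces, line by line, the first four lines of \eqref{DeformedHam}.

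The one genuinely nontrivial point, and the step I expect to be the main obstacle, is the direction $\widetilde{e}_i$ for $1\le i\le m$, which is the only class carrying two distinct proportional generalised roots: $\widehat{\alpha}_1 = \widetilde{e}_i/\sqrt{k}$, fed by the short roots $e_j$ ($j\in J_i$) and by the roots $e_a\pm e_b$ with one index in $J_i$ and the other in the zeroed block, and $\widehat{\alpha}_2 = 2\widetilde{e}_i/\sqrt{k}$, fed by the roots $e_a+e_b$ with $a,b\in J_i$. Here one must keep careful track of which $B_N$-roots project to each of $\widehat{\alpha}_1,\widehat{\alpha}_2$; this bookkeeping yields $\widehat{c}_{\widehat{\alpha}_1} = 2c_1 kl + c_2 k =: a$ and $\widehat{c}_{\widehat{\alpha}_2} = \binom{k}{2}c_1 =: b$, the latter equal to $\tfrac{k-1}{2}$ once $c_1 = \tfrac1k$. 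Both roots share the same prefactor $(\widehat{\alpha}_1,\widehat{\alpha}_1)/(\widehat{\alpha}_1,y)^2 = (\widehat{\alpha}_2,\widehat{\alpha}_2)/(\widehat{\alpha}_2,y)^2 = 1/y_i^2$, and in \eqref{Ham} each appears multiplied by the \emph{common} quantity $\sum_{\widehat{\beta}\sim\widehat{\alpha}}\widehat{c}_{\widehat{\beta}} = a+b$. Adding the two contributions gives
\begin{equation*}
    -\frac{1}{y_i^2}\Big(a(a+b) + b(a+b) + a\widehat{P}_{\widehat{\alpha}_1} + b\widehat{P}_{\widehat{\alpha}_2}\Big) = -\frac{1}{y_i^2}\Big((a+b)^2 + a\widehat{P}_{\widetilde{e}_i/\sqrt{k}} + b\widehat{P}_{2\widetilde{e}_i/\sqrt{k}}\Big),
\end{equation*}
so that the scalar parts fuse into the perfect square $(a+b)^2 = \big(2c_1 kl + c_2 k + \tfrac{k-1}{2}\big)^2$ while the two reflection operators remain separate, which is exactly the last line of \eqref{DeformedHam}. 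To finish, I would substitute the admissible parameter values supplied by Theorem~\ref{invariance} ($c_1 = \tfrac1k$ when $k>1$, and $c_2 = \tfrac12 - c_1(l-1)$ when $l\neq 0$, with the stated degenerations when $k=1$ or $l=0$), noting that in the excluded case $k=1$ the long class $2\widetilde{e}_i/\sqrt{k}$ is empty and the formula degenerates accordingly.
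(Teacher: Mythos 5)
Your proposal is correct and follows exactly the route the paper takes: the paper's proof consists of the single remark that the proposition is a corollary of Theorem~\ref{potential gauge}, and your argument is precisely that specialisation carried out in detail, with the proportionality-class bookkeeping for $\widetilde{e}_i/\sqrt{k}$ and $2\widetilde{e}_i/\sqrt{k}$ (including the fusion of the scalar parts into $(a+b)^2$) done correctly.
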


Note that by the previous proposition, the projection of $B_N^+$ onto $\pi_{m,k; \, l}$ leads for $k=1$ to a projected system of type~$B$, and for $k>1$, to a special case of Sergeev--Veselov's deformed $BC$-type root system~\cite{SV} (which includes the standard $BC$ root system).

The Hamiltonian \eqref{DeformedHam} for $U=V$ being the reflection representation, $n=1$, $l=0$ ($N= mk + 1$), $c_2 \in \Z_{\geq 0}$, and odd $k$ is related to a Hamiltonian
introduced in \cite{CGV} for a deformation $C_{m+1}(k c_2+\frac{k-1}2, c_2)$ of the root system~$C_{m+1}$. That is, such a Hamiltonian \eqref{DeformedHam} coincides with the operator \eqref{Lepsilon} with $\varepsilon=-1$ and configuration ${\mathcal A} = B_{m,1;k}^+ = C_{m+1}(k c_2+\frac{k-1}2, c_2)$, $l=0$, $c_1=\frac1k$.

\begin{example}\label{B_2 example}
    Let us consider the case $N=3$, $k=1$, $l=1$, and let $U$ be the two-dimensional irreducible representation of $\mathcal{B}_3$ with basis $x_1^2 -x_2^2$ and $x_2^2 - x_3^2$, with the natural action of $\mathcal{B}_3$ on polynomials.
    Then $U^{W_0} = U$ as $W_0 = \langle s_{e_3} \rangle$, and $\widehat{R}_+ \setminus \{0\} = B_2^+$. 
    The restricted Hamiltonian is
    \begin{equation*}
        L = \partial_{x_1}^2 + \partial_{x_2}^2 - \sum_{i=1}^2\frac{\widehat{c}_{e_i}(\widehat{c}_{e_i} + \widehat{P}_{e_i})}{x_i^2}
       -\sum_{\varepsilon \in \{\pm 1\}}\frac{2c_1(c_1 + \widehat{P}_{e_1+ \varepsilon e_2})}{(x_1+ \varepsilon x_2)^2},
    \end{equation*}
    where $\widehat{c}_{e_i} = 2c_1 + \frac{1}{2}$, and
    \begin{equation*}
        \begin{array}{c}
        \widehat{P}_{e_1} = \frac{1}{4c_1 + 1} \begin{pmatrix}
            1-4c_1 & -8c_1 \\
            -8c_1 & 1-4c_1
        \end{pmatrix},
        \qquad \widehat{P}_{e_2} = \frac{1}{4c_1 + 1} \begin{pmatrix}
            4c_1 + 1 & 0 \\
            8c_1 & 1-12c_1
        \end{pmatrix}, \\ \ \\
        \widehat{P}_{e_1 \pm e_2} = \begin{pmatrix}
            -1 & 1 \\
            0 & 1
        \end{pmatrix}.
        \end{array}
    \end{equation*}
    The Hamiltonian $L$ has a $4$th-order quantum integral that can be obtained as $\widetilde{\Res}_{\pi}( \sum_{i = 1}^3 \nabla_i^4)$, which we write out explicitly in Appendix~\ref{B_2 integral} for illustration for $c_1 = 1$. 
    
    We note that for generic $c_1$, there is no representation $\varphi$ of $\mathcal{B}_2$ on~$\C^2$ and multiplicities $m,n \in \C$ such that the above $L$ would coincide with the standard spin $B_2$ Calogero--Moser operator
    \begin{equation*}
        \partial_{x_1}^2 + \partial_{x_2}^2 - \sum_{i=1}^2\frac{m(m + \varphi(s_{e_i}))}{x_i^2}
       -\sum_{\varepsilon \in \{\pm 1\}}\frac{2n(n + \varphi(s_{e_1+ \varepsilon e_2}))}{(x_1+ \varepsilon x_2)^2}.
    \end{equation*}
\end{example}

\subsubsection{$D$-series}\label{D series}
A positive half of the root system $D_N$ is $D_N^+ = \{ e_i \pm e_j \mid 1 \leq i < j \leq N\}$ with the same multiplicity $c$ for all vectors.
In type~$D$, there exist three different types of parabolic strata leading to invariant parabolic ideals. The first type corresponds to the subspace $\pi_{m,k}^{\epsilon}$ ($\epsilon \in \{ \pm \}$, $mk \leq N$) defined by the equations
\begin{align*}
    &x_1 = x_2 = \dots = x_k, \ x_{k+1} = x_{k+2} = \dots = x_{2k}, \ \ldots, \\
    &x_{(m-1)k + 1} = x_{(m-1)k + 2} = \dots = \epsilon x_{mk}
\end{align*}
for $c = \frac{1}{k}$. 
The two cases $\epsilon = \pm$ lead to equivalent projected systems, so it suffices to consider the case $\epsilon = +$. 
The second type of stratum corresponds to the subspace $\pi_p$ given by
\begin{equation*}
     x_{N-p+1} = x_{N-p+2} = \dots = x_N = 0
\end{equation*}
for $c = \frac{1}{2(p-1)}$ and $p \in \Z_{\geq 2}$. The last type is the intersection of the first two types, and corresponds to the subspace $\pi_{m,k}$ defined by
\begin{align*}
    &x_1 = x_2 = \dots = x_k, \ x_{k+1} = x_{k+2} = \dots = x_{2k}, \ \ldots, \\
    &x_{(m-1)k+1} = x_{(m-1)k + 2} = \dots = x_{mk}, \\
    &x_{N-\frac{k}{2}} = x_{N-\frac{k}{2}+1} = \dots = x_N = 0
\end{align*}
for even $k$, $c = \frac{1}{k}$, and $mk + \frac{k}{2} + 1 \leq N$. 

The non-zero vectors in the respective projections of~$D_N^+$ (and their generalised multiplicities)
are particular cases of projections of $B_N^+$ (recall that the latter system's multiplicities were called $c_1$, $c_2$). Namely, for $\pi_{m,k}^{+}$, $\pi_p$, and~$\pi_{m,k}$, they are respectively
\begin{equation*}
    \begin{array}{c}
    \displaystyle{
    B_{m,n; \, k}^+, \quad {\rm with} \; c_1 = \frac{1}{k}, \; c_2 = 0, \; n = N - mk \ (\text{i.e., } l=0)
    }
    \\ \ \\
    \displaystyle{
    B^{+}_{N - p}, \quad {\rm with} \; c_1 = \frac{1}{2(p-1)}, \; c_2 = \frac{p}{p-1},
    }
    \\ \ \\
    \displaystyle{
    B_{m,n; \, k}^+, \quad {\rm with} \; c_1 = \frac{1}{k}, \; c_2 = 0, \; n = N - mk -\frac{k}{2}-1 \ (\text{i.e., } l=\frac{k}{2}+1),
    }
    \end{array}
\end{equation*}
where those vectors that have zero generalised multiplicity for the given $c_1$ and $c_2$ are thought of as excluded from $B_{m,n; \, k}^+$.

In all the cases, the corresponding Hamiltonians are of the form \eqref{DeformedHam}, with the respective generalised multiplicities. For each case, the computation of the operators $\widehat{P}_{\widehat{\alpha}}$ uses formula \eqref{P hat}. Let us note that in the case of $U$ being the reflection representation~$V$, the resulting operators depend only on the projected roots.

\subsection{Systems with a harmonic potential}
Let us show how the methods presented above allow us to obtain generalised
spin Calogero--Moser systems with a harmonic potential. In this part of the paper, we consider only classical root systems. Our considerations 
follow the logic of scalar case in~\cite{Feigin}. Let us introduce the operators
\begin{equation*}
    \nabla^{\pm}_i = \nabla_i \pm  \omega x_i, \quad h_i = \nabla_i^{+} \nabla_i^{-}.
\end{equation*}
For classical root systems $R$ and the associated Weyl group $W$, the elements 
\begin{equation} \label{harmonicintegrals}
    K_m = \sum\limits_{i = 1}^N h_i^m \qquad (m \in \Z_{\ge 0})
\end{equation}
are $W$-invariant and
pairwise commute (see \cite{Polych} for type $A$, and \cite{Feigin} and references therein for types $B, D$). To obtain the generalised spin Calogero--Moser Hamiltonian with a harmonic
potential, we compute
\begin{equation} \label{Hamiltonianharmonic}
    K_1 = \sum\limits_{i = 1}^N \left(\nabla_i^2 - \omega^2 x_i^2\right) + 
    2 \omega \sum\limits_{\alpha \in R_+} c_{\alpha} s_{\alpha} - \omega N,
\end{equation}
where we used the commutation relations
\begin{equation*}
    [\nabla_i, x_j] = \delta_{ij} - \sum\limits_{\alpha \in R_+} c_{\alpha}
    (\alpha, e_i) (\alpha^{\vee}, e_j)s_{\alpha},
\end{equation*}
$\alpha^\vee = 2\alpha/(\alpha, \alpha)$. Notice that the element $\sum_{\alpha \in R_+} c_{\alpha} s_{\alpha}$ belongs to the centre of $\mathbb{C}[W]$, and it follows that 
\begin{equation*}
    \widetilde{K}_1 = \sum\limits_{i = 1}^N \left(\nabla_i^2 - \omega^2 x_i^2\right)
\end{equation*}
is $W$-invariant and commutes with all $K_m$.
Using formula \eqref{Hamiltonianharmonic} and the operators \eqref{harmonicintegrals}, we act on $\widetilde{M}^W$ or analytic vector-valued germs, and we obtain 
the following result. 
\begin{theorem}
    Let $R$ be the root system $A_{N - 1}$, $B_N$ or $D_N$, and~$\widetilde{M}^W$ be the 
    module defined previously by the stratum $D_{\Gamma_0}$ and the right Weyl group 
    module $U$. Then the system defined by the generalised spin Calogero--Moser 
    Hamiltonian with a harmonic term 
    \begin{equation*}
        H^{\rm hm} =\widetilde{\mathrm{Res}}_{\pi}\widetilde{K}_1 = H_2 - \omega^2 (y , y) ,
    \end{equation*}
    where $H_2$ is the Hamiltonian of the generalised spin Calogero--Moser system~\eqref{H2}, admits pairwise commuting quantum integrals
    \begin{equation*}
        H_m^{\rm hm} = \widetilde{\mathrm{Res}}_{\pi} K_m,
    \end{equation*}
where $m\in \mathbb \Z_{\ge 0}$.
\end{theorem}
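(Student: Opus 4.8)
The plan is to deduce everything from the module structure supplied by Lemma~\ref{invar} together with the commutativity of the harmonic integrals $K_m$ in $\cH_c$ recorded above. First I would observe that $\widetilde{K}_1$ and each $K_m$ are $W$-invariant elements of $\cH_c$, hence lie in the centraliser $\cH_c^W = C_{\cH_c}(\mathbb{C}[W])$. For $\widetilde{K}_1$ this is immediate once split into its two summands: $\sum_i \nabla_i^2 = p(\nabla)$ for the $W$-invariant polynomial $p = \sum_i x_i^2$ and therefore commutes with $W$, while multiplication by the $W$-invariant polynomial $\omega^2 \sum_i x_i^2$ also commutes with $W$; for the $K_m$ the invariance is the cited fact. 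Consequently all of these elements act on the $\cH_c^W$-module $\mathcal{C}_{x_0}^{W_0}(D_{\Gamma_0}, U)$ through $\widetilde{\Res}_\pi$. Since $\widetilde{\Res}_\pi$ is precisely the structure map of this module, it is an algebra homomorphism $\cH_c^W \to \mathrm{End}(\mathcal{C}_{x_0}^{W_0}(D_{\Gamma_0}, U))$, so that $[\widetilde{\Res}_\pi a, \widetilde{\Res}_\pi b] = \widetilde{\Res}_\pi[a,b]$ for all $a,b \in \cH_c^W$.

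For the explicit form of $H^{\rm hm}$, I would use linearity of the action on $\widetilde{K}_1 = \sum_i \nabla_i^2 - \omega^2 \sum_i x_i^2$. The first summand restricts to $H_2$ by Theorem~\ref{maintheorem}. The second summand is multiplication by the $W$-invariant polynomial $\omega^2(x,x)$; under the identification $\mathcal{C}_{Wx_0}^W(U) \cong \mathcal{C}_{x_0}^{W_0}(D_{\Gamma_0}, U)$ it acts by multiplication by the restriction of $(x,x)$ to the component of $D_{\Gamma_0}$ through $x_0$, which near the generic point $x_0$ is $\pi$; in the orthonormal coordinates $y$ on $\pi$ this restriction equals $(y,y)$. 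Hence $H^{\rm hm} = \widetilde{\Res}_\pi \widetilde{K}_1 = H_2 - \omega^2(y,y)$, as claimed.

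The commutativity of the integrals then follows formally. By the results cited above (Polychronakos for type $A$, and \cite{Feigin} and references therein for types $B$, $D$) the $K_m$ pairwise commute in $\cH_c$, and $\widetilde{K}_1$ was observed above to commute with every $K_m$ by virtue of $\sum_{\alpha \in R_+} c_\alpha s_\alpha$ being central in $\mathbb{C}[W]$. Applying the homomorphism $\widetilde{\Res}_\pi$ to the relations $[K_m, K_{m'}] = 0$ and $[\widetilde{K}_1, K_m] = 0$ then yields $[H_m^{\rm hm}, H_{m'}^{\rm hm}] = 0$ and $[H^{\rm hm}, H_m^{\rm hm}] = 0$ for all $m, m' \in \Z_{\geq 0}$, which is exactly the asserted integrability.

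The genuinely delicate input is the commutativity of the $K_m$ among themselves, and it is precisely this that forces the restriction to the classical root systems $A_{N-1}$, $B_N$, $D_N$; for these it is supplied by the literature, and once granted, the descent through $\widetilde{\Res}_\pi$ is purely formal. The one point I would verify with care is that the module action of Lemma~\ref{invar} genuinely accommodates the multiplication operators occurring in $\widetilde{K}_1$ and the $K_m$ (and not only polynomials in the Dunkl operators), so that $\widetilde{\Res}_\pi$ is defined on all of $\cH_c^W$ and respects products; this is exactly what the $\cH_c^W$-module structure on $\mathcal{C}_{x_0}^{W_0}(D_{\Gamma_0}, U)$ established prior to Theorem~\ref{maintheorem} guarantees.
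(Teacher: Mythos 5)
Your proposal is correct and follows essentially the same route as the paper: the pairwise commutativity of the $K_m$ is quoted from the literature, the relation $K_1 = \widetilde{K}_1 + 2\omega\sum_{\alpha\in R_+} c_\alpha s_\alpha - \omega N$ together with the centrality of $\sum_{\alpha\in R_+} c_\alpha s_\alpha$ in $\C[W]$ yields $[\widetilde{K}_1, K_m]=0$, and everything descends through $\widetilde{\Res}_\pi$, which the paper likewise treats as the $\cH_c^W$-module structure map on germs. The only (harmless) variation is that you establish the $W$-invariance of $\widetilde{K}_1$ directly by splitting it into $p(\nabla)$ with $p=\sum_i x_i^2$ and multiplication by $\omega^2(x,x)$, whereas the paper deduces it from the centrality of $\sum_{\alpha\in R_+} c_\alpha s_\alpha$.
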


\section{Generalised trigonometric spin Calogero--Moser systems}\label{trig case}
In this section, we continue to use the notations introduced in Sections~\ref{invariant ideals} and~\ref{rational case}, but $R$ will now be a reduced crystallographic root system and~$W$ the corresponding Weyl group (the case of the non-reduced root system $BC_N$ will be considered in Section~\ref{BC}).

The trigonometric Cherednik algebra associated with~$R$ can be defined by its faithful action on the group algebra $\C[\{e^{(\alpha , x)} \mid \alpha \in P \}]$ of the weight lattice $P$. It is generated by~$W$,~$e^{(\alpha , x)}$ ($\alpha \in P$), and 
the commuting trigonometric Dunkl operators given by
\begin{equation}
    \label{trigDunkl}
    \nabla^{\rm tr}_{\xi} = \partial_{\xi} - \sum\limits_{\alpha \in R_+} \frac{c_{\alpha} (\alpha, \xi)}{1 - e^{-(\alpha,x)}}(1 - s_{\alpha}) + (\rho , \xi),
\end{equation}
where $\xi \in V$ and $\rho = \frac{1}{2}\sum_{\alpha \in R_+}c_{\alpha} \alpha$ is a generalised Weyl vector \cite{Ch1}. 
 
Define the subspace $\pi \subset V$ as in the rational case (Section~\ref{invariant ideals}). Let~$x_0$ be a generic point of $\pi$, meaning that if $e^{(\alpha, x_0)} = 1$ for some $\alpha \in R$ then $\alpha \in \Span \Gamma_0^v$.

The trigonometric Cherednik algebra has an action on the space $\mathcal{C}_{Wx_0}(V, U) = \oplus_{x \in Wx_0} \mathcal{C}_{x}(V, U)$ of germs on $V$, defined analogously to the rational Cherednik algebra case. The next theorem gives the conditions under which
this module has a submodule $\mathcal{I}_{\Gamma_0}$ consisting of those elements of $\mathcal{C}_{Wx_0}(V, U)$ that vanish when restricted to $D_{\Gamma_0}$.
In that case, we call~$\mathcal{I}_{\Gamma_0}$ a parabolic submodule, in analogy with the parabolic ideal~$I_{\Gamma_0}$.

\begin{theorem} 
    Let $\Gamma_0 = \coprod_{i = 1}^l \Gamma_i$ be the decomposition
    of the subgraph into connected components. Then the space $\mathcal{I}_{\Gamma_0}$ is invariant under the trigonometric Cherednik algebra if and only if the following relation holds for all $i=1, \dots, l$:
    \begin{equation*}
        \sum\limits_{\alpha \in R \cap V_i} \frac{c_{\alpha} (\alpha , u) (\alpha, v)}{(\alpha, \alpha)} = (u, v)
    \end{equation*}
    for all $u,v \in V_i$, where $V_i$ is the linear space spanned by the roots $\Gamma_i^v$.
\end{theorem}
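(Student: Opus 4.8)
The plan is to reduce the statement to the rational case already recorded in Theorem~\ref{invariance}, exploiting that the trigonometric and rational Dunkl operators differ only by terms that preserve $\mathcal{I}_{\Gamma_0}$ unconditionally. First I would note that the trigonometric Cherednik algebra is generated by $W$, the multiplication operators $e^{(\alpha,x)}$, and the Dunkl operators $\nabla^{\rm tr}_\xi$, so it suffices to check that each of these generators maps $\mathcal{I}_{\Gamma_0}$ into itself. Multiplication by any germ of an analytic function (in particular by $e^{(\alpha,x)}$) clearly preserves vanishing on $D_{\Gamma_0}$, and each $w \in W$ preserves $\mathcal{I}_{\Gamma_0}$ because $D_{\Gamma_0} = \bigcup_{w \in W} w(\pi)$ is $W$-invariant and $w$ carries a germ vanishing on $D_{\Gamma_0}$ near $x$ to a germ vanishing on $D_{\Gamma_0}$ near $w(x)$. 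Thus the only nontrivial generators to treat are the Dunkl operators.

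Next I would compare $\nabla^{\rm tr}_\xi$ of~\eqref{trigDunkl} with the rational Dunkl operator $\nabla_\xi$ of~\eqref{Dunkl}, forming
\begin{equation*}
    \nabla^{\rm tr}_\xi - \nabla_\xi = -\sum_{\alpha \in R_+} c_\alpha (\alpha,\xi)\left(\frac{1}{1-e^{-(\alpha,x)}} - \frac{1}{(\alpha,x)}\right)(1 - s_\alpha) + (\rho,\xi),
\end{equation*}
in which the directional derivatives cancel. The key analytic input is that $\frac{1}{1-e^{-t}} - \frac1t$ extends analytically across $t = 0$. For internal roots $\alpha \in \Span \Gamma_0^v$ the linear function $(\alpha,x)$ vanishes on $\pi$, but this expansion shows the associated coefficient is nevertheless analytic near $\pi$; for external roots, the genericity of $x_0$ (which forces $e^{(\alpha,x_0)} \neq 1$ and $(\alpha,x_0) \neq 0$) makes both fractions separately analytic near every point of $Wx_0$. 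Hence each coefficient of the difference operator is analytic. Since the difference is a sum of analytic multiples of $(1-s_\alpha)$ together with the scalar $(\rho,\xi)$, and each $s_\alpha$ preserves the $W$-invariant set $D_{\Gamma_0}$, the operator $\nabla^{\rm tr}_\xi - \nabla_\xi$ maps $\mathcal{I}_{\Gamma_0}$ into itself \emph{regardless} of whether the stated relation holds.

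It then remains to invoke the rational analogue of the present statement: the germs vanishing on $D_{\Gamma_0}$ form an invariant subspace for the rational Cherednik algebra if and only if the relation of Theorem~\ref{invariance} holds on each $V_i$. This is Theorem~\ref{invariance} transported to the germ setting of Lemma~\ref{invar}, the ``only if'' direction being visible by exhibiting a germ whose rational Dunkl image acquires a genuine pole along $\pi$ when the relation fails. Combining this with the preceding paragraph, $\nabla^{\rm tr}_\xi$ preserves $\mathcal{I}_{\Gamma_0}$ if and only if $\nabla_\xi$ does: in the forward direction the condition gives invariance under $\nabla_\xi$, and adding the difference operator yields invariance under $\nabla^{\rm tr}_\xi$; conversely, subtracting the difference operator recovers invariance under $\nabla_\xi$, forcing the relation. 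Both implications of the theorem follow.

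The \textbf{main obstacle} I anticipate is the bookkeeping needed to make the comparison legitimate at every point of the orbit $Wx_0$ simultaneously. One must confirm that both the rational and the trigonometric Dunkl operators are genuinely well defined on $\mathcal{C}_{Wx_0}(V,U)$: this rests on the fact that, for each $\alpha$, the germ $(1-s_\alpha)F$ vanishes on the reflection hyperplane $\alpha^\perp$ (as $s_\alpha$ fixes $\alpha^\perp$ pointwise), so the apparent first-order poles of both operators cancel against the first-order zero of their denominators. One must also check that this cancellation is compatible with the extension-and-restriction recipe of Lemma~\ref{invar}. Once well-definedness is in place, the reduction to the rational condition is immediate, since the trigonometric-versus-rational discrepancy never depends on the multiplicity relation.
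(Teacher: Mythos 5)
Your proposal is correct, but it follows a genuinely different route from the paper's. The paper's proof is by direct computation: it re-runs the argument behind Theorem~\ref{invariance} (the rational computation of \cite{Feigin}) with the trigonometric Dunkl operators~\eqref{trigDunkl} in place of~\eqref{Dunkl}, extracting the same condition from the local analysis along $\pi$. You instead deduce the trigonometric statement from the rational one via the splitting $\nabla^{\rm tr}_\xi=\nabla_\xi+T_\xi$, where $T_\xi=-\sum_{\alpha\in R_+}c_\alpha(\alpha,\xi)\bigl(\tfrac{1}{1-e^{-(\alpha,x)}}-\tfrac{1}{(\alpha,x)}\bigr)(1-s_\alpha)+(\rho,\xi)$. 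The two points that make this legitimate, and which you identify, are: (a) trigonometric genericity of $x_0$ rules out $(\alpha,x)\in 2\boldsymbol{\pi}\boldsymbol{i}\Z\setminus\{0\}$ at every point of $Wx_0$, and it implies rational genericity (if $(\alpha,x_0)=0$ then $e^{(\alpha,x_0)}=1$, hence $\alpha\in\Span\Gamma_0^v$, hence the mirror of $s_\alpha$ contains $\pi$), so every coefficient of $T_\xi$ is analytic near $Wx_0$; and (b) an operator assembled from analytic multiplications, elements of $W$, and constants preserves $\mathcal{I}_{\Gamma_0}$ unconditionally. Hence $\nabla^{\rm tr}_\xi$ preserves $\mathcal{I}_{\Gamma_0}$ iff $\nabla_\xi$ does, and the rational statement finishes the proof. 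Your route buys economy and transparency: nothing is recomputed, and it becomes manifest that the invariance condition cannot distinguish the rational from the trigonometric case. The paper's direct route buys robustness: it is the argument that survives for the affine strata~\eqref{BCstrata} of Section~\ref{BC}, where the trigonometric operators genuinely have singularities at points with $e^{(\alpha,x)}=1$ but $(\alpha,x)\neq 0$ (shifts by $\boldsymbol{\pi}\boldsymbol{i}$) --- exactly the situation your comparison trick excludes.

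Two small repairs are needed in your sketch. First, you invoke ``Theorem~\ref{invariance} transported to the germ setting'' as an if-and-only-if, but the paper's Lemma~\ref{invar} (following \cite{Feigin}) supplies only the ``if'' half at germ level. The ``only if'' half does transport, by a one-line witness argument: if the relation fails, Theorem~\ref{invariance} gives a polynomial $p\in I_{\Gamma_0}$ and a direction $\xi$ with $\nabla_\xi p\notin I_{\Gamma_0}$; then $\nabla_\xi p$ is a polynomial not vanishing identically on some component $w(\pi)$ of $D_{\Gamma_0}$, hence its germ at $w(x_0)$ does not vanish on $D_{\Gamma_0}$, so the germ-level ideal is not $\nabla_\xi$-invariant. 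Second, your stated mechanism for that direction --- a germ in $\mathcal{I}_{\Gamma_0}$ whose Dunkl image ``acquires a genuine pole along $\pi$'' --- is not literally possible: as your own well-definedness discussion shows, Dunkl images of germs on $V$ never have poles (the apparent poles always cancel, either against the zero of $(1-s_\alpha)F$ on the mirror or by analyticity of the coefficient). Poles are the failure mode in the quotient picture of germs on $D_{\Gamma_0}$; in the kernel picture relevant to $\mathcal{I}_{\Gamma_0}$, the failure is non-vanishing on $D_{\Gamma_0}$, and the witness argument above is the correct replacement.
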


The proof is similar to the proof of Theorem~\ref{invariance}.

Assume that $\mathcal{I}_{\Gamma_0}$ is invariant under the trigonometric Cherednik algebra. Then the space $\mathcal{C}_{W x_0}(U)$ can be identified with the quotient module $\mathcal{C}_{W x_0}(V, U) / \mathcal{I}_{\Gamma_0}$. 
The generalised trigonometric spin Calogero--Moser Hamiltonians are then defined similarly to the rational case by
\begin{equation*}
    H_2 = \widetilde{\Res}_\pi\left(\sum\limits_{i=1}^N (\nabla^{\rm tr}_{e_i})^2\right).
\end{equation*} 
Higher integrals are defined by
\begin{equation}
\label{trighigherham}
    H_p = \widetilde{\rm Res}_{\pi} p(\nabla^{\rm tr}) \qquad 
    (p \in \mathbb{C}[x]^{W}).
\end{equation}
Note that here $p(\nabla^{\rm tr})$ is $W$-invariant by~\cite{Ch2} (see also~\cite{HeckmanSurvey}).

To compute the Hamiltonians, we use an explicit expression for the sum of squares of trigonometric Dunkl operators:
\begin{equation*}
    \sum\limits_{i = 1}^N (\nabla^{\rm tr}_{e_i})^2 = \sum\limits_{i = 1}^N \partial_{x_i}^2 - \sum\limits_{\alpha \in R_+} c_{\alpha} \coth\left(\frac{(\alpha,x)}{2}\right)\partial_{\alpha} + \sum\limits_{\alpha \in R_+} \frac{c_{\alpha}(\alpha, \alpha)}{4 \sinh^2\left(\frac{(\alpha, x)}{2}\right)}(1 - s_{\alpha}) + (\rho, \rho).
\end{equation*}
We thus get the following expression for the generalised trigonometric spin Calogero--Moser operator
\begin{equation}
\label{trigham}
    H_2 = \Delta_y - \sum\limits_{\substack{\alpha \in R_+ \\ \widehat{\alpha} \neq 0}} c_{\alpha} \coth\left(\frac{(\widehat{\alpha}, y)}{2}\right) \partial_{\widehat{\alpha}} + \sum\limits_{\substack{\alpha \in R_+ \\ \widehat{\alpha} \neq 0}} \frac{c_{\alpha} (\alpha, \alpha)}{4 \sinh^2 \left(\frac{(\widehat{\alpha},y)}{2}\right)}(1 - P_{\alpha}) + (\rho, \rho),
\end{equation}
where we use the same notations as in the rational case. 
To bring the Hamiltonian $H_2$ into the potential-gauge form, we proceed similarly as before.

\begin{theorem}\label{trigpot}
Let $\pi \subset V$ be an intersection of mirrors
\begin{equation*}
    \pi = \{x \in V \mid (\beta, x) = 0, \ \forall \beta \in \Gamma^v_0 \}
\end{equation*}
corresponding to the Coxeter subgraph $\Gamma_0 \subset \Gamma$. 
Define the generalised coupling constants as in the rational case
\begin{equation*}
    \widehat{c}_{\widehat{\alpha}} = \sum\limits_{\substack{\gamma \in R_{+} \\ \widehat{\gamma} = \widehat{\alpha}}} c_{\gamma}.
\end{equation*}
Then
\begin{align*}
    f^{-1} &H_2 f = \Delta_y - \sum\limits_{\widehat{\alpha} \in \widehat{R}_+ \setminus \{0\} }\frac{\widehat{c}_{\widehat{\alpha}} (\widehat{\alpha},\widehat{\alpha})\widehat{P}_{\widehat{\alpha}}}{4 \sinh^2 \left(\frac{(\widehat{\alpha},y)}{2}\right)} 
    \\
    &- \sum\limits_{\widehat{\alpha} \in \widehat{R}_+ \setminus \{0\} } \sum\limits_{\substack{\widehat\beta \in \widehat{R}_+ \setminus \{0\} \\\widehat{\beta} \sim \widehat{\alpha}}}\frac{\widehat{c}_{\widehat{\alpha}} \widehat{c}_{\widehat{\beta}}(\widehat{\alpha}, \widehat{\beta})}{4} \coth\left(\frac{(\widehat{\alpha},y)}{2}\right) \coth\left(\frac{(\widehat{\beta},y)}{2}\right)  + \lambda,
\end{align*}
where 
\begin{equation*}
    f = \prod_{\widehat\alpha \in \widehat{R}_{+} \setminus \{0\}}\left( \sinh\left(\frac{(\widehat{\alpha}, y)}{2}\right)\right)^{\widehat{c}_{\widehat{\alpha}}}, 
\end{equation*}
\begin{equation*}
    \lambda = (\rho, \rho) - \frac14 \sum_{\alpha \in R_+}
    \sum_{\substack{\beta \in R_+ \\  \widehat{\beta} \nsim \widehat{\alpha}}} c_{\alpha} c_{\beta} (\widehat{\alpha}, \widehat{\beta}) 
    = (\rho, \rho) - \frac14 \sum_{\widehat{\alpha} \in \widehat{R}_+}
    \sum_{\substack{\widehat{\beta} \in \widehat{R}_+ \\  \widehat{\beta} \nsim \widehat{\alpha}}} \widehat{c}_{\widehat{\alpha}} \widehat{c}_{\widehat{\beta}} (\widehat{\alpha}, \widehat{\beta}).
\end{equation*}
\end{theorem}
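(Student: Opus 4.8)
The plan is to repeat the conjugation computation of Theorem~\ref{potential gauge}, now with $\sinh$ and $\coth$ replacing the rational kernels, keeping track of the three kinds of terms that arise. Since $f$ is a scalar function of $y \in \pi$, it commutes with every $P_\alpha$ (which acts only on $U$), so the matrix potential and the constant $(\rho,\rho)$ are untouched by conjugation. I would use the elementary identities $f^{-1}\partial_{y_i} f = \partial_{y_i} + \partial_{y_i}\log f$ and $f^{-1}\Delta_y f = \Delta_y + 2(\nabla\log f)\cdot\nabla + \Delta_y\log f + |\nabla\log f|^2$, together with $\partial_{y_i}\log\sinh((\widehat\alpha,y)/2) = \tfrac12(\widehat\alpha)_i\coth((\widehat\alpha,y)/2)$, to record $\nabla\log f = \tfrac12\sum_{\widehat\alpha}\widehat c_{\widehat\alpha}\coth((\widehat\alpha,y)/2)\,\widehat\alpha$ and $\Delta_y\log f = -\tfrac14\sum_{\widehat\alpha}\widehat c_{\widehat\alpha}(\widehat\alpha,\widehat\alpha)/\sinh^2((\widehat\alpha,y)/2)$.

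First I would dispose of the first-order part. The term $2(\nabla\log f)\cdot\nabla$ equals $\sum_{\widehat\alpha}\widehat c_{\widehat\alpha}\coth((\widehat\alpha,y)/2)\,\partial_{\widehat\alpha}$, while the explicit first-order term of $H_2$, after grouping the roots $\gamma$ with a fixed projection $\widehat\gamma = \widehat\alpha$ (so that $\sum_{\widehat\gamma = \widehat\alpha} c_\gamma = \widehat c_{\widehat\alpha}$ and $\partial_{\widehat\gamma} = \partial_{\widehat\alpha}$), equals the same expression with the opposite sign; hence they cancel and $f^{-1}H_2 f$ has no first-order part. Next comes the matrix part: combining $\Delta_y\log f$ with the $(1-P_\alpha)$-potential and grouping by projected roots, the bracket for each $\widehat\alpha$ is $-\widehat c_{\widehat\alpha}(\widehat\alpha,\widehat\alpha) + \sum_{\widehat\gamma = \widehat\alpha} c_\gamma(\gamma,\gamma)(1-P_\gamma)$, which by the definition of $\widehat P_{\widehat\alpha}$ collapses to $-(\widehat\alpha,\widehat\alpha)\widehat c_{\widehat\alpha}\widehat P_{\widehat\alpha}$; this reproduces the first displayed sum exactly as in Theorem~\ref{potential gauge}.

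It remains to treat the scalar quadratic terms $|\nabla\log f|^2$ and $-\sum_\alpha c_\alpha\coth((\widehat\alpha,y)/2)\,\partial_{\widehat\alpha}\log f$. Using $\partial_{\widehat\alpha}\log f = \tfrac12\sum_{\widehat\beta}\widehat c_{\widehat\beta}(\widehat\alpha,\widehat\beta)\coth((\widehat\beta,y)/2)$ and grouping both indices, their sum is $-\tfrac14\sum_{\widehat\alpha,\widehat\beta}\widehat c_{\widehat\alpha}\widehat c_{\widehat\beta}(\widehat\alpha,\widehat\beta)\coth((\widehat\alpha,y)/2)\coth((\widehat\beta,y)/2)$. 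Splitting the double sum into the proportional part $\widehat\beta\sim\widehat\alpha$ and its complement $\widehat\beta\nsim\widehat\alpha$ yields, respectively, the second displayed sum and a leftover $M(y) = -\tfrac14\sum_{\widehat\beta\nsim\widehat\alpha}\widehat c_{\widehat\alpha}\widehat c_{\widehat\beta}(\widehat\alpha,\widehat\beta)\coth((\widehat\alpha,y)/2)\coth((\widehat\beta,y)/2)$.

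The main obstacle is proving that $M(y)$ is the constant $-\tfrac14\sum_{\widehat\beta\nsim\widehat\alpha}\widehat c_{\widehat\alpha}\widehat c_{\widehat\beta}(\widehat\alpha,\widehat\beta)$, which together with $(\rho,\rho)$ gives $\lambda$. I would argue this by a Liouville-type principle: on a suitable real form of $\pi$, $M$ is bounded (because $\coth\to\pm 1$ at infinity) and regular across every wall $\{(\widehat\alpha_0,y)=0\}$, hence constant, and its value is read off by sending $y$ to infinity inside the positive projected chamber, where every $\coth((\widehat\beta,y)/2)\to 1$. Regularity is the delicate step, and grouping by projected roots does \emph{not} suffice, since the projected configuration $\widehat R$ need not be closed under the reflections of $\pi$ about its own vectors. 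Instead I would compute the residue directly from the ungrouped sum over $\alpha,\beta\in R_+$, using $(\alpha,y)=(\widehat\alpha,y)$ for $y\in\pi$; the residue along $\{(\widehat\alpha_0,y)=0\}$ is proportional to $\sum_{\beta:\,\widehat\beta\nsim\widehat\alpha_0}c_\beta(\widehat\alpha_0,\widehat\beta)\coth((\beta,y)/2)$ restricted to the wall. Choosing a genuine root $\alpha_0\in R_+$ whose projection is proportional to $\widehat\alpha_0$, the reflection $s_{\alpha_0}\in W$ fixes this wall pointwise, preserves $R$ and the $W$-invariant multiplicity $c$, and sends $(\alpha_0,\cdot)\mapsto -(\alpha_0,\cdot)$; pairing $\beta$ with $s_{\alpha_0}\beta$ within each level set of $(\,\cdot\,,y)|_{\mathrm{wall}}$ shows that the weighted level-set sum of $c_\beta(\alpha_0,\beta)$ vanishes, and via $(s_{\alpha_0}\widehat\alpha_0,\beta) = (\widehat\alpha_0,\beta) - \tfrac{2(\widehat\alpha_0,\widehat\alpha_0)}{(\alpha_0,\alpha_0)}(\alpha_0,\beta)$ this forces the corresponding sum of $c_\beta(\widehat\alpha_0,\widehat\beta)$ to vanish as well, killing the residue. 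I expect the careful bookkeeping of the fibres of $\gamma\mapsto\widehat\gamma$ and of how $s_{\alpha_0}$ shuffles roots between the proportional and non-proportional classes to be the only genuinely technical point; once regularity is in hand the identification of $\lambda$ is immediate, and because these quadratic terms carry no action on $U$, this final step coincides with the scalar ($U$ trivial) computation.
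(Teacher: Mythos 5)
Your conjugation bookkeeping is correct and agrees with the paper's intermediate computation: the first-order terms cancel, $\Delta_y\log f$ combines with the $(1-P_\alpha)$ potential into $-\sum_{\widehat\alpha}\widehat c_{\widehat\alpha}(\widehat\alpha,\widehat\alpha)\widehat P_{\widehat\alpha}/(4\sinh^2)$ via the definition of $\widehat P_{\widehat\alpha}$, and the scalar quadratic terms assemble into $-\tfrac14\sum\widehat c_{\widehat\alpha}\widehat c_{\widehat\beta}(\widehat\alpha,\widehat\beta)\coth\coth$. The genuine gap is in your Liouville step. Liouville's theorem requires $M(y)$ to be entire and bounded on the \emph{complexification} of $\pi$; boundedness and regularity ``on a suitable real form'' prove nothing, since a bounded real-analytic function regular on $\mathbb{R}^n$ (e.g.\ $\tanh y_1$) need not be constant. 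On the complexification, each factor $\coth((\widehat\beta,y)/2)$ has poles along the whole family of hyperplanes $(\widehat\beta,y)\in 2\boldsymbol{\pi}\boldsymbol{i}\Z$, and your residue cancellation addresses only the hyperplanes through the origin. For $(\widehat\alpha_0,y)=2\boldsymbol{\pi}\boldsymbol{i}k$ with $k\neq 0$, the reflection $s_{\alpha_0}$ does \emph{not} fix the hyperplane pointwise (it translates points by $\frac{4\boldsymbol{\pi}\boldsymbol{i}k}{(\alpha_0,\alpha_0)}\alpha_0$), so the claim that $\coth$ values agree within an $s_{\alpha_0}$-pair now requires
\begin{equation*}
\coth\left(\frac{(s_{\alpha_0}\beta,y)}{2}\right) = \coth\left(\frac{(\beta,y)}{2} - \boldsymbol{\pi}\boldsymbol{i}k\,\frac{2(\beta,\alpha_0)}{(\alpha_0,\alpha_0)}\right) = \coth\left(\frac{(\beta,y)}{2}\right),
\end{equation*}
i.e.\ the integrality $\frac{2(\beta,\alpha_0)}{(\alpha_0,\alpha_0)}\in\Z$ combined with the $\boldsymbol{\pi}\boldsymbol{i}$-periodicity of $\coth$. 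This is exactly where the standing hypothesis that $R$ is crystallographic enters — the paper's identity \eqref{trigeq} is formulated and proved for all $k\in\Z$ for precisely this reason — and it is entirely absent from your argument; without it the poles away from the real form are uncontrolled and constancy does not follow.

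There is also a smaller defect even at $k=0$: your $s_{\alpha_0}$-pairing, as written, does not kill the residue. Substituting $\beta\mapsto s_{\alpha_0}\beta$ in $\Sigma=\sum_\beta c_\beta(\widehat\alpha_0,\widehat\beta)\coth((\beta,y)/2)$ and using your identity $(s_{\alpha_0}\widehat\alpha_0,\beta)=(\widehat\alpha_0,\beta)-\frac{2(\widehat\alpha_0,\widehat\alpha_0)}{(\alpha_0,\alpha_0)}(\alpha_0,\beta)$ gives $\Sigma=\Sigma-\frac{2(\widehat\alpha_0,\widehat\alpha_0)}{(\alpha_0,\alpha_0)}\sum_\beta c_\beta(\alpha_0,\beta)\coth((\beta,y)/2)$, which proves $\sum_\beta c_\beta(\alpha_0,\beta)\coth((\beta,y)/2)=0$ but yields no information about $\Sigma$ itself. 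Since $(\widehat\alpha_0,\widehat\beta)=(\alpha_0,\beta)-(v,\beta)$ with $v=\alpha_0-\widehat\alpha_0\in\Span\Gamma_0^v$, you must additionally show $\sum_\beta c_\beta(v,\beta)\coth((\beta,y)/2)=0$, which follows from the same pairing applied to the reflections generating $W_0$. The paper obtains both cancellations simultaneously by decomposing the non-proportional roots into orbits of $W_\alpha=\langle W_0,s_\alpha\rangle$: each orbit sum $b_i$ is $W_\alpha$-invariant, hence lies in $\pi$ and is orthogonal to $\alpha$, so $(\widehat\alpha,\widehat b_i)=0$. With these two repairs — the crystallographic periodicity for $k\neq 0$ and the $W_0$-pairing — your argument becomes essentially the paper's proof.
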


\begin{proof}
We compute that
\begin{align}
    f^{-1} &H_2 f = \Delta_y - \sum\limits_{\substack{\alpha \in R_+ \\ \widehat{\alpha} \neq 0}}\frac{c_{\alpha}\big((\widehat{\alpha}, \widehat{\alpha}) + (\alpha, \alpha) (P_{\alpha}-1)\big)}{4 \sinh^2 \left(\frac{(\widehat{\alpha}, y)}{2}\right)} \label{form1}
    \\
     & - \frac{1}{4} \sum\limits_{\substack{\alpha \in R_+ \\ \widehat{\alpha} \neq 0}} \sum\limits_{\substack{ \beta \in R_+ \\ \widehat{\beta} \neq 0 }} c_{\alpha} c_{\beta} (\widehat{\alpha}, \widehat{\beta}) \coth\left(\frac{(\widehat{\alpha}, y)}{2}\right) \coth\left(\frac{(\widehat{\beta}, y)}{2}\right) + (\rho, \rho). \nonumber
\end{align}
Next, we simplify the last sum in \eqref{form1}. We will use a trigonometric analogue of \cite[Equality (10)]{Feigin}.
Namely, we claim for all $\alpha \in R_+$ that
\begin{equation}
\label{trigeq}
    \sum\limits_{\substack{\beta \in R_+ \\ \widehat{\beta} \nsim \widehat{\alpha}}} 
    c_{\beta} (\widehat{\alpha},\widehat{\beta}) \coth\left(\frac{(\widehat{\beta}, y)}{2}\right) = 0
\end{equation}
for $y \in \pi$ with\footnote{We use $\boldsymbol{\pi}$ for the numerical constant and $\pi$ for planes. Likewise, we use $\boldsymbol{i}$ for the imaginary unit to avoid any confusion with the indices.} $(\widehat{\alpha}, y) = 2  \boldsymbol{\pi} \boldsymbol{i}k$ for $k \in \mathbb{Z}$, which can be seen as follows. 

Let $W_{\alpha} = \langle W_0 , s_{\alpha} \rangle$ be the group generated by $s_{\alpha}$ and reflections about the hyperplanes orthogonal to the roots in $\Gamma_0^v$. Let $S \subset R$ be the set of the roots $\beta \in R$ such that $\widehat{\beta}$ is not proportional to $\widehat{\alpha}$. Decompose $S$ into $W_{\alpha}$-orbits $S = \mathcal{O}_1 \sqcup \dots \sqcup \mathcal{O}_r $.
We will show that
\begin{equation}
    \label{trigeq2}
    \sum\limits_{\beta \in \mathcal{O}_i} c_{\beta} (\widehat{\alpha}, \widehat{\beta}) \coth\left(\frac{(\widehat{\beta}, y)}{2}\right) = 0
\end{equation}
for all $i$. Let $\beta_1, \beta_2 \in \mathcal{O}_i$. Then $\coth(\frac{(\widehat{\beta}_1, y)}{2}) = \coth(\frac{(\widehat{\beta}_2, y)}{2})$. Indeed, this is evident if $\beta_1 = s_0 \beta_2$ for $s_0 \in W_0$; and if $\beta_1 = s_{\alpha} \beta_2$, then
\begin{equation*}
\begin{array}{c}
\displaystyle{
    \coth\left(\frac{(\widehat{s_{\alpha} \beta_2}, y)}{2}\right) = \coth\left(\frac{(s_{\alpha} \beta_2 , y)}{2}\right) = \coth\left(\frac{ (\beta_2 , s_{\alpha} y)}{2}\right) 
    }
    \\ \ \\
    \displaystyle{
     = \coth \left(\frac{(\widehat{\beta_2}, y)}{2} - \boldsymbol{\pi} \boldsymbol{i} k \frac{2 (\beta_2 , \alpha)}{(\alpha, \alpha)} \right) 
     = \coth \left(\frac{(\widehat{\beta_2}, y)}{2}\right),
    }
    \end{array}
\end{equation*}
where we used that $\frac{2 (\beta_2 , \alpha)}{(\alpha, \alpha)} \in \mathbb{Z}$. Now let
$
    b_i = \sum_{\beta \in \mathcal{O}_i} \beta.
$
It satisfies the relations $s_0 b_i = b_i$ for $s_0 \in W_0$ and $s_{\alpha} b_i = b_i$. This translates into $\widehat{b}_i = b_i$ and $0= (\alpha, b_i) = (\widehat{\alpha}, \widehat{b}_i)$, which implies \eqref{trigeq2} and hence \eqref{trigeq}. 

Relation \eqref{trigeq} implies that the expression
\begin{equation*}
   \sum_{\alpha \in R_+}
    \sum_{\substack{\beta \in R_+ \\  \widehat{\beta} \nsim \widehat{\alpha}}} c_{\alpha} c_{\beta} (\widehat{\alpha}, \widehat{\beta}) \coth\left(\frac{(\widehat{\alpha}, y)}{2}\right) \coth\left(\frac{(\widehat{\beta}, y)}{2}\right)
\end{equation*}
has no poles; thus it is an entire bounded function that is constant due to Liouville's theorem. The constant can be calculated to be $4 (\rho, \rho) - 4\lambda$ by a limit at infinity in a suitable chamber such that $\coth (\widehat{\alpha}, y) \to 1$ for all $\widehat{\alpha}$.
Using this fact to simplify expression~\eqref{form1}, the proof is then completed by using~\eqref{P hat} and the definition of $\widehat{c}_{\widehat{\alpha}}$.
\end{proof}

\begin{remark}
Let us assume that any collinear vectors in $\widehat{R}_+ \setminus \{0\}$ are of the form $\widehat{\alpha}$, $2 \widehat{\alpha}$. This is the case for the projections of all classical root systems. Then the operator $f^{-1}H_2 f$ from Theorem~\ref{trigpot} becomes (up to a constant, namely, $(\rho, \rho) - (\widehat{\rho}, \widehat{\rho})$) equal to
\begin{equation}
\label{simpleform}
    L \coloneqq \Delta_y - \sum\limits_{\widehat{\alpha} \in \widehat{R}_+ \setminus \{0\}}
    \frac{\widehat{c}_{\widehat{\alpha}} (\widehat{c}_{\widehat{\alpha}} + 2 \widehat{c}_{2\widehat{\alpha}} + \widehat{P}_{\widehat{\alpha}})(\widehat{\alpha},\widehat{\alpha})}{4 \sinh^2 \left(\frac{(\widehat{\alpha},y)}{2}\right)},
\end{equation}
where $\widehat{c}_{2\widehat{\alpha}} \coloneqq 0$ when $2\widehat{\alpha} \notin \widehat{R}_+$.
However, the above assumption may fail for exceptional root systems, for example, for $G_2$ and $F_4$ in some cases. 
\end{remark}

\subsection{Restricted operators from classical root systems}
In this section, we discuss the generalised trigonometric spin Calogero--Moser Hamiltonians in the case where $R$ is a classical root system. 
The resulting Hamiltonians depend only on the projected root system itself, except potentially for the definition of the operators $\widehat{P}_{\widehat{\alpha}}$. Though, for the reflection representation, the latter operators stand just for reflections about the hyperplanes orthogonal to projected roots. 
We have described projections of the root systems of types $A$, $B$, and~$D$ in the rational case, so we discuss these types only briefly in this section. However, in the trigonometric case, we also need to consider type $C$, which will be discussed in more detail.

\subsubsection{$A$-series}\label{trig type A}
Here we construct a deformed version of trigonometric spin Calogero--Moser Hamiltonians using the trigonometric Cherednik
algebra of type $GL_N$. The Hamiltonians of the \emph{non-deformed} trigonometric spin Calogero--Moser system are obtained by restricting symmetric functions of the Dunkl operators~\eqref{trigDunkl} for $R_+ = A_{N-1}^+$, $\xi = e_i$ ($i=1, \dots, N$), to the elements of the polynomial representation tensored with $U$ that are symmetric under simultaneous permutation in variables $x$ and spin variables.

Similarly to Section~\ref{rat type A}, for $c = \frac{1}{k}$, the trigonometric Cherednik algebra for $GL_N$ admits parabolic submodules $\mathcal{I}_{\Gamma_0}$ for the strata \eqref{strata}.  
Using formula~\eqref{simpleform}, we get that the corresponding deformed version of the trigonometric spin Calogero--Moser Hamiltonian is
\begin{align*}
    L &= 
    \sum\limits_{i = 1}^{m+n} \partial_{y_i}^2  
    - \sum\limits_{m+1 \leq  i < j \leq m+n} \frac{1 + k\widetilde{P}_{ij}}{2k^2\sinh^2(\frac{y_i - y_j}{2})} \\
    &\ - \sum\limits_{i = 1}^m\sum\limits_{j = m+1}^{m+n}
     \frac{1 + \widetilde{P}_{ij}}{2k\sinh^2\left(\frac{y_i - \sqrt{k}y_j}{2\sqrt{k}}\right)}  -
    \sum\limits_{1 \leq i < j \leq m} \frac{k + \widetilde{P}_{ij}}{2k\sinh^2\left(\frac{y_i - y_j}{2\sqrt{k}}\right)},
\end{align*}
where we use the same notations as in Proposition~\ref{rat A prop}.

\subsubsection{$B$-series}
The most general projection of the root system of type~$B$ to consider is \eqref{projroot} for which, by using formula \eqref{simpleform}, the generalised trigonometric spin Calogero--Moser operator is
\begin{align}
    L &= \sum_{i=1}^{m+n} \partial_{y_i}^2 -  \sum\limits_{m+1 \leq i < j \leq m+n} \frac{c_1}{2} \left( \frac{c_1 + \widehat{P}_{\widetilde{e}_i - \widetilde{e}_j}}{\sinh^2(\frac{y_i - y_j}{2})} + \frac{c_1 + \widehat{P}_{\widetilde{e}_i + \widetilde{e}_j}}{\sinh^2(\frac{y_i + y_j}{2})} \right) \label{trigHamB}
    \\
    &\quad -\sum\limits_{1 \leq i < j \leq m}\frac{c_1 k}{2}  \left( \frac{c_1 k^2 + \widehat{P}_{\frac{\widetilde{e}_i - \widetilde{e}_j}{\sqrt{k}}}}{\sinh^2(\frac{y_i - y_j}{2\sqrt{k}})} + \frac{c_1 k^2 + \widehat{P}_{\frac{\widetilde{e}_i + \widetilde{e}_j}{\sqrt{k}}}}{\sinh^2(\frac{y_i + y_j}{2\sqrt{k}})} \right) 
    \nonumber \\
    &\quad - \sum\limits_{i = 1}^m \sum\limits_{j = m+1}^{m+n} \frac{c_1(k+1)}{4}\left( \frac{c_1 k +  \widehat{P}_{\frac{\widetilde{e}_i}{\sqrt{k}}-\widetilde{e}_j}}{\sinh^2(\frac{y_i-\sqrt{k}y_j}{2\sqrt{k}})} + \frac{c_1 k + \widehat{P}_{\frac{\widetilde{e}_i}{\sqrt{k}}+\widetilde{e}_j}}{\sinh^2(\frac{y_i + \sqrt{k}y_j}{2\sqrt{k}})} \right) 
    \nonumber \\
    &\quad -\sum\limits_{i = m+1}^{m+n} \frac{(2c_1 l + c_2)^2 + (2 c_1 l + c_2)\widehat{P}_{\widetilde{e}_i}}{4\sinh^2(\frac{y_i}{2})} 
    - \sum\limits_{i = 1}^m \frac{(k-1)^2 + 2(k-1) \widehat{P}_{\frac{2 \widetilde{e}_i}{\sqrt{k}}}}{4k\sinh^2(\frac{y_i}{\sqrt{k}})} 
    \nonumber \\
    &\quad -\sum\limits_{i = 1}^m \frac{(2 c_1 l + c_2)\left(2 c_1 kl +  c_2 k + k-1 + \widehat{P}_{\frac{\widetilde{e}_i}{\sqrt{k}}}\right)}{4\sinh^2(\frac{y_i}{2\sqrt{k}})}. \nonumber
\end{align}

\subsubsection{$C$-series}
A positive half of the root system $C_N$ is $C_N^+ = \{ e_i \pm e_j \mid 1 \leq i < j \leq N\} \cup \{ 2e_i  \mid i= 1, \dots, N\}$.
Let $c_1$ be the multiplicity of the roots $e_i \pm e_j$ and $c_2$ the multiplicity of the roots $2e_i$.

The most general parabolic strata leading to an invariant parabolic submodule in this case correspond to $\pi_{m,k;\, l} \subset \mathbb{C}^N$ for $mk+l \leq N$ defined by the equations
\begin{align*}
    &x_1 = x_2 = \dots = x_k, \ x_{k+1} = x_{k + 2} = \dots = x_{2k}, \ \dots, \\
    &x_{(m-1)k+1} = x_{(m-1)k+2} = \dots = x_{mk}, \\
    &x_{N - l + 1} = x_{N - l + 2} = \dots = x_{N} = 0,    
\end{align*}
where we allow $k=1$ or $l=0$, which means that the corresponding relations are absent.

For the corresponding $\mathcal{I}_{\Gamma_0}$ to be invariant,
the multiplicities must satisfy the following. When $k > 1$ and $l \neq 0$, we must have $c_1 = \frac{1}{k}$ and $c_2 = \frac{1}{2} - \frac{l - 1}{k}$. When $k > 1$ and $l = 0$, we must have $c_1 = \frac{1}{k}$ and there is no restriction on $c_2$. When $k = 1$ and $l \neq 0$, we must have $2c_1(l-1) + 2c_2 = 1$.


The corresponding projected root system (the non-zero vectors in the projection of $C_N^+$ onto~$\pi_{m,k; \, l}$) is 
\begin{align*}
    C^+_{m,n; \, k} &= \{ \widetilde{e}_i \pm \widetilde{e}_j \mid m+1 \leq i < j \leq m+n\} \\
    &\quad \cup \{\tfrac{\widetilde{e}_i}{\sqrt{k}} \pm \widetilde{e}_j \mid 1 \leq i \leq m, \ m+1 \leq j \leq m+n \}  
    \\
    &\quad \cup\{\tfrac{\widetilde{e}_i \pm \widetilde{e}_j}{\sqrt{k}} \mid 1 \leq i < j \leq m \} \cup 
    \{\widetilde{e}_i \mid m+1 \leq i \leq m+n \}
    \\
    &\quad \cup \{2\widetilde{e}_i \mid m+1 \leq i \leq m+n\} \cup \{\tfrac{\widetilde{e}_i}{\sqrt{k}} \mid 1 \leq i \leq m \} \\
    &\quad \cup \{\tfrac{2 \widetilde{e}_i}{\sqrt{k}} \mid 1 \leq i \leq m \},
\end{align*}
where $n = N - mk - l$, the vectors $\widetilde{e}_i$ are defined as in Section~\ref{rat type B}, and the respective generalised multiplicities are 
\begin{alignat*}{4}
    &\widehat{c}_{\widetilde{e}_i \pm \widetilde{e}_j} = c_1, \qquad 
    &&\widehat{c}_{(\sqrt{k})^{-1}\widetilde{e}_i \pm \widetilde{e}_j} = c_1 k, \qquad 
    &&\widehat{c}_{(\sqrt{k})^{-1}(\widetilde{e}_i \pm \widetilde{e}_j)} = c_1 k^2, \qquad 
    &&\widehat{c}_{\widetilde{e}_i} = 2c_1 l , \\
    &\widehat{c}_{2\widetilde{e}_i} = c_2, \qquad 
    &&\widehat{c}_{(\sqrt{k})^{-1}\widetilde{e}_i} = 2c_1 kl, \qquad
    &&\widehat{c}_{2(\sqrt{k})^{-1}\widetilde{e}_i} = \frac{k-1}{2} + c_2k. 
\end{alignat*}
Any vector with zero multiplicity is thought of as excluded from $C_{m,n; \, k}^+$.
This projection leads to a case of either the undeformed or deformed $BC$-type root system. 

Using formula~\eqref{simpleform}, the Hamiltonian corresponding to $\pi_{m,k; \, l}$ is
\begin{align*}
    L &= \sum_{i=1}^{m+n} \partial_{y_i}^2 
    - \sum\limits_{m+1 \leq i < j \leq m+n} \frac{c_1}{2} \left(\frac{c_1 + \widehat{P}_{\widetilde{e}_i - \widetilde{e}_j}}{\sinh^2 (\frac{y_i - y_j}{2})} + \frac{c_1 + \widehat{P}_{\widetilde{e}_i + \widetilde{e}_j}}{\sinh^2 (\frac{y_i + y_j}{2})} \right) 
    \\
    &\quad - \sum\limits_{1 \leq i < j \leq m} \frac{c_1 k}{2} \left(\frac{c_1k^2 +  \widehat{P}_{\frac{\widetilde{e}_i - \widetilde{e}_j}{\sqrt{k}}}}{\sinh^2(\frac{y_i - y_j}{2\sqrt{k}})} + \frac{c_1k^2 +  \widehat{P}_{\frac{\widetilde{e}_i + \widetilde{e}_j}{\sqrt{k}}}}{\sinh^2(\frac{y_i + y_j}{2\sqrt{k}})} \right) 
    \\
    &\quad - \sum\limits_{i = 1}^m \sum\limits_{j = m+1}^{m+n} \frac{c_1(k+1)}{4} \left( \frac{c_1 k +  \widehat{P}_{\frac{\widetilde{e}_i}{\sqrt{k}} -\widetilde{e}_j}}{\sinh^2(\frac{y_i - \sqrt{k} y_j}{2 \sqrt{k}})} + 
    \frac{c_1 k +  \widehat{P}_{\frac{\widetilde{e}_i}{\sqrt{k}} + \widetilde{e}_j}}{\sinh^2(\frac{y_i + \sqrt{k} y_j}{2 \sqrt{k}})}\right) 
    \\
    &\quad - \sum\limits_{i = m+1}^{m+n}  \frac{c_1l(2 c_1 l + 2 c_2  +  \widehat{P}_{\widetilde{e}_i})}{2 \sinh^2(\frac{y_i}{2})} 
    -   \sum\limits_{i = 1}^m \frac{ c_1 l \left( 2c_1 kl + k-1 + 2c_2 k + \widehat{P}_{\frac{\widetilde{e}_i}{\sqrt{k}}} \right)}{2 \sinh^2 (\frac{y_i}{2 \sqrt{k}})} 
    \\
    &\quad - \sum\limits_{i = m+1}^{m+n}\frac{c_2^2 + c_2 \widehat{P}_{2\widetilde{e}_i}}{\sinh^2(y_i)} 
    - \sum\limits_{i = 1}^m\frac{(k-1 + 2 c_2k) \left(k-1 + 2c_2k+2\widehat{P}_{\frac{2 \widetilde{e}_i}{\sqrt{k}}}\right)}{4k \sinh^2 (\frac{y_i}{\sqrt{k}})} ,
\end{align*}
where the variables $y_i$ are defined as in Section~\ref{rat type B}.

\subsubsection{$D$-series}
In this case, the Hamiltonians that we get are of the form \eqref{trigHamB} with multiplicities $c_1$ and $c_2$ as described in Section~\ref{D series} for the strata $\pi_{m,k}^+$, $\pi_p$, and $\pi_{m,k}$, respectively, and $\widehat{P}_\alpha$ defined by formula~\eqref{P hat}.

\subsection{Examples for exceptional root systems}
In this section, we consider all two-dimensional
projections of the exceptional root systems of types $E$ and~$F_4$, one higher-dimensional projection of $E_7$, and we comment on a one-dimensional projection of $G_2$.

\subsubsection{Type $E$}

\hfill \newline

\noindent (i) $(E_8 , A_3 \times A_3)$  \\
The root system $E_8 \subset \mathbb{R}^8$ has simple roots

\begin{equation}
\label{E_8 root}
\begin{array}{c}
\displaystyle{
    \alpha_1 = \frac{1}{2}(e_1 - e_2 - e_3 - e_4 - e_5 - e_6 - e_7+ e_8), \ \alpha_2 = e_1 + e_2,  
    }
    \\ \ \\
    \displaystyle{
    \alpha_3 = e_2 - e_1, \ \alpha_4 = e_3 - e_2, \ \alpha_5 = e_4 - e_3, \ \alpha_6 = e_5 - e_4, 
    }
    \\ \ \\
    \displaystyle{
    \alpha_7 = e_6 - e_5, \ \alpha_8 = e_7 - e_6,
    }
\end{array}
\end{equation}
where $\{e_i\}_{i=1}^8$ is an orthonormal basis in $\mathbb{R}^8$. Its Dynkin diagram is
\begin{equation*}
\begin{tikzpicture}
\filldraw [red] (0,0) circle (3pt) node[below right, black] {$\alpha_4$} ;
\filldraw [red] (-1.5,0) circle (3pt) node[below, black]  {$\alpha_3$};
\filldraw [black] (-3,0) circle (3pt) node[below,black] {$\alpha_1$};
\filldraw [black] (1.5,0) circle (3pt) node[below,black] {$\alpha_5$};
\filldraw [red] (3,0) circle (3pt) node[below,black] {$\alpha_6$};
\filldraw [red] (0,-1.5) circle (3pt) node[right,black] {$\alpha_2$};
\filldraw [red] (4.5,0) circle (3pt) node[below, black] {$\alpha_7$};
\filldraw [red] (6,0) circle (3pt) node[below, black] {$\alpha_8$};
\draw[thick] (-3,0) -- (-1.6,0);
\draw[thick][red] (-1.5,0) -- (0,0);
\draw[thick][red] (0,0) -- (0,-1.5);
\draw[thick][red] (3, 0) -- (6, 0);
\draw[thick][black] (0.1,0) -- (2.9, 0);

\end{tikzpicture}
\end{equation*}
where the red vertices and edges indicate the chosen subgraph $\Gamma_0 \simeq A_3 \times A_3$. The corresponding plane $\pi$ is defined by the equations
\begin{equation*}
    x_1 = x_2 = x_3 = 0, \ x_4 = x_5 = x_6 = x_7,
\end{equation*}
and the multiplicity must be $c = \frac{1}{4}$. The projected system is two-dimensional and is shown in the following diagram, where the coordinates are in the basis formed by $\widetilde{e}_1 = \frac12 (e_4 + e_5 + e_6 + e_7)$ and $\widetilde{e}_2 = e_8$:
\begin{figure}[H]
\includegraphics[scale=0.6]{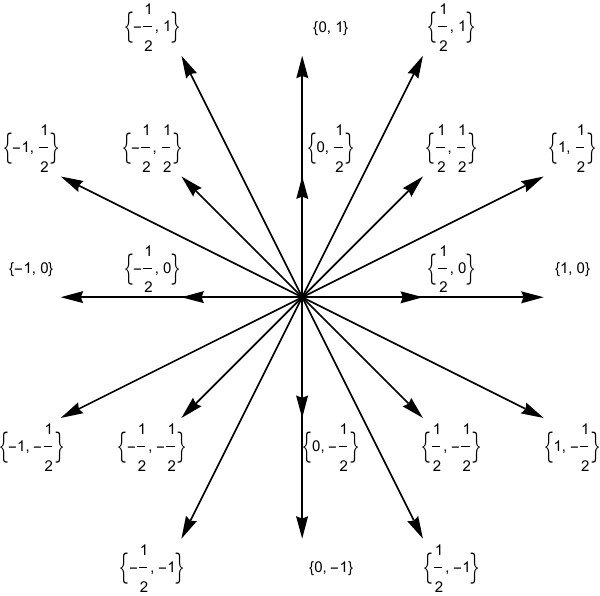}
\end{figure}
\noindent The multiplicities of the positive half are
\begin{equation*}
    \widehat{c}_{\frac{\widetilde{e}_i}{2}} = 6, \quad 
    \widehat{c}_{\widetilde{e}_i} = \frac32, \quad 
    \widehat{c}_{\pm \frac{\widetilde{e}_1}{2} + \frac{\widetilde{e}_2}{2}} = 4, \quad
    \widehat{c}_{\pm \widetilde{e}_1 + \frac{\widetilde{e}_2}{2}} = 1, \quad 
    \widehat{c}_{\pm \frac{\widetilde{e}_1}{2} + \widetilde{e}_2} = 1. 
\end{equation*}

By formula~\eqref{simpleform}, the corresponding generalised trigonometric spin Calogero--Moser Hamiltonian is
\begin{align*}
    L &= \partial_{y_1}^2 + \partial_{y_2}^2 - \sum\limits_{i = 1}^2 \left( \frac{27 + 3 \widehat{P}_{\frac{\widetilde{e}_i}{2} }}{8 \sinh^2(\frac{y_i}{4})} + \frac{9 + 6 \widehat{P}_{\widetilde{e}_i}}{16 \sinh^2(\frac{y_i}{2})} \right) - \frac{4 +  \widehat{P}_{\frac{\widetilde{e}_2 - \widetilde{e}_1}{2}}}{2 \sinh^2(\frac{y_2 - y_1}{4})} 
    \\
    &\quad - \frac{4 +  \widehat{P}_{\frac{\widetilde{e}_1 + \widetilde{e}_2}{2}}}{2 \sinh^2(\frac{y_1 + y_2}{4})} -
    \frac{5 + 5\widehat{P}_{ \frac{\widetilde{e}_2}{2}- \widetilde{e}_1}}{16 \sinh^2(\frac{y_2 - 2y_1}{4})} -
    \frac{5 + 5\widehat{P}_{\widetilde{e}_1 + \frac{\widetilde{e}_2}{2}}}{16 \sinh^2(\frac{2y_1 + y_2}{4})}  
    \\
    & \quad - \frac{5 + 5\widehat{P}_{\widetilde{e}_2 -  \frac{\widetilde{e}_1}{2}}}{16 \sinh^2(\frac{2y_2-y_1}{4})} - 
    \frac{5 + 5\widehat{P}_{ \frac{\widetilde{e}_1}{2} + \widetilde{e}_2}}{16 \sinh^2(\frac{y_1 + 2y_2}{4})},
\end{align*}
where $y_1 = \frac12 (x_4 + x_5 + x_6 + x_7)$ and $y_2 = x_8$. This operator is a trigonometric version with spin of the potential-gauge form of the operator~\cite[Formula (28)]{Feigin} with $m = \frac{15}{2}$, $n=4$, and $\alpha = (\sqrt{2n+1} + \sqrt{2(m+n+1)})/\sqrt{2m+1} = 2$.

The other possible choice of a subgraph $\Gamma_0 \simeq A_3 \times A_3$ in the Dynkin diagram of~$E_8$ leads to an equivalent projected configuration.

\hfill \newline
\noindent (ii) $(E_8 , A_6)$\\
As another example, let us choose a subgraph $A_6$ in the Dynkin diagram of $E_8$ as in the following picture 

\begin{equation*}
\begin{tikzpicture}
\filldraw [red] (0,0) circle (3pt) node[below right, black] {$\alpha_4$} ;
\filldraw [red] (-1.5,0) circle (3pt) node[below, black]  {$\alpha_3$};
\filldraw [black] (-3,0) circle (3pt) node[below,black] {$\alpha_1$};
\filldraw [red] (1.5,0) circle (3pt) node[below,black] {$\alpha_5$};
\filldraw [red] (3,0) circle (3pt) node[below,black] {$\alpha_6$};
\filldraw [black] (0,-1.5) circle (3pt) node[right,black] {$\alpha_2$};
\filldraw [red] (4.5,0) circle (3pt) node[below, black] {$\alpha_7$};
\filldraw [red] (6,0) circle (3pt) node[below, black] {$\alpha_8$};
\draw[thick] (-3,0) -- (-1.6,0);
\draw[thick][red] (-1.5,0) -- (0,0);
\draw[thick][black] (0,-0.1) -- (0,-1.5);
\draw[thick][red] (3, 0) -- (6, 0);
\draw[thick][red] (0, 0) -- (3, 0);

\end{tikzpicture}
\end{equation*}
The corresponding plane $\pi$ is defined by the equations
\begin{equation*}
    x_1 = x_2 = \dots = x_7, 
\end{equation*}
with multiplicity $c = \frac{1}{7}$. The corresponding projected system is shown in the following diagram, where the coordinates are in the basis formed by $\widetilde{e}_1 = \frac{1}{\sqrt{7}}(e_1 + \dots + e_7)$ and $\widetilde{e}_2 = e_8$:

\begin{figure}[H]
\includegraphics[scale = 0.6]{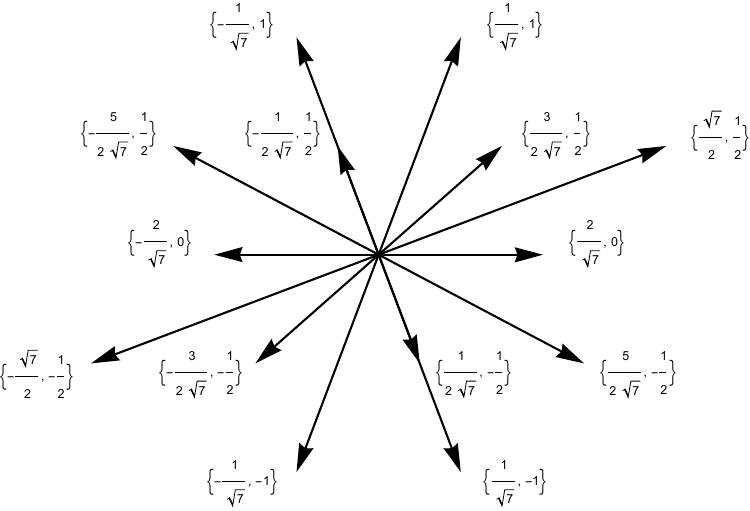}
\end{figure}
This configuration resembles the configurations $G_2$ and $AG_2$, but it has exactly one line $\ell$ containing collinear vectors, and the configuration is scaled in the orthogonal direction to $\ell$ compared to $G_2$ and $AG_2$. The multiplicities are given by
\begin{equation*} 
    \widehat{c}_{\frac{\widetilde{e}_2}{2} - \frac{5 \widetilde{e}_1}{2\sqrt{7}}} =
    \widehat{c}_{\widetilde{e}_2 \pm \frac{\widetilde{e}_1}{\sqrt{7}}} = 1, \quad
    \widehat{c}_{\frac{\widetilde{e}_2}{2}-\frac{\widetilde{e}_1}{2\sqrt{7}}} = 5, \quad
    \widehat{c}_{\frac{3\widetilde{e}_1}{2\sqrt{7}} + \frac{\widetilde{e}_2}{2}} = 
    \widehat{c}_{\frac{2\widetilde{e}_1}{\sqrt{7}}} = 3, \quad
    \widehat{c}_{\frac{\sqrt{7}\widetilde{e}_1}{2} + \frac{\widetilde{e}_2}{2}} = \frac17. 
\end{equation*}

By formula~\eqref{simpleform}, the corresponding generalised trigonometric spin Calogero--Moser Hamiltonian is
\begin{align*}
    L &= \partial_{y_1}^2 + \partial_{y_2}^2 - \frac{35 + 5 \widehat{P}_{\frac{\sqrt{7}\widetilde{e}_2 - \widetilde{e}_1}{2\sqrt{7}}}}{14 \sinh^2(\frac{\sqrt{7}y_2 - y_1}{4 \sqrt{7}})} - \frac{2 + 2 \widehat{P}_{\widetilde{e}_2 - \frac{\widetilde{e}_1}{\sqrt{7}}}}{7 \sinh^2(\frac{\sqrt{7}y_2 - y_1}{2\sqrt{7}})}
    \\
    &\quad - \frac{2 + 2 \widehat{P}_{\frac{\widetilde{e}_1}{\sqrt{7}} + \widetilde{e}_2}}{7 \sinh^2(\frac{y_1+\sqrt{7} y_2}{2 \sqrt{7}})} - \frac{2 + 2\widehat{P}_{\frac{\sqrt{7}\widetilde{e}_2 - 5 \widetilde{e}_1}{2 \sqrt{7}}}}{7 \sinh^2(\frac{\sqrt{7}y_2 - 5 y_1}{4 \sqrt{7}})} - \frac{9 + 3 \widehat{P}_{\frac{2\widetilde{e}_1}{\sqrt{7}}}}{7 \sinh^2(\frac{y_1}{\sqrt{7}})}
    \\
    &\quad - \frac{1 + 7 \widehat{P}_{\frac{\sqrt{7}\widetilde{e}_1 + \widetilde{e}_2}{2}}}{98 \sinh^2(\frac{\sqrt{7}y_1 + y_2}{4})} - 
    \frac{9 + 3 \widehat{P}_{\frac{3 \widetilde{e}_1 + \sqrt{7}\widetilde{e}_2}{2 \sqrt{7}}}}{7 \sinh^2(\frac{3 y_1 + \sqrt{7}y_2}{4 \sqrt{7}})},
\end{align*}
where $y_1 = \frac{1}{\sqrt{7}}(x_1 + \dots + x_7)$ and $y_2 = x_8$.

Other choices of a subgraph $\Gamma_0 \simeq A_6$ in the Dynkin diagram of~$E_8$ lead to equivalent projected configurations.
The subgraph $\Gamma_0 \simeq E_6$ leads to $G_2$ as in table~\cite[p.~272]{Feigin} (the multiplicity $\frac{23}{12}$ should be $\frac{27}{12}$). And $\Gamma_0 \simeq D_6$ leads to $BC_2$ with multiplicities $\frac{16}{5}$, $\frac{6}{5}$, and $\frac{1}{10}$ for the vectors $e_i$, $e_i \pm e_j$, and $2e_i$, respectively (cf.~\cite[p.~272]{Feigin}).

\hfill \newline
\noindent (iii) $(E_7, D_5)$ \\
The Dynkin diagram of the root system $E_7$ is

\begin{equation*}
\begin{tikzpicture}
\filldraw [red] (0,0) circle (3pt) node[below right, black] {$\alpha_4$} ;
\filldraw [red] (-1.5,0) circle (3pt) node[below, black]  {$\alpha_3$};
\filldraw [black] (-3,0) circle (3pt) node[below,black] {$\alpha_1$};
\filldraw [red] (1.5,0) circle (3pt) node[below,black] {$\alpha_5$};
\filldraw [red] (3,0) circle (3pt) node[below,black] {$\alpha_6$};
\filldraw [red] (0,-1.5) circle (3pt) node[right,black] {$\alpha_2$};
\filldraw [black] (4.5,0) circle (3pt) node[below,black] {$\alpha_7$};
\draw[thick] (-3,0) -- (-1.6,0);
\draw[thick][red] (-1.5,0) -- (3,0);
\draw[thick][red] (0,0) -- (0,-1.5);
\draw[thick][black] (3.1,0) -- (4.5, 0);
\end{tikzpicture}
\end{equation*}
with the simple roots being the first seven roots of \eqref{E_8 root}. In this example, we choose a subgraph $\Gamma_0 \simeq D_5$. The corresponding subspace $\pi$ is given by the equations
\begin{equation*}
    x_1 = x_2 = x_3 = x_4 = x_5 = 0,
\end{equation*}
with multiplicity $c = \frac{1}{8}$.
The corresponding projected system is shown in the following diagram, where the coordinates are with respect to $\widetilde{e}_1 = e_6$ and $\widetilde{e}_2 = \frac{1}{\sqrt{2}}(e_8 - e_7)$:
\begin{figure}[H]
\includegraphics[scale = 0.45]{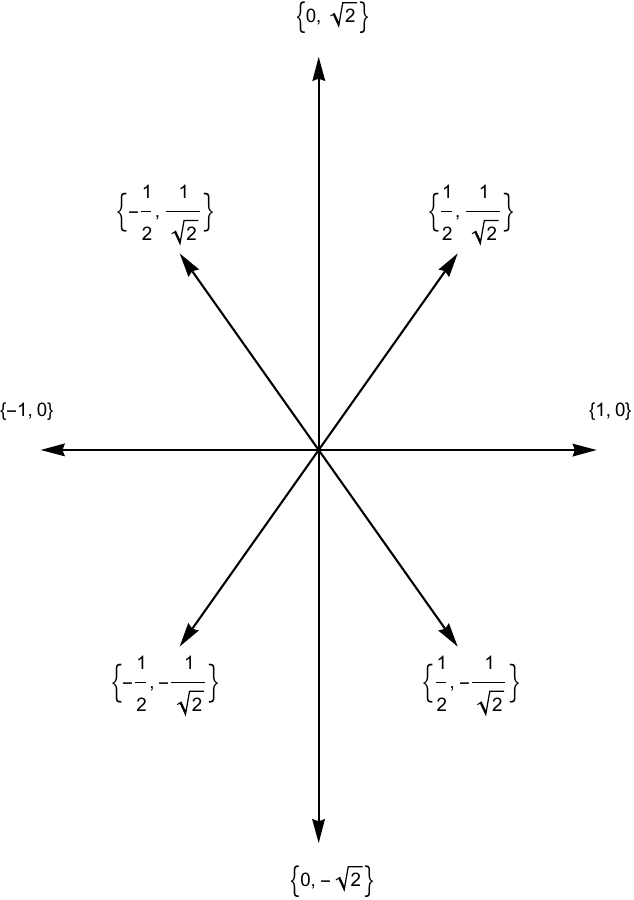}
\end{figure}
\noindent The multiplicities are given by
\begin{equation*}
    \widehat{c}_{\widetilde{e}_1} = \frac54, \quad \widehat{c}_{\sqrt{2}\widetilde{e}_2} = \frac18, \quad 
    \widehat{c}_{\frac{\widetilde{e}_2}{\sqrt{2}} \pm \frac{\widetilde{e}_1}{2}} = 2.
\end{equation*}
These vectors are as in a deformed $C_2$ configuration 
except for having different multiplicities.

By formula~\eqref{simpleform}, the corresponding generalised trigonometric spin Calogero--Moser Hamiltonian is
\begin{equation*}
\begin{array}{c}
\displaystyle{
    L = \partial_{y_1}^2 + \partial_{y_2}^2 - 
    \frac{25 + 20 \widehat{P}_{\widetilde{e}_1}}{64 \sinh^2(\frac{y_1}{2})} -
    \frac{6 + 3 \widehat{P}_{\frac{\widetilde{e}_1 + \sqrt{2} \widetilde{e}_2}{2}}}{8\sinh^2(\frac{y_1 + \sqrt{2}y_2}{4})} 
    }
    \\ 
    \displaystyle{
     - \frac{6 + 3 \widehat{P}_{\frac{\sqrt{2} \widetilde{e}_2 - \widetilde{e}_1}{2}}}{8\sinh^2(\frac{\sqrt{2}y_2 - y_1}{4})}
     - \frac{1 + 8 \widehat{P}_{\sqrt{2} \widetilde{e}_2}}{128\sinh^2(\frac{y_2}{\sqrt{2}})},
    }
\end{array}
\end{equation*}
where $y_1 = x_6$ and $y_2 = \frac{1}{\sqrt{2}}(x_8 - x_7)$.

The scalar version of this operator, obtained by replacing all the occurrences of $\widehat{P}$ by the identity, coincides with the trigonometric degeneration (where the Weierstrass $\wp$ function is replaced by $\sinh^{-2}$) of the operator from \cite[Theorem 1.4]{Taniguchi} for $a=\sqrt{2}$. 

The other possible choice of a subgraph $\Gamma_0 \simeq D_5$ in the Dynkin diagram of~$E_7$ leads to an equivalent projected configuration. Other allowed two-dimensional projections of $E_7$ are as in the table \cite[p.~273]{Feigin}, except that here $(E_7, A_5)_1$ leads to the deformed $BC_2$ with deformation parameter $k = \frac43$ and multiplicities $\frac{10}{3}$, $\frac16$, $\frac52$, $0$, and $1$ for the vectors $e_1$, $2e_1$, $\sqrt{k}e_2$, $2\sqrt{k}e_2$, and $e_1 \pm \sqrt{k}e_2$, respectively.

\hfill \newline
\noindent (iv) $(E_6, A_4)$ \\
The Dynkin diagram of the root system $E_6$ is
\begin{equation*}
\begin{tikzpicture}[scale=1.1, transform shape]
\filldraw [red] (0,0) circle (3pt) node[below right, black] {$\alpha_4$} ;
\filldraw [red] (-1.5,0) circle (3pt) node[below, black]  {$\alpha_3$};
\filldraw [black] (-3,0) circle (3pt) node[below,black] {$\alpha_1$};
\filldraw [red] (1.5,0) circle (3pt) node[below,black] {$\alpha_5$};
\filldraw [red] (3,0) circle (3pt) node[below,black] {$\alpha_6$};
\filldraw [black] (0,-1.5) circle (3pt) node[right,black] {$\alpha_2$};
\draw[thick] (-3,0) -- (-1.6,0);
\draw[thick][red] (-1.5,0) -- (3,0);
\draw[thick] (0,-0.1) -- (0,-1.5);

\end{tikzpicture}
\end{equation*}
with the simple roots being the first six roots of \eqref{E_8 root}. Here we chose a subgraph $\Gamma_0 \simeq A_4$. The corresponding subspace $\pi$ is given by the equations
\begin{equation*}
    x_1 = x_2 = x_3 = x_4 = x_5,
\end{equation*}
with multiplicity $c = \frac{1}{5}$.
The corresponding projected system is shown in the following diagram, where the coordinates are with respect to $\widetilde{e}_1 = \frac{1}{\sqrt{3}}(e_8 - e_7 - e_6)$ and $\widetilde{e}_2 = \frac{1}{\sqrt{5}}(e_1 + \dots + e_5)$:
\begin{figure}[H]
\includegraphics[scale=0.50]{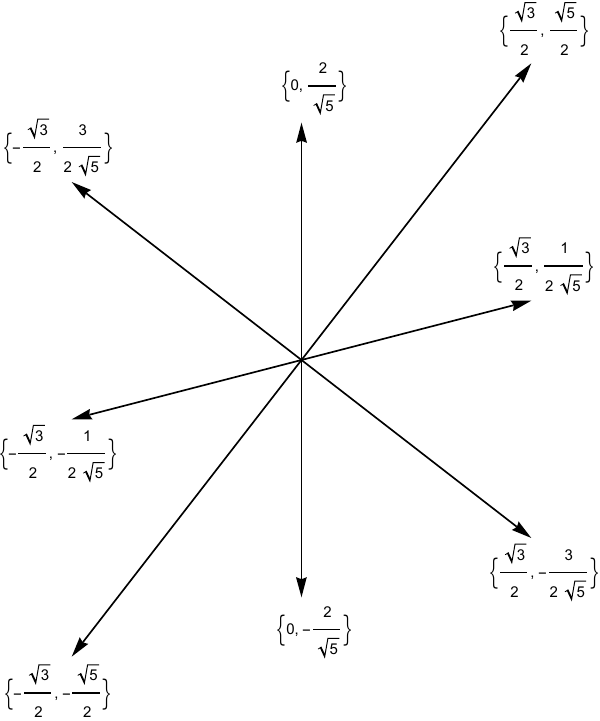}
\end{figure}
\noindent The multiplicities are given by
\begin{equation*}
    \widehat{c}_{\frac{2\widetilde{e}_2}{\sqrt{5}}} = \widehat{c}_{\frac{\sqrt{3}\widetilde{e}_1}{2} + \frac{\widetilde{e}_2}{2\sqrt{5}}} = 2, \quad 
    \widehat{c}_{\frac{\sqrt{3}\widetilde{e}_1}{2} + \frac{\sqrt{5}\widetilde{e}_2}{2}} = \frac15, \quad
    \widehat{c}_{\frac{\sqrt{3}\widetilde{e}_1}{2} - \frac{3 \widetilde{e}_2}{2 \sqrt{5}}} = 1.
\end{equation*}
These vectors are as in a deformed $C_2$ configuration 
except for having different multiplicities.

By formula~\eqref{simpleform}, the corresponding generalised trigonometric spin Calogero--Moser Hamiltonian is
\begin{equation*}
\begin{array}{c}
\displaystyle{
    L = \partial_{y_1}^2 + \partial_{y_2}^2 - 
    \frac{3 + 3 \widehat{P}_{\frac{\sqrt{15} \widetilde{e}_1 - 3 \widetilde{e}_2}{2 \sqrt{5}}}}{10 \sinh^2(\frac{\sqrt{15}y_1 - 3 y_2}{4 \sqrt{5}})} -
    \frac{4 + 2 \widehat{P}_{\frac{2 \widetilde{e}_2}{\sqrt{5}}}}{5 \sinh^2(\frac{y_2}{\sqrt{5}})}
    }
    \\ 
    \displaystyle{
     - \frac{4 + 2 \widehat{P}_{\frac{\sqrt{15}\widetilde{e}_1 + \widetilde{e}_2}{2 \sqrt{5}}}}{5 \sinh^2 (\frac{\sqrt{15}y_1 + y_2}{4 \sqrt{5}})}
     - \frac{1 + 5 \widehat{P}_{\frac{\sqrt{3}\widetilde{e}_1 + \sqrt{5} \widetilde{e}_2}{2}}}{50 \sinh^2 (\frac{\sqrt{3}y_1 + \sqrt{5}y_2}{4})},
    }
\end{array}
\end{equation*}
where $y_1 = \frac{1}{\sqrt{3}}(x_8 - x_7 - x_6)$ and $y_2 = \frac{1}{\sqrt{5}}(x_1 + \dots + x_5)$.

The scalar version of this operator coincides with the trigonometric degeneration of the operator from \cite[Theorem 1.4]{Taniguchi} for $a=\sqrt{3/5}$.

Other choices of a subgraph $\Gamma_0 \simeq A_4$ in the Dynkin diagram of~$E_6$ lead to equivalent projected configurations.
All other allowed two-dimensional projections of $E_6$ are as in the table \cite[p.~273]{Feigin}. 

\begin{remark}
   With a view towards generalising the above two operators $L$ corresponding to the restrictions $(E_7,D_5)$ and $(E_6,A_4)$,
   a natural question is whether there exists a spin version of the one-parametric family of integrable operators from \cite[Theorem 1.4]{Taniguchi}. In the trigonometric limit, the  Hamiltonian should have the form
   \begin{align}\label{Taniguchi}
        L &= \partial_{x_1}^2 + \partial_{x_2}^2 - 
        \frac{(3-a^2)\left(3-a^2 + 4a^2 \widetilde{P}_{e_1}\right)}{4a^2 \sinh^2(2ax_1)} -
        \frac{(3a^2-1)\left(3a^2-1+4\widetilde{P}_{e_2}\right)}{4 \sinh^2(2x_2)}
         \\ 
         &\qquad - \frac{2(a^2+1)\left(2+\widetilde{P}_{ae_1 + e_2}\right)}{\sinh^2(ax_1+x_2)}
         - \frac{2(a^2+1)\left(2+\widetilde{P}_{-ae_1 + e_2}\right)}{\sinh^2(-ax_1+x_2)} \nonumber
    \end{align}
    for some matrices $\widetilde{P}$. Note that in the cases $a=\sqrt{2}$ and $a=\sqrt{3/5}$ seen above, the operator~\eqref{Taniguchi} has a degree $6$ quantum integral since the Weyl groups of type $E_7$ and $E_6$ have basic invariants of degree~$6$. Note also that the scalar version of the operator~\eqref{Taniguchi} has a quantum integral of degree $6$ for any $a$ by~\cite{Taniguchi}.  
\end{remark}

\hfill \newline
\noindent (v) $(E_7, A_1^3)$ \\
Let us choose a subgraph $A_1^3$ in the Dynkin diagram of $E_7$ as follows

\begin{equation*}
\begin{tikzpicture}
    \filldraw [black] (0,0) circle (3pt) node[below right, black] {$\alpha_4$} ;
    \filldraw [red] (-1.5,0) circle (3pt) node[below, black]  {$\alpha_3$};
    \filldraw [black] (-3,0) circle (3pt) node[below,black] {$\alpha_1$};
    \filldraw [red] (1.5,0) circle (3pt) node[below,black] {$\alpha_5$};
    \filldraw [black] (3,0) circle (3pt) node[below,black] {$\alpha_6$};
    \filldraw [red] (0,-1.5) circle (3pt) node[right,black] {$\alpha_2$};
    \filldraw [black] (4.5,0) circle (3pt) node[below,black] {$\alpha_7$};
    \draw[thick] (-3,0) -- (-1.6,0);
    \draw[thick][black] (-1.4,0) -- (1.4,0);
    \draw[thick][black] (1.6,0) -- (3,0);
    \draw[thick][black] (0,0) -- (0,-1.4);
    \draw[thick][black] (3.1,0) -- (4.5, 0);
\end{tikzpicture}
\end{equation*}
The corresponding subspace $\pi$ is defined by the equations
\begin{equation*}
    x_1 = x_2 = 0, \ x_3 = x_4, 
\end{equation*}
and we require that $c = \frac{1}{2}$. Expressed in terms of $\widetilde{e}_1 = \frac12( e_3 + e_4 )$, $\widetilde{e}_2 = \frac12 (e_5 + e_6)$, $\widetilde{e}_3 =\frac12 (e_6-e_5)$, and $\widetilde{e}_4 = \frac12 (e_8-e_7)$, the projection onto $\pi$ of the positive half of $E_7$ leads to vectors $2\widetilde{e}_i$ ($i=1, \dots, 4$), $\widetilde{e}_1$, $\widetilde{e}_2 \pm \widetilde{e}_3$, $\pm \widetilde{e}_2 + \widetilde{e}_4$, $\pm \widetilde{e}_3 + \widetilde{e}_4$, $\pm \widetilde{e}_1 + \widetilde{e}_2 \pm \widetilde{e}_3$, $\pm \widetilde{e}_1 \pm \widetilde{e}_2 + \widetilde{e}_4$, and $\pm \widetilde{e}_1 \pm \widetilde{e}_3 + \widetilde{e}_4$, with multiplicities $\frac12$, $4$, $2$, $2$, $2$, $1$, $1$, and $1$, respectively. The total number of vectors in the positive half of the projected system is $23$.  
This configuration corresponds to the entry $(E_7, A_1^3)_2$ in the table~\cite[p.~272]{Feigin} --- notice that the configuration becomes non-reduced in the trigonometric case.

\subsubsection{Type $F_4$}
\hfill \newline

\noindent (i) $(F_4 , A_1^2)$  \\
The Dynkin diagram of the root system $F_4$ is
\begin{equation*}
\begin{tikzpicture}
\filldraw[red]  (0,0) circle (3pt) node[below right, black] {$\alpha_3$} ;
\filldraw[black] (-1.5,0) circle (3pt) node[below, black]  {$\alpha_2$};
\filldraw[red] (-3,0) circle (3pt) node[below,black] {$\alpha_1$};
\filldraw[black] (1.5,0) circle (3pt) node[below,black] {$\alpha_4$};
\draw[thick][black] (-2.9,0) -- (-1.6,0);
\draw[>=stealth, thick, ->,double] (-1.4,0) -- (-0.1,0);
\draw[thick][black] (0.1,0) -- (1.5, 0);

\end{tikzpicture}
\end{equation*}
with the simple roots being 
\begin{equation*}
    \alpha_1 = e_2 - e_3, \ \alpha_2 = e_3 - e_4, \ \alpha_3 = e_4, \ \alpha_4 = \frac{1}{2}(e_1 - e_2 - e_3 - e_4),
\end{equation*}
and here we chose a subgraph $\Gamma_0  \simeq A_1^2$. The corresponding plane~$\pi$ is given by the equations
\begin{equation*}
    x_2 = x_3, \ x_4 = 0,
\end{equation*}
with multiplicities being $c_1 = c_2 = \frac{1}{2}$, where $c_1 = c_{\alpha_1}$ and $c_2 = c_{\alpha_3}$. The corresponding projected system is shown in the following diagram, where the coordinates are in the basis formed by $e_1$ and $\widetilde{e}_2 = \frac{1}{\sqrt{2}}(e_2 + e_3)$:
\begin{figure}[H]
\includegraphics[scale=0.48]{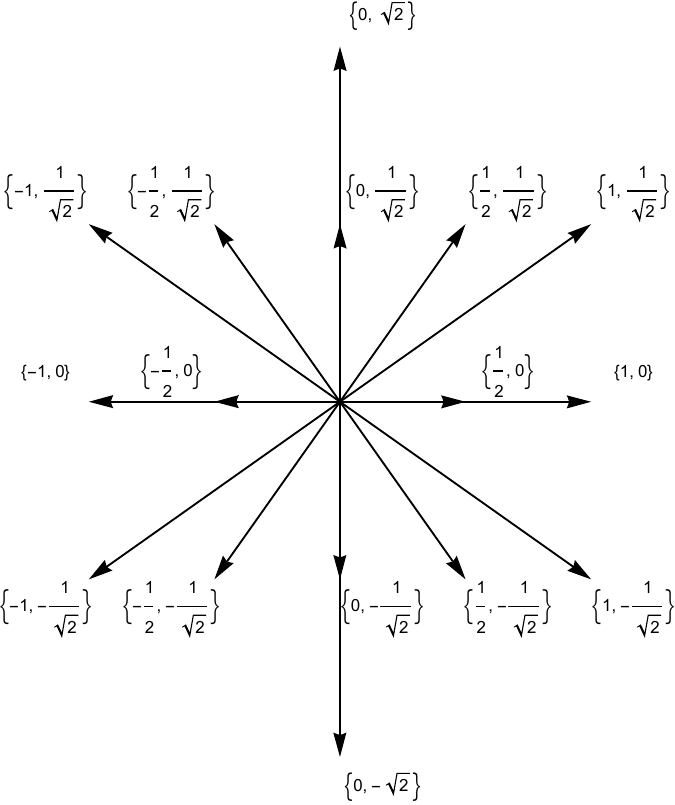}
\end{figure}
\noindent The multiplicities are given by
\begin{equation*}
    \widehat{c}_{\frac{e_1}{2}} = 2, \quad \widehat{c}_{e_1} = \frac32, \quad
    \widehat{c}_{\frac{\widetilde{e}_2}{\sqrt{2}}} = 3, \quad
    \widehat{c}_{\sqrt{2}\widetilde{e}_2} = \frac12, \quad
    \widehat{c}_{e_1 \pm \frac{\widetilde{e}_2}{\sqrt{2}}} = \widehat{c}_{\frac{e_1}{2} \pm \frac{\widetilde{e}_2}{\sqrt{2}}} = 1.
\end{equation*}

By formula~\eqref{simpleform}, the corresponding generalised trigonometric spin Calogero--Moser Hamiltonian is
\begin{align*}
    L &= \partial_{y_1}^2 + \partial_{y_2}^2 - \frac{5 + \widehat{P}_{ \frac{e_1}{2}}}{8 \sinh^2(\frac{y_1}{4})} - 
    \frac{9 + 6 \widehat{P}_{e_1}}{16 \sinh^2(\frac{y_1}{2})}
    \\
    & \quad - 
    \frac{12 + 3 \widehat{P}_{\frac{\widetilde{e}_2}{\sqrt{2}}}}{8 \sinh^2 (\frac{y_2}{2 \sqrt{2}})} -
    \frac{1 + 2 \widehat{P}_{\sqrt{2}\widetilde{e}_2}}{8 \sinh^2 (\frac{y_2}{\sqrt{2}})} - 
    \frac{3 + 3 \widehat{P}_{\frac{\sqrt{2} e_1 + \widetilde{e}_2}{\sqrt{2}}}}{8 \sinh^2(\frac{y_1}{2}+\frac{y_2}{2 \sqrt{2}})} 
    \\
    &\quad -
   \frac{3 + 3 \widehat{P}_{\frac{\sqrt{2} e_1 - \widetilde{e}_2}{\sqrt{2}}}}{8 \sinh^2(\frac{y_1}{2}-\frac{y_2}{2 \sqrt{2}})}  -
    \frac{3 + 3 \widehat{P}_{\frac{e_1 + \sqrt{2} \widetilde{e}_2}{2}}}{16 \sinh^2(\frac{y_1}{4}+\frac{y_2}{2 \sqrt{2}})} -
    \frac{3 + 3 \widehat{P}_{\frac{e_1 - \sqrt{2}\widetilde{e}_2}{2}}}{16 \sinh^2(\frac{y_1}{4}-\frac{y_2}{2 \sqrt{2}})},
\end{align*}
where $y_1 = x_1$ and $y_2 = \frac{1}{\sqrt{2}}(x_2 + x_3)$. This operator is a trigonometric version with spin of the potential-gauge form of the operator~\cite[Formula (28)]{Feigin} for $m = \frac{7}{2}$, $n=0$, and $\alpha = \sqrt{2}$.

Other choices of a subgraph $\Gamma_0 \simeq A_1^2$ in the Dynkin diagram of~$F_4$ lead to equivalent projected configurations.

\hfill \newline
\noindent (ii) $(F_4, A_2)$ \\
As another example, let us choose a subgraph $A_2$ in the Dynkin diagram of $F_4$ as in the following picture 
\begin{equation*}
\begin{tikzpicture}
\filldraw[black]  (0,0) circle (3pt) node[below right, black] {$\alpha_3$} ;
\filldraw[red] (-1.5,0) circle (3pt) node[below, black]  {$\alpha_2$};
\filldraw[red] (-3,0) circle (3pt) node[below,black] {$\alpha_1$};
\filldraw[black] (1.5,0) circle (3pt) node[below,black] {$\alpha_4$};
\draw[thick][red] (-3,0) -- (-1.5,0);
\draw[>=stealth, thick, ->,double] (-1.4,0) -- (-0.1,0);
\draw[thick][black] (0,0) -- (1.5, 0);

\end{tikzpicture}
\end{equation*}
with multiplicity $c_1 = \frac{1}{3}$ and $c_2$ being a free parameter.
The corresponding plane $\pi$ is defined by the equations
\begin{equation*}
    x_2 = x_3 = x_4. 
\end{equation*}
The corresponding projected system is shown in the following diagram, where the coordinates are in the basis $e_1$ and $\widetilde{e}_2 = \frac{1}{\sqrt{3}}(e_2 + e_3 + e_4)$:
\begin{figure}[H]
\includegraphics[scale=0.5]{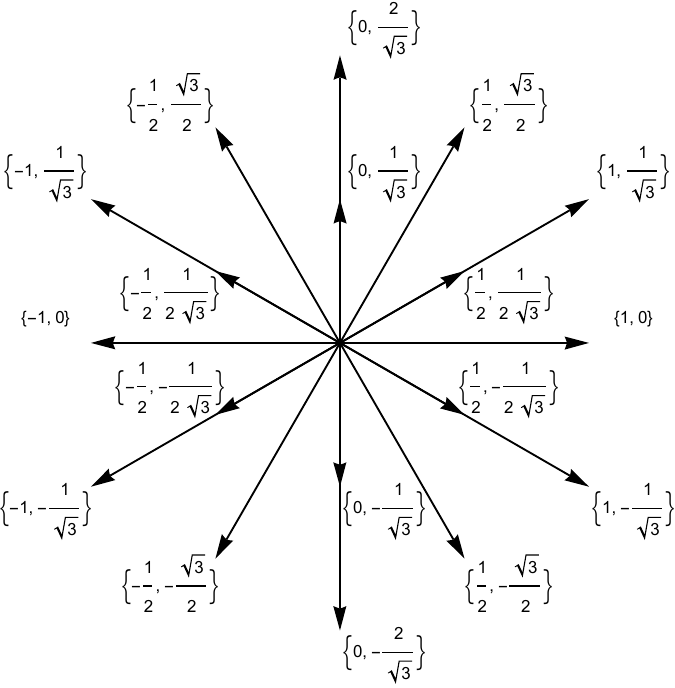}
\end{figure}
\noindent These vectors and their multiplicities coincide with those of the configuration called~$AG_2$.

By formula~\eqref{simpleform}, the corresponding generalised trigonometric spin Calogero--Moser Hamiltonian is
\begin{align*}
    L &= \partial_{y_1}^2 + \partial_{y_2}^2 - \frac{c_2(c_2 + \widehat{P}_{e_1})}{4 \sinh^2(\frac{y_1}{2})} 
         - \frac{c_2\left(c_2 + \widehat{P}_{\frac{e_1}{2} + \frac{\sqrt{3}\widetilde{e}_2}{2}}\right)}{4 \sinh^2(\frac{y_1}{4} + \frac{\sqrt{3}y_2}{4})} - \frac{c_2\left(c_2 + \widehat{P}_{\frac{e_1}{2} - \frac{\sqrt{3}\widetilde{e}_2}{2}}\right)}{4 \sinh^2(\frac{y_1}{4} - \frac{\sqrt{3}y_2}{4})} \\
       &- \frac{c_2\left(3c_2 + 2 + \widehat{P}_{\frac{\widetilde{e}_2}{\sqrt{3}}}\right)}{4 \sinh^2(\frac{y_2}{2\sqrt{3}})} 
       - \frac{c_2\left(3c_2 + 2 + \widehat{P}_{\frac{e_1}{2}+\frac{\widetilde{e}_2}{2\sqrt{3}}}\right)}{4 \sinh^2 (\frac{y_1}{4} + \frac{y_2}{4 \sqrt{3}})} 
       - \frac{c_2\left(3c_2 + 2 + \widehat{P}_{\frac{e_1}{2}-\frac{\widetilde{e}_2}{2\sqrt{3}}}\right)}{4 \sinh^2 (\frac{y_1}{4} - \frac{y_2}{4 \sqrt{3}})} \\
       &- \frac{1+\widehat{P}_{\frac{2 \widetilde{e}_2}{\sqrt{3}}}}{3 \sinh^2(\frac{y_2}{\sqrt{3}})} 
       -  \frac{1 + \widehat{P}_{e_1 + \frac{\widetilde{e}_2}{\sqrt{3}}}}{3 \sinh^2(\frac{y_1}{2} + \frac{y_2}{2\sqrt{3}})} 
       - \frac{1 + \widehat{P}_{e_1 - \frac{\widetilde{e}_2}{\sqrt{3}}}}{3 \sinh^2(\frac{y_1}{2} - \frac{y_2}{2\sqrt{3}})},
\end{align*}
where $y_1 = x_1$ and $y_2 = \frac{1}{\sqrt{3}}(x_2 + x_3 + x_4)$. The scalar version of this operator 
reproduces (up to rescaling and rotating the configuration of vectors) the operator \cite[Formula (1.1)]{FV19} with $m = c_2$.

The other possible choice of a subgraph $\Gamma_0 \simeq A_2$ in the Dynkin diagram of~$F_4$ leads to $G_2$ with multiplicities $3c_1 + 1$ and $c_1$ for the short and long roots, respectively. The subgraph $\Gamma_0 \simeq B_2$ leads to $BC_2$ with multiplicities $4c_2$, $4c_1 + c_2$, and $c_1$ for the vectors $e_i$, $e_i \pm e_j$, and $2e_i$, respectively, where $2(c_1 + c_2)=1$.

\subsubsection{Type $G_2$}
One of the two one-dimensional restrictions of the scalar trigonometric $G_2$ Calogero--Moser operator gives the operator
\begin{align}
\label{op3}
    L &= \frac{d^2}{dx^2} - \frac{8 m_1 (m_1 + 1)}{\sinh^2 x}-\frac{4 m_1 (m_1 + 1)}{\sinh^2 2x}-\frac{18}{\sinh^2 3x} \\ \ \nonumber \\
     &= \frac{d^2}{dx^2} - \frac{(3m_1+1)(3m_1+2)}{\sinh^2 x} - \frac{m_1(m_1+1)}{\sinh^2(x-\frac{\boldsymbol{\pi} \boldsymbol{i}}{2})} \nonumber \\
     &\qquad - \frac{2}{\sinh^2(x-\frac{\boldsymbol{\pi} \boldsymbol{i}}{3})} 
     - \frac{2}{\sinh^2(x-\frac{2\boldsymbol{\pi} \boldsymbol{i}}{3})} \nonumber
\end{align}
with $m_1 \in \C$.

The operator~\eqref{op3} is monodromy-free for all $m_1 \in \Z$ (see~\cite{DG} for local conditions defining monodromy-free potentials). 
Moreover, it is easy to check that the operator \eqref{op3} is the only monodromy-free operator of the form 
\begin{equation*}
    L = \frac{d^2}{dx^2} - \sum\limits_{m = 1}^n 
    \frac{c_m}{\sinh^2 m x}
\end{equation*}
with $n=3$ and $c_3 \neq 0$. 

\subsection{Projections that give root systems}

Table~\ref{tbl} lists all the non-trivial cases where the projected configuration is a root system. For the exceptional root systems, we only list their projections of rank at least~$2$. 
In each case, the list of multiplicities is ordered by the length of the vectors (in increasing order).
As previously, we denote the multiplicity of the roots $e_i \pm e_j$ in type $B$, $C$, and $F$ by $c_1$, and we denote by $c_2$ the multiplicity of $e_i$ in type $B$ and $F$, and the multiplicity of $2e_i$ in type~$C$.

\begin{center}
\begin{table}
    \begin{longtable}[c]{|l | c | c|} 
         \hline
         $(\Gamma, \Gamma_0)$ & Projection & Multiplicities  \\ 
         \hline & & \\
         $(A_{mk-1}, A_{k-1}^m)$, \ \ $m, k \in \Z_{\geq 2}$ & $A_{m-1}$ & $k$  \\ & & \\
         $(B_{mk+l}, A_{k-1}^m\times B_l)$, \ \ $m, k \in \Z_{\geq 1}$, $l \in \Z_{\geq 0}$ & $BC_m$ & $(2c_1 l + c_2) k$, \ $c_1k^2$, \ $\frac{k-1}{2}$  \\ 
         & & If $l>0$, $c_2 = \frac{1}{2} - c_1(l - 1)$, \\
         & & and if $k>1$, $c_1 = \frac1k$.  \\ & & \\
         $(C_{mk+l}, A_{k-1}^m \times C_l)$, \ \ $m, k \in \Z_{\geq 1}$, $l \in \Z_{\geq 0}$ & $BC_m$ & $2c_1 kl$, \ $c_1k^2$, \ $\frac{k-1}{2} + c_2k$ \\ 
         & & If $l>0$, $c_2 = \frac{1}{2} - c_1(l - 1)$, \\
         & & and if $k>1$, $c_1 = \frac1k$.  \\ & & \\
         $(D_{mk+l}, A_{k-1}^m \times D_l)$, \ \ $m \in \Z_{\geq 1}$, $k \in \Z_{\geq 2}$,  & $BC_m$ & $2l$, \, $k$, \ $\frac{k-1}{2}$ \\
         \hspace{3.2em} $l=0$, or $k$ is even and $l=\frac{k}{2} + 1$  & &  \\ & & \\
         $(D_{m+l}, D_l)$, \ \ $m \in \Z_{\geq 1}$, $l \in \Z_{\geq 2}$ & $B_m$ & $\frac{l}{l-1}$, \ $\frac{1}{2(l-1)}$   \\ & & \\
         $(F_4, A_2)$ \ with $\Gamma_0^v = \{\alpha_3, \alpha_4 \}$ & $G_2$ & $3c_1 + 1$, \ $c_1$ \\ & & \\
         $(F_4, B_2)$ & $BC_2$ & $4c_2$, \ $4c_1 + c_2$, \ $c_1$ \\
          & & with $2(c_1 + c_2)=1$ \\ & & \\
         $(E_6,A_2^2)$ & $G_2$ & $3$, \ $\frac13$ \\ & & \\
         $(E_6,D_4)$ & $A_2$ & $\frac43$ \\ & & \\
         $(E_7, A_1^3)$ \ with $\Gamma_0^v = \{ \alpha_2, \alpha_5, \alpha_7 \}$ & $F_4$ & $2$, \ $\frac12$  \\ & & \\ 
         $(E_7, D_4)$ & $C_3$ & $\frac43$, \ $\frac16$ \\ & & \\
         $(E_7, A_5)$ \ with $\Gamma_0^v = \{\alpha_2, \alpha_4, \alpha_5, \alpha_6, \alpha_7\}$ & $G_2$ & $\frac52$, \ $\frac16$ \\ & & \\
         $(E_8, D_4)$ & $F_4$ & $\frac43$, \ $\frac16$ \\ & & \\
         $(E_8, D_6)$ & $BC_2$ & $\frac{16}{5}$, \ $\frac{6}{5}$, \ $\frac{1}{10}$ \\ & & \\
         $(E_8, E_6)$ & $G_2$ & $\frac{27}{12}$, \ $\frac{1}{12}$
         \\ [1ex] 
         \hline
    \end{longtable}
    \setcounter{table}{0}
    \caption{Pairs $(\Gamma, \Gamma_0)$ such that the projection is a root system.}
    \label{tbl}
    \end{table}
\end{center}
\newpage

By using formula~\eqref{simpleform}, we get Hamiltonians for each of the projected root systems with multiplicities as given in Table~\ref{tbl} and with the matrices $\widehat{P}_\alpha$ defined in terms of a representation of the Weyl group corresponding to $\Gamma$.

\begin{remark}
In the rational case, there is additionally the stratum $(H_4, I_2(5))$ leading to a projected root system of type $I_2(10)$ and the corresponding Hamiltonian from Theorem~\ref{potential gauge}.
\end{remark}

Let us stress that even though the singularities of the potential for the restricted Hamiltonians corresponding to the Table~\ref{tbl} are given by the 
Weyl
group mirrors, the numerators of the corresponding matrix potentials are in general different from the standard matrix 
Hamiltonians as considered by Cherednik in \cite{Ch2}. The construction of restricted Hamiltonians produces a richer class of operators. Indeed, the input data to construct our potential in an $n$-dimensional space includes a representation of a Weyl group of higher rank $N \ge n$. The standard Hamiltonians from \cite{Ch2} are recovered in the case when $N=n$ and the two Weyl groups are the same. In this case the subgraph $\Gamma_0$ is empty and the projection of the root system coincides with itself; such cases are not listed in Table~\ref{tbl}. Example \ref{B_2 example} above gives an illustration of a rational matrix potential related to the configuration of singularities $B_2$ which is  obtained by a restriction from the $B_3$ case and which is different from $B_2$ Cherednik's Hamiltonians.

\subsection{The case of the root system $BC_N$} \label{BC}
Trigonometric Dunkl operators for the root system $BC_N$ are defined by formula~\eqref{trigDunkl} with $R_+ = BC_N^+ = \{e_i, \, 2e_i \mid i=1, \dots, N\} \cup \{e_i \pm e_j \mid 1 \leq i < j \leq N \}$ and multiplicities $c_{e_i \pm e_j} = c_1$, $c_{e_i} = c_2$, and $c_{2e_i} = c_3$. These Dunkl operators commute but are not equivariant; however, $\mathcal{B}_N$-invariant polynomials of $\nabla_i^{\mathrm tr} = \nabla_{e_i}^{\mathrm tr}$ commute with $W = \mathcal{B}_N$.

Let $P = \bigoplus_{i=1}^N \Z e_i$ be the weight lattice of the root system $BC_N$.
The trigonometric Cherednik algebra of type~$BC_N$ can be defined by its faithful polynomial representation on $\C[\{e^{(\alpha , x)} \mid \alpha \in P \}]$ and is generated by~$W$,~$e^{(\alpha , x)}$ ($\alpha \in P$), and $\nabla_i^{\mathrm tr}$.

In the case of $BC_N$, it is possible to define the following affine version of invariant parabolic strata. Let us consider the plane $\pi^a \subset V = \C^N$ defined by the equations
\begin{equation}    
\begin{aligned}
\label{BCstrata}
    &x_1 = x_2 = \dots = x_k, \ x_{k+1} = x_{k + 2} = \dots = x_{2k}, \ \dots, \\
    &x_{(m-1)k+1} = x_{(m-1)k+2} = \dots = x_{mk}, \\
    &x_{N - l_1 + 1} = x_{N - l_1 + 2} = \dots = x_{N} = 0, \\
    &x_{N-l_1-l_2 +1} = \dots = x_{N-l_1} = \boldsymbol{\pi} \boldsymbol{i}
\end{aligned}
\end{equation}
for $l_1, l_2 \in \Z_{\geq 0}$ and $m, k \in \Z_{\geq 1}$ such that $mk + l_1 + l_2 \leq N$.
Define the affine parabolic stratum $D^a = \cup_{w \in W} w(\pi^a)$. 
Let $x_0 \in \pi^a$ be generic, and let $\mathcal{C}^p_{Wx_0}(V, U) \subset \mathcal{C}_{Wx_0}(V, U)$ be the subspace of those (formal) sums of germs $f \in \mathcal{C}_{Wx_0}(V, U)$ that are periodic in the sense that if $x, x' \in Wx_0$ satisfy $x - x' \in 2\boldsymbol{\pi} \boldsymbol{i} P$ then 
$f(y + x'-x) = f(y)$ for all $y \in V$ in a small neighbourhood of~$x$.
The natural action of $W$ and $e^{(\alpha,x)}$ ($\alpha \in P$) on $\mathcal{C}_{Wx_0}(V, U)$ preserves~$\mathcal{C}^p_{Wx_0}(V, U)$, and
the Dunkl operators $\nabla_i^{\mathrm tr}$ also act on $\mathcal{C}^p_{Wx_0}(V, U)$.
The next theorem gives the conditions under which
the subspace $\mathcal{I}^p \subset \mathcal{C}^p_{Wx_0}(V, U)$ consisting of those elements that vanish when restricted to $D^a$ is preserved by this action of the $BC_N$ trigonometric Cherednik algebra.

\begin{theorem} 
    The space $\mathcal{I}^p$ is invariant under the $BC_N$ trigonometric Cherednik algebra if and only if all of the following conditions hold:
    \begin{align}
    \label{cond}
        &c_1 = \frac1k \text{ if } k>1, \quad 2(c_2+c_3) + 2(l_1-1)c_1 = 1 \text{ if } l_1>0, \text{ and } \\
        & \nonumber 2c_3 + 2(l_2-1)c_1 = 1 \text{ if } l_2>0.
    \end{align}
\end{theorem}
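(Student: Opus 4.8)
The plan is to follow the strategy of Theorem~\ref{invariance} and of its trigonometric analogue, reducing the invariance of $\mathcal{I}^p$ to a purely local analysis of the Dunkl operators along the walls containing $\pi^a$. First I would dispose of the easy part: the affine stratum $D^a$ is $\mathcal{B}_N$-invariant and its defining equations are compatible with $2\boldsymbol{\pi}\boldsymbol{i} P$-periodicity, so the action of $W$ and of the multiplication operators $e^{(\alpha,x)}$ ($\alpha \in P$) visibly preserves $\mathcal{I}^p$. Everything then comes down to showing that each Dunkl operator $\nabla_i^{\mathrm{tr}}$ maps $\mathcal{I}^p$ into itself precisely when the three conditions \eqref{cond} hold.

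Next, fix a generic point $x_0 \in \pi^a$ and take $f$ vanishing on $D^a$. Applying $\nabla_i^{\mathrm{tr}}$, the only terms that can fail to restrict to a regular germ on $D^a$ are those coming from roots $\alpha \in R_+$ that are singular at $x_0$, i.e.\ with $(\alpha, x_0) \in 2\boldsymbol{\pi}\boldsymbol{i}\Z$; by genericity of $x_0$ these are exactly the roots whose singular locus contains $\pi^a$. The crucial observation is how these organise themselves. Within each block $x_{jk+1} = \dots = x_{(j+1)k}$ the singular roots are the $A_{k-1}$ roots $e_a - e_b$; on the $l_1$-block (coordinates $=0$) one obtains the full $BC_{l_1}$ collection $\{e_a, 2e_a, e_a \pm e_b\}$; but on the $l_2$-block (coordinates $= \boldsymbol{\pi}\boldsymbol{i}$) the short roots $e_a$ satisfy $(e_a, x_0) = \boldsymbol{\pi}\boldsymbol{i}$, so $e^{(e_a,x_0)} = -1 \neq 1$ and they stay regular, whereas $2e_a$ and $e_a + e_b$ pair to $2\boldsymbol{\pi}\boldsymbol{i}$ and $e_a - e_b$ to $0$, hence remain singular. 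Using genericity once more, I would confirm that no singular root links two different blocks.

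Exactly as in the rational and trigonometric invariance theorems, the requirement that the residue of $\nabla_i^{\mathrm{tr}} f$ along a wall vanish upon restriction to $\pi^a$ reduces, block by block, to a relation of the shape $\sum_{\alpha} \frac{c_\alpha (\alpha, u)(\alpha, v)}{(\alpha,\alpha)} = (u,v)$, the sum running over the singular roots of that block and $u,v$ ranging over the corresponding coordinate subspace. Evaluating on basis vectors yields the stated conditions: testing on $u = e_a - e_b$ in an $A_{k-1}$ block gives $2kc_1 = 2$, i.e.\ $c_1 = \tfrac1k$ when $k>1$; testing on $u=v=e_a$ in the $l_1$-block gives $2c_2 + 2c_3 + 2(l_1-1)c_1 = 1$ when $l_1 > 0$; and the same test in the $l_2$-block, from which $e_a$ is now absent, gives $2c_3 + 2(l_2-1)c_1 = 1$ when $l_2 > 0$ (the off-diagonal tests $u=e_a$, $v=e_b$ vanish automatically). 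For the converse, if one of \eqref{cond} fails I would exhibit a germ in $\mathcal{I}^p$ whose image under a suitable $\nabla_i^{\mathrm{tr}}$ has nonzero residue along the offending wall, contradicting invariance.

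I expect the main obstacle to be the treatment of the affine $l_2$-block. Unlike the linear strata of Sections~\ref{invariant ideals}--\ref{rational case}, the wall $x_a = \boldsymbol{\pi}\boldsymbol{i}$ is not a reflection hyperplane through the origin, so the local model must be set up using the periodicity built into $\mathcal{C}^p_{Wx_0}(V,U)$ in order to translate it to a standard wall. It is precisely this half-period shift that renders $e_a$ regular while keeping $2e_a$ and $e_a + e_b$ singular, and hence explains why $c_2$ drops out of the $l_2$-condition. Care is also needed to verify that no cross-block singular roots appear, since a missed contribution would alter the relations.
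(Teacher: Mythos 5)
Your proposal is correct and is essentially the proof the paper has in mind: the paper explicitly omits the argument, stating it is ``essentially the same as for other root systems with additional use of the periodicity property'', which is precisely your reduction to the local quadratic-form criterion of Theorem~\ref{invariance} applied block by block at a generic point of $\pi^a$, with the periodicity of $\mathcal{C}^p_{Wx_0}(V,U)$ handling the affine wall $x_a = \boldsymbol{\pi}\boldsymbol{i}$. Your identification of the singular roots ($A_{k-1}$ in each $k$-block, the full $BC_{l_1}$ on the zero block, and only $C_{l_2}$ on the $\boldsymbol{\pi}\boldsymbol{i}$ block, since $e^{(e_a,x_0)}=-1$ keeps the short roots regular and thus removes $c_2$) and the resulting evaluations $2kc_1 = 2$, $2(c_2+c_3)+2(l_1-1)c_1 = 1$, and $2c_3 + 2(l_2-1)c_1 = 1$ are exactly the conditions~\eqref{cond}.
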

    The proof is essentially the same as for other root systems with additional use of the periodicity property, thus we omit it.

This result allows us to act on the module $ \mathcal{C}^p_{Wx_0}(V, U) / \mathcal{I}^p$. 
We have
\begin{equation*}
\begin{array}{c}
\displaystyle{
    \sum\limits_{i = 1}^N (\nabla^{\rm tr}_i)^2 = \sum\limits_{i = 1}^N \partial_{x_i}^2 - \sum\limits_{i = 1}^N c_2 \coth(\tfrac{x_i}{2}) \partial_{x_i} - \sum\limits_{i = 1}^N 2 c_3 \coth(x_i) \partial_{x_i} 
    }
    \\ 
\displaystyle{
    - \sum\limits_{i < j}^N c_1 \left( \coth(\tfrac{x_i - x_j}{2}) (\partial_{x_i} - \partial_{x_j}) + \coth(\tfrac{x_i + x_j}{2}) (\partial_{x_i} + \partial_{x_j}) \right)
    }
    \\  
\displaystyle{
    + \sum\limits_{i = 1}^N \frac{c_2}{4 \sinh^2(\frac{x_i}{2})} (1 - s_{e_i}) + \sum\limits_{i = 1}^N \frac{c_3}{\sinh^2(x_i)} (1 - s_{e_i}) 
    }
    \\ 
\displaystyle{
    + \sum\limits_{i < j}^N \left( \frac{c_1}{2 \sinh^2(\frac{x_i - x_j}{2})} (1 - s_{e_i - e_j}) + \frac{c_1}{2 \sinh^2(\frac{x_i + x_j}{2})} (1 - s_{e_i + e_j}) \right) + (\rho, \rho ),
    }
    \end{array}
\end{equation*}
in the scalar case, where $\rho = \frac12\sum_{\alpha \in BC_N^+} c_\alpha \alpha$.
After the restriction and the gauge transformation to the potential gauge by the formal function
\begin{equation*}
    f = \prod_{\substack{\widehat\alpha \in \widehat{R}_+, t_{\alpha} \\ (\widehat{\alpha}, t_{\alpha}) \neq (0,0)}}\left( \sinh\left(\frac{(\widehat{\alpha}, y) + t_{\alpha}}{2}\right)\right)^{\widehat{c}_{\widehat{\alpha}, t_{\alpha}}} 
\end{equation*}
we obtain the following integrable matrix operator, similarly to the case of non-affine parabolic strata:
\begin{equation}
\label{BCL}
    L = \Delta_y - \sum\limits_{\widehat{\alpha} \in \widehat{R}_+, \, t_{\alpha}} \frac{\widehat{c}_{\widehat{\alpha},t_{\alpha}}(\widehat{c}_{\widehat{\alpha}, t_{\alpha}} + 2 \widehat{c}_{2\widehat{\alpha}, 2t_{\alpha}} + \widehat{P}_{\widehat{\alpha}, t_{\alpha}}) (\widehat{\alpha}, \widehat{\alpha})}{4\sinh^2(\frac{(\widehat{\alpha},y) + t_{\alpha}}{2})}, 
\end{equation}
where the sum now goes over pairs $(\widehat{\alpha}, t_{\alpha}) \neq (0,0)$, where $\widehat{\alpha}$ is the orthogonal projection of the root $\alpha$ onto the plane $\pi$ which is parallel to $\pi^a$ and passes through $0$,  $t_{\alpha}\in \mathbb C$ is a shift that arises due to the affine nature of the plane $\pi^a$ (see below), and $y \in \pi$. Let us note that we consider $t_{\alpha}$ to be defined modulo $2 \boldsymbol{\pi} \boldsymbol{i}$, thus we identify $(\widehat{\alpha}, 2\boldsymbol{\pi} \boldsymbol{i})$ and $( \widehat{\alpha}, 0)$. The formula for $\widehat{P}_{\widehat{\alpha}, t_{\alpha}}$ is almost the same as in the non-affine case:
\begin{equation*}
    \widehat{P}_{\widehat{\alpha}, t_{\alpha}} = 1 + \frac{1}{\widehat{c}_{\widehat{\alpha}, t_{\alpha}} (\widehat{\alpha}, \widehat{\alpha})} \sum_{\substack{ \gamma \in R_+ \\(\widehat{\gamma}, t_{\gamma}) = (\widehat{\alpha}, t_{\alpha})}} 
    c_{\gamma}(\gamma, \gamma) (P_{\gamma} - 1),
\end{equation*}
where
\begin{equation*}
    \widehat{c}_{\widehat{\alpha}, t_{\alpha}} = \sum\limits_{\substack{\gamma \in R_+  \\ (\widehat{\gamma}, t_{\gamma}) = (\widehat{\alpha}, t_{\alpha})}} c_{\gamma}.
\end{equation*}
The parameters $t_\alpha$, $\alpha \in R$ are defined as $t_\alpha=(\alpha, \delta)$, where $\delta =  \boldsymbol{\pi} \boldsymbol{i} \sum_{j=N - l_1 - l_2}^{N- l_1 - 1}  e_j$ gives the parallel translation between the planes $\pi$ and $\pi^a$. 
Explicitly, the list of the projected roots $\widehat{\alpha} \neq 0$ and shift parameters~$t_{\alpha}$ is
\begin{align*}
    &\{ (\widetilde{e}_i \pm \widetilde{e}_j, 0) \mid m + 1 \leq i < j \leq m+n \} \\ &\cup \{ (\tfrac{\widetilde{e}_i}{\sqrt{k}} \pm \widetilde{e}_j , 0) \mid 1 \leq i \leq m, \, m+1 \leq j \leq m+n \} \\ &\cup \{ (\tfrac{\widetilde{e}_i \pm \widetilde{e}_j}{\sqrt{k}} , 0) \mid 1 \leq i < j \leq m\}  \\ &\cup \{(\widetilde{e}_i ,0 ) \mid m+1 \leq i \leq m+n \}  \cup \{ (2\widetilde{e}_i , 0) \mid m+1 \leq i \leq m+n \} \\  &\cup \{ (\tfrac{\widetilde{e}_i}{\sqrt{k}} , 0) \mid 1 \leq i \leq m \} \cup \{ (\tfrac{2\widetilde{e}_i}{\sqrt{k}},0) \mid 1 \leq i \leq m \} \\
    &\cup \{ (\widetilde{e}_i,  \boldsymbol{\pi} \boldsymbol{i}) \mid m+1 \leq i \leq m+n  \} \cup \{ (\tfrac{\widetilde{e}_i}{\sqrt{k}},  \boldsymbol{\pi} \boldsymbol{i}) \mid 1 \leq i \leq m \}. 
\end{align*}
 The respective multiplicities are
\begin{align*}
    &\widehat{c}_{\widetilde{e}_i \pm \widetilde{e}_j, 0} = c_1, \quad \widehat{c}_{\tfrac{\widetilde{e}_i}{\sqrt{k}} \pm \widetilde{e}_j , 0} = c_1 k, \quad \widehat{c}_{\tfrac{\widetilde{e}_i \pm \widetilde{e}_j}{\sqrt{k}} , 0} = c_1 k^2,  \quad
    \widehat{c}_{\widetilde{e}_i, 0} = c_2 +2 c_1 l_1, \\
    &\widehat{c}_{2\widetilde{e}_i, 0} = c_3, \quad 
    \widehat{c}_{\tfrac{\widehat{e}_i}{\sqrt{k}}, 0} = c_2 k + 2 c_1 k l_1, \quad
    \widehat{c}_{\tfrac{2\widehat{e}_i}{\sqrt{k}}, 0} = c_3 k + \frac{k-1}{2}, \\
    &\widehat{c}_{\widetilde{e}_i, \boldsymbol{\pi} \boldsymbol{i}} = 2 c_1l_2, \quad
    \widehat{c}_{\tfrac{\widetilde{e}_i}{\sqrt{k}},  \boldsymbol{\pi} \boldsymbol{i}} = 2 c_1 kl_2.
\end{align*}
Let us note that since all shifts are either 0 or $ \boldsymbol{\pi} \boldsymbol{i}$, it is, in principle, possible to rewrite the potential using only $\sinh$ functions without shifts.

The derivation of formula~\eqref{BCL} is very similar to the computations done in the non-affine case in Theorem~\ref{trigpot}, but in several steps we 
need to use the explicit form of the strata \eqref{BCstrata} and particular properties of the root system $BC$. First of all, we need the following lemma.
\begin{lemma}
    The following identity holds
    \begin{equation}
    \label{form2}
    \sum_{\substack{\alpha, \beta  \in R_+ \\  (\widehat{\beta}, t_{\beta}) \nsim (\widehat{\alpha}, t_{\alpha})}} c_{\alpha} c_{\beta} (\widehat{\alpha}, \widehat{\beta}) \coth\left(\frac{(\widehat{\alpha}, y) + t_{\alpha}}{2}\right) \coth\left(\frac{(\widehat{\beta}, y) + t_{\beta}}{2}\right) = \rm{const},
    \end{equation}
    where $\sim$ is defined by $(\widehat{\alpha}, t_{\alpha}) \sim (\widehat{\beta}, t_{\beta})$ iff $\exists \; r \in \mathbb{C}: \widehat{\alpha} = r \widehat{\beta}$ and $t_{\alpha} - r t_{\beta} \in 2 \boldsymbol{\pi} \boldsymbol{i} \mathbb{Z}$.
\end{lemma}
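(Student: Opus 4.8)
The plan is to adapt the proof of Theorem~\ref{trigpot} to the periodic setting. Write $g(y)$ for the left-hand side of~\eqref{form2}, regarded as a meromorphic function of $y \in \pi$. I would show that $g$ is in fact \emph{entire} and \emph{bounded}, so that it is constant by Liouville's theorem. Boundedness is the easy part: the periodicity built into the elements of $\mathcal{C}^p_{Wx_0}(V,U)$ makes $g$ invariant under the translations $y \mapsto y + v$ with $v \in 2\boldsymbol{\pi}\boldsymbol{i}P$, which controls $g$ along the imaginary directions, while the asymptotics $\coth(z/2) \to \pm 1$ as $\mathrm{Re}\,z \to \pm\infty$ bound each summand along the real directions. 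Thus everything rests on showing that the apparent poles of $g$ cancel.

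The poles of $g$ lie along the hyperplanes $(\widehat{\gamma}, y) + t_{\gamma} \in 2\boldsymbol{\pi}\boldsymbol{i}\Z$. Fixing $\alpha \in R_+$ and a branch $(\widehat{\alpha}, y) + t_{\alpha} = 2\boldsymbol{\pi}\boldsymbol{i}k$, I would compute the residue of $g$ there. Using the symmetry of the summand under $\alpha \leftrightarrow \beta$ and the fact that $\coth$ has simple poles with locally constant residue, the residue is proportional to the restriction to this hyperplane of
\[
    \sum_{\substack{\beta \in R_+ \\ (\widehat{\beta}, t_{\beta}) \nsim (\widehat{\alpha}, t_{\alpha})}} c_{\beta} (\widehat{\alpha}, \widehat{\beta}) \coth\!\left(\frac{(\widehat{\beta}, y) + t_{\beta}}{2}\right),
\]
where the exclusion $(\widehat{\beta}, t_{\beta}) \nsim (\widehat{\alpha}, t_{\alpha})$ guarantees precisely that each of these coth-factors is regular on the hyperplane. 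It therefore suffices to prove the periodic analogue of~\eqref{trigeq}: that this sum vanishes whenever $(\widehat{\alpha}, y) + t_{\alpha} = 2\boldsymbol{\pi}\boldsymbol{i}k$.

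For this I would run the orbit argument from the proof of Theorem~\ref{trigpot}, now keeping track of the shifts. Let $W_{\alpha} = \langle W_0, s_{\alpha} \rangle$ act linearly on roots, with the shift of $w\beta$ given by $t_{w\beta} = (w\beta, \delta)$. Decomposing $S = \{\beta \in R_+ : (\widehat{\beta}, t_{\beta}) \nsim (\widehat{\alpha}, t_{\alpha})\}$ into $W_{\alpha}$-orbits $\mathcal{O}_1 \sqcup \dots \sqcup \mathcal{O}_r$, one checks that $\coth(\tfrac{(\widehat{\beta}, y) + t_{\beta}}{2})$ is constant along each orbit on the hyperplane: for $w \in W_0$ this is clear, and for $s_{\alpha}$ it follows by the same computation as after~\eqref{trigeq2}, since $s_{\alpha}y = y - \tfrac{2(\alpha,y)}{(\alpha,\alpha)}\alpha$ and $s_{\alpha}\delta = \delta - \tfrac{2t_{\alpha}}{(\alpha,\alpha)}\alpha$ change the argument $(\widehat{\beta},y)+t_{\beta}$ by $-\tfrac{2(\beta,\alpha)}{(\alpha,\alpha)}\big((\widehat{\alpha},y)+t_{\alpha}\big) \in 2\boldsymbol{\pi}\boldsymbol{i}\Z$, to which $\coth(\cdot/2)$ is insensitive. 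Writing $b_i = \sum_{\beta \in \mathcal{O}_i} \beta$, the relations $W_0 b_i = b_i$ and $s_{\alpha} b_i = b_i$ force $b_i = \widehat{b}_i \in \pi$ and $(\widehat{\alpha}, \widehat{b}_i) = (\alpha, b_i) = 0$; since $c$ is constant on $W_{\alpha}$-orbits, the $\mathcal{O}_i$-contribution equals a constant multiple of $(\widehat{\alpha}, \widehat{b}_i) = 0$, which gives the claim.

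The step I expect to be the main obstacle is the bookkeeping of the shifts together with the $BC$-specific proportionalities. Because $\pi^a$ places coordinates at $\boldsymbol{\pi}\boldsymbol{i}$, the shifts take values $t_{\alpha} \in \{0, \boldsymbol{\pi}\boldsymbol{i}\}$ modulo $2\boldsymbol{\pi}\boldsymbol{i}$, and the collinear roots $e_i$, $2e_i$ produce pairs $(\widehat{\beta}, t_{\beta})$ with half-integer proportionality factors $r$; one must check that the equivalence $\sim$ of~\eqref{form2} matches exactly the genuine pole locations (which, for $r = \tfrac12$, depend on the parity of $k$), so that the partition into regular versus singular factors underlying the residue computation is valid. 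Verifying that $W_{\alpha}$ preserves $S$ and the multiplicity function and acts compatibly with the shifts modulo $2\boldsymbol{\pi}\boldsymbol{i}$ is where the explicit form of the strata~\eqref{BCstrata} and the periodicity of $\mathcal{C}^p_{Wx_0}(V,U)$ must be invoked.
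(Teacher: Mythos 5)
Your proposal follows essentially the same route as the paper's (sketched) proof: the non-proportionality condition $\nsim$ ensures only simple poles, the residues reduce to a shifted analogue of~\eqref{trigeq} which vanishes by the $W_{\alpha}$-orbit decomposition with shifts tracked via $t_{w\beta}=(w\beta,\delta)$, and constancy then follows from the absence of poles together with boundedness (Liouville). The BC-specific checks you flag as the main obstacle are precisely the ones the paper invokes, namely that all non-zero shifts equal $\boldsymbol{\pi}\boldsymbol{i}$ and that $(e_i^{\vee},\gamma)\in 2\Z$ for all $\gamma\in BC_N$, so your argument is a correct (and somewhat more detailed) rendering of the paper's proof.
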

\begin{proof}
    Let us give a sketch of the proof. Similarly to the proof of the Theorem \ref{trigpot} we note that for the $BC$ case the condition for not being proportional ensures the expression \eqref{form2} only has simple poles. When computing the residue on the hyperplanes $(\widehat{\alpha}, y ) + t_{\alpha} \in 2 \boldsymbol{\pi} \boldsymbol{i} \mathbb{Z}$ we get the following identity
    \begin{equation*}
    \sum\limits_{\substack{\beta \in R_+ \\ (\widehat{\beta}, t_{\beta}) \nsim (\widehat{\alpha}, t_{\alpha})}} 
    c_{\beta} (\widehat{\alpha},\widehat{\beta}) \coth\left(\frac{(\widehat{\beta}, y) + t_{\beta}}{2}\right) = 0,
    \end{equation*}
    for $(\widehat{\alpha}, y ) + t_{\alpha} \in 2 \boldsymbol{\pi} \boldsymbol{i} \mathbb{Z}$, which is proved by decomposition of this sum into the orbits as in the proof of Theorem \ref{trigpot}, where we also need to use that all non-zero constants $t_{\gamma} = \boldsymbol{\pi} \boldsymbol{i}$ for $\widehat{\gamma} \neq 0$ and $(e_i^{\vee}, \gamma) \in 2  \mathbb{Z}$, $\forall \gamma \in BC_N$. The absence of poles for the expression \eqref{form2} forces it to be constant, which does not provide any non-trivial contribution to the Hamiltonian. 
\end{proof}
\begin{remark}
    In the case of the reflection representation, the operators $\widehat{P}_{\widehat{\alpha}, t_{\alpha}}$ become reflections on the plane $\pi$ about the hyperplane orthogonal to the projected root $\widehat{\alpha}$, and in particular, do not depend on the shift parameter $t_{\alpha}$.
\end{remark}

\begin{remark}
    Similarly to the Table \ref{tbl}, in the $BC_N$ case with the corresponding strata \eqref{BCstrata} the projected configuration is a root system in the case when $N = m k + l_1 + l_2$, which gives the root system $BC_m$.
\end{remark}

\section{Extra integrals for deformed spin Calogero--Moser system}\label{extra integals}
Since we are dealing with matrix operators, we anticipate the existence of more quantum integrals than just the ones coming from invariant combinations of Dunkl operators. In this section, we explain how additional integrals 
can be obtained for the deformed spin Calogero--Moser system from Section~\ref{trig type A} in the case of $U = (\mathbb{C}^r)^{\otimes N}$ ($r \in \Z_{>0}$) by using
the Drinfeld functor and any maximal commutative subalgebra inside a Yangian.

We will use the trigonometric Dunkl operators~\eqref{trigDunkl} for $R_+ = A_{N-1}^+$ and $\xi = e_i$. More precisely, we let
\begin{equation*}
    \nabla_i^{\rm tr} =  \partial_{x_i} + \sum\limits_{j < i} \frac{c}{1 - e^{x_i - x_j}} (1 - s_{ij}) - \sum\limits_{j > i} \frac{c}{1 - e^{x_j - x_i}} (1 - s_{ij}) - c (i-1).
\end{equation*}
We will also need a second version of trigonometric Dunkl operators, introduced by Polychronakos~\cite{Polych} (cf. Heckman's operators \cite{HeckmanTrig}), given by
\begin{equation*}
D_i = \partial_{x_i} - \sum\limits_{\substack{ j=1 \\ j \neq i}}^N \frac{c}{1 - e^{x_j - x_i}} (1 - s_{ij}) = \nabla_i^{\rm tr} + c \sum\limits_{j < i} s_{ij},
\end{equation*}
which do not commute, but satisfy the relations
\begin{equation*}
\begin{array}{c}
     \displaystyle{
     s_{ij} D_i = D_j s_{ij},
     }
     \\ \ \\
     \displaystyle{
     [D_i, D_j] = c (D_i - D_j)s_{ij}.
     }
\end{array}    
\end{equation*}
\subsection{Drinfeld functor}
Suppose that $U = (\mathbb{C}^r)^{\otimes N}$ for some $r \in \Z_{>0}$. Then one can introduce the following 
formal power series with values in ${\rm End}(\widetilde{M})$: 
\begin{equation}
\label{Yangian}
    t^{ab}(u) = \delta_{ab} - \sum\limits_{i = 1}^N \frac{c E^{ab}_i}{u - D_i}, \qquad 1 \leq a,b \leq r,
\end{equation}
where $E^{ab}$ is the matrix unit with $1$ in the $(a,b)$-th entry and zeroes everywhere else, and the subscript indicates on which tensor component this matrix acts non-trivially.
This defines a representation of the Yangian $Y(gl_r)$ \cite{D, A, BAB}, meaning that the matrix $t(u)$ satisfies the following $RTT$
relation:
\begin{equation*}
    R_{12}(u - v) t_1(u) t_2(v) = t_2(v) t_1(u) R_{12}(u - v)
\end{equation*}
with
\begin{equation*}
    R_{12}(u-v) = 1 - \frac{c}{u - v}\sum\limits_{a,b = 1}^r E_1^{ab}\otimes E_2^{ba}
\end{equation*}
and
\begin{equation*}
    \begin{array}{ccc}
    t(u) = \sum\limits_{a,b = 1}^r t^{ab}(u) E^{ab}, & t_1(u) = t(u) \otimes 1, & t_2(u) = 1 \otimes t(u).
    \end{array}
\end{equation*}
\subsection{Gelfand--Tsetlin subalgebra and quantum integrals}
It is well known that the coefficients of the expansion of the quantum determinant
\begin{equation*}
    {\rm  det}_q(u) = \sum\limits_{\sigma \in S_r} (-1)^{\sigma} t^{1, \sigma(1)}(u-c(r-1)) t^{2, \sigma(2)}(u-c(r-2)) \cdots 
    t^{r, \sigma(r)} (u)
\end{equation*}
generate the centre of the Yangian. Moreover, the representation \eqref{Yangian} of the Yangian preserves the subspace
$\widetilde{M}^{S_N}\subset \widetilde{M}$. 

The quantum determinant can be computed explicitly in this representation to be \cite{BAB}:
\begin{equation} \label{qdet}
    {\rm det}_q(u) = \frac{\prod\limits_{i = 1}^N (u  - \nabla^{\rm tr}_i - c)}{\prod\limits_{i = 1}^N (u - \nabla^{\rm tr}_i)}.
\end{equation}
Thus, the Hamiltonians of the deformed spin Calogero--Moser system are contained in the image of the centre $\mathcal{Z}(Y(gl_r))$ of the Yangian in the representation $\widetilde{M}^{S_N}$. 

Let us recall that
inside the Yangian, there exists a large commutative subalgebra called the Gelfand--Tsetlin subalgebra. It is
described in terms of a tower of embeddings of Yangians.
\begin{equation*}
    Y(gl_1) \subset Y(gl_2) \subset \cdots \subset Y(gl_r).
\end{equation*}
Namely, it is the subalgebra
\begin{equation*}
    A = \langle \mathcal{Z}(Y(gl_1)), \mathcal{Z}(Y(gl_2)) , \ldots, \mathcal{Z}(Y(gl_r)) \rangle.
\end{equation*}
This algebra is commutative, and its image in the representation $\widetilde{M}^{S_N}$ contains the Hamiltonians of the deformed spin Calogero--Moser system, which means it provides a large number of additional quantum integrals for this model, which can be summarised in the following result.
\begin{theorem}
    The image of the centre of the Yangian \eqref{qdet} in the representation
    $\widetilde{M}^{S_N}$ is generated by
    the integrals \eqref{trighigherham} of the generalised trigonometric spin
    Calogero--Moser system \eqref{trigham}.
    The image of the Gelfand--Tsetlin subalgebra $A$ or any other maximal commutative 
    subalgebra in the Yangian is a commutative subalgebra containing all the 
    integrals \eqref{trighigherham}. Moreover,
    the image of the Yangian in the representation $\widetilde{M}^{S_N}$ 
    commutes with the operators \eqref{trighigherham}.
\end{theorem}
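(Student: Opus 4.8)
The plan is to deduce all three assertions from the explicit evaluation \eqref{qdet} of the quantum determinant in the representation \eqref{Yangian}, combined with two structural facts about $Y(gl_r)$: that the coefficients of ${\rm det}_q(u)$ generate the centre $\mathcal{Z}(Y(gl_r))$, and that $\mathcal{Z}(Y(gl_r))$ lies inside every maximal commutative subalgebra, in particular inside the Gelfand--Tsetlin subalgebra $A$. The crux is the first assertion; the other two will follow formally once it is in place.

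To prove that the image of $\mathcal{Z}(Y(gl_r))$ in $\widetilde{M}^{S_N}$ is generated by the integrals \eqref{trighigherham}, I would pass to logarithms in \eqref{qdet} and expand in powers of $u^{-1}$:
\[
\log {\rm det}_q(u) = \sum_{i=1}^N \bigl[\log(u - \nabla_i^{\rm tr} - c) - \log(u - \nabla_i^{\rm tr})\bigr].
\]
Since the trigonometric Dunkl operators commute, the coefficient of $u^{-m}$ equals $-\tfrac1m \sum_{i=1}^N \bigl[(\nabla_i^{\rm tr} + c)^m - (\nabla_i^{\rm tr})^m\bigr]$, a difference of power sums and hence an element of $\mathbb{C}[\nabla_1^{\rm tr}, \ldots, \nabla_N^{\rm tr}]^{S_N}$; applying $\widetilde{\Res}_\pi$ exhibits it as a combination of the operators \eqref{trighigherham}. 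Because the operator coefficients commute, ${\rm det}_q(u)$ and $\log {\rm det}_q(u)$ generate the same commutative algebra, so the image of the centre is contained in the algebra generated by the integrals. For the reverse containment, observe that the above coefficient has the form $mc\sum_i (\nabla_i^{\rm tr})^{m-1}$ plus strictly lower power sums; an induction on $m$ then recovers every power sum $\sum_i (\nabla_i^{\rm tr})^{m-1}$, and since the power sums generate $\mathbb{C}[x]^{S_N}$, all integrals \eqref{trighigherham} lie in the image of the centre.

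The remaining assertions are now formal. For the third, the centre $\mathcal{Z}(Y(gl_r))$ commutes with all of $Y(gl_r)$ by definition, so its image commutes with the image of the entire Yangian; as the integrals \eqref{trighigherham} lie in the image of the centre by the first assertion, the image of the Yangian commutes with the operators \eqref{trighigherham} and in particular with the Hamiltonian \eqref{trigham}. For the second, the Gelfand--Tsetlin subalgebra $A = \langle \mathcal{Z}(Y(gl_1)), \ldots, \mathcal{Z}(Y(gl_r)) \rangle$ contains $\mathcal{Z}(Y(gl_r))$, and any maximal commutative subalgebra $B \subseteq Y(gl_r)$ must contain the centre (adjoining a central element to $B$ preserves commutativity, so maximality forces the element into $B$). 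Hence the image of $A$, or of any such $B$, is commutative and contains the image of the centre, which contains all the integrals \eqref{trighigherham}.

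The main obstacle is the first assertion, and within it the reverse containment: confirming that the finitely many coefficients of ${\rm det}_q(u)$ actually generate all the power sums $\sum_i (\nabla_i^{\rm tr})^m$, for which the triangularity of the expansion in $m$ above is the key mechanism. A minor point to verify is that the identification of $\widetilde{M}^{S_N}$ with vector-valued germs intertwines the Yangian action of \eqref{Yangian} with the restricted action $\widetilde{\Res}_\pi$, so that restriction indeed carries symmetric polynomials of the $\nabla_i^{\rm tr}$ to the integrals \eqref{trighigherham}; this is parallel to the passage used throughout Section~\ref{trig case}.
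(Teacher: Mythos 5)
Your proposal is correct and follows essentially the same route as the paper, which states this theorem without a separate proof, as a direct consequence of the preceding discussion: the explicit quantum determinant formula \eqref{qdet}, the fact that its expansion coefficients generate the centre $\mathcal{Z}(Y(gl_r))$, and the containment of the centre in the Gelfand--Tsetlin subalgebra and in any maximal commutative subalgebra. The details you supply --- the logarithmic expansion of \eqref{qdet} with the triangularity-in-$m$ argument giving generation in both directions, and the standard maximality argument forcing central elements into any maximal commutative subalgebra --- are precisely the steps the paper leaves implicit, and they are sound.
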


For the (undeformed) spin Calogero--Moser system, a joint eigenbasis for the image of the Gelfand--Tsetlin subalgebra $A$ was constructed in \cite{TakemuraUglov}. In the case of $r=2$ (that is, for $gl_2$) and the Bethe subalgebra of the Yangian, a joint eigenbasis was constructed recently in \cite{Ferrando}. 

\begin{example}
     The simplest additional integral, obtained from the 
     expansion of $t_{11}(u)$,  for the non-deformed trigonometric spin 
     Calogero--Moser system is
    \begin{equation*}
        h = \sum\limits_{i = 1}^N E^{11}_i \partial_{x_i} - 
        \sum\limits_{j \neq i}^N E^{11}_i \frac{c}{1 - e^{x_j - x_i}}(1 - P_{ij}).
    \end{equation*}
    In the deformed case, for $N = 3$, $m = 1$, $n = 1$, and 
    $c = \frac{1}{2}$, this integral becomes
    \begin{align*}
        h^{\rm def}  = \frac{1}{2}(E^{11}_1 + E^{11}_2) \partial_y &+ E^{11}_3
        \partial_{x} -  \frac{1}{2(1 - e^{x - \frac{y}{\sqrt{2}}})} 
        (E^{11}_1 - E^{11}_1 P_{13} + E^{11}_2 - E_2^{11} P_{23})  \\ 
        &- \frac{1}{2(1 - e^{\frac{y}{\sqrt{2}} - x})} E^{11}_3 
        (2 - P_{13} - P_{23}).
    \end{align*}
\end{example}

\begin{remark}
    A similar construction exists for the realisation of the spin Calogero--Moser Hamiltonian of the $BC_N$ type together with its $\mathcal{B}_N$-invariant integrals as central elements of the reflection algebra \cite{CC}. Commutative subalgebras of the reflection algebra then produce other quantum integrals.
\end{remark}

\subsection{Extra integrals for the rational model}
For the rational deformed spin Calogero--Moser model, additional quantum integrals 
can be obtained using the current algebra $u^{-1} gl_r [u^{-1}]$, which can be 
viewed as the associated graded algebra of the filtered Yangian $Y(gl_r)$. 
Namely, let $\nabla_i$ be the Dunkl operators \eqref{Dunkl} for $gl_N$, 
then we can define a representation of the current algebra $u^{-1} gl_r [u^{-1}]$ 
on the space  $(\mathbb{C}^r)^{\otimes N} \otimes M$.
Similarly to the Yangian case, the generators $e^{ab} \otimes u^{-l}$ ($l \leq -1$) 
of the algebra $u^{-1} gl_r [u^{-1}]$ act by the formula
\begin{equation*}
\label{current}
    \sum\limits_{i = 1}^N E^{ab}_i \nabla_i^{l - 1}.
\end{equation*}

It is straightforward to check that the space of 
diagonal invariants $((\mathbb{C}^r)^{\otimes N} \otimes M)^{S_N}$ is 
preserved by the action of $u^{-1} gl_r [u^{-1}]$.
The generators of the centre of the current algebra act by the elements
\begin{equation} \label{currentcentre}
    C^0_l = \sum\limits_{a = 1}^r \sum\limits_{i = 1}^N E^{aa}_i \nabla_i^{l - 1} = \sum\limits_{i = 1}^N \nabla_i^{l - 1},
\end{equation}
which in full analogy with the Yangian case provides us with rational 
deformed spin Calogero--Moser Hamiltonians, while the generators of 
the Gelfand--Tsetlin type subalgebra act by the elements 
\begin{equation}
    \label{GToper}
    C^m_l = \sum\limits_{a = 1}^{r - m} \sum\limits_{i = 1}^N E^{aa}_i \nabla_i^{l - 1}, \quad 0 \leq m \leq r - 1,
\end{equation}
which gives us additional quantum integrals for this model. 
Similarly to the trigonometric case we can summarise this 
section in the following theorem.
\begin{theorem}
    The image of the centre of the current algebra \eqref{currentcentre} in the
    representation
    $\widetilde{M}^{S_N}$ is generated by
    the higher integrals \eqref{higherham} of the generalised rational spin
    Calogero--Moser system \eqref{H2}.
    The image of the Gelfand--Tsetlin subalgebra or any other maximal commutative 
    subalgebra in the current algebra is a commutative algebra containing the 
    integrals \eqref{higherham}. Moreover,
    the image of the current algebra in the representation $\widetilde{M}^{S_N}$ 
    commutes with operators \eqref{higherham}.
\end{theorem}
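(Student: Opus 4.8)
The plan is to run the rational counterpart of the Yangian argument from the previous theorem, the whole thing resting on one structural fact: the operators $T^{ab}_p=\sum_{i=1}^N E^{ab}_i\nabla_i^{\,p}$ furnish a representation of the current algebra on $\widetilde{M}=(\C^r)^{\otimes N}\otimes M$ that preserves the diagonal invariants $\widetilde{M}^{S_N}$. First I would check the homomorphism property directly. Because $E^{ab}_i$ acts only on the spin factor and $\nabla_i$ only on $M$, each summand factorises as $E^{ab}_i\otimes\nabla_i^{\,p}$, and since the rational Dunkl operators commute, $[\nabla_i,\nabla_j]=0$, one has $[A\otimes B,\,C\otimes D]=[A,C]\otimes BD$ whenever $[B,D]=0$. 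The terms with $i\ne j$ vanish, as $E^{ab}_i$ and $E^{cd}_j$ act on distinct tensor components of $(\C^r)^{\otimes N}$, while the diagonal terms give $[T^{ab}_p,T^{cd}_q]=\delta_{bc}T^{ad}_{p+q}-\delta_{da}T^{cb}_{p+q}$, reproducing the current-algebra bracket. Preservation of $\widetilde{M}^{S_N}$ then follows from the type-$A$ equivariance $w\nabla_i w^{-1}=\nabla_{w(i)}$ combined with $wE^{ab}_iw^{-1}=E^{ab}_{w(i)}$: each $T^{ab}_p$ commutes with the diagonal $S_N$-action after reindexing $i\mapsto w(i)$, which is exactly the straightforward check quoted just before the theorem.

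Granting this, the commutativity statements are formal. The key observation is that the Hamiltonians $H_p$ are images of central elements. The scalar currents $I_r u^{\,j}$ are central in the current algebra, and since $\sum_a E^{aa}=\id_{\C^r}$ their images are $\sum_a T^{aa}_j=\sum_i\nabla_i^{\,j}$, which on $\widetilde{M}^{S_N}$ restrict to the operators $C^0_{j+1}$ of \eqref{currentcentre}, that is to $\widetilde{\Res}_\pi$ of $\sum_i\nabla_i^{\,j}$, namely $H_{p_j}$ with $p_j=\sum_i x_i^{\,j}$. As $\{p_j\}$ generates $\C[x]^{S_N}$ and $p\mapsto p(\nabla)$ is an algebra homomorphism (the $\nabla_i$ commuting), the integrals $H_p$ of \eqref{higherham} are precisely the subalgebra generated by these images. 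Because they come from central elements, they commute with the image of the entire current algebra, which is the third assertion. Moreover the representation is an algebra homomorphism, so the image of any commutative subalgebra --- in particular the Gelfand--Tsetlin subalgebra $A$, or any maximal commutative subalgebra --- is commutative; and every maximal commutative subalgebra contains the centre, whence its image contains all the $H_p$. This gives the second assertion.

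The substantive point, and the step I expect to be the main obstacle, is the reverse inclusion in the first assertion: that the image of the \emph{full} centre is no larger than $\langle H_p\rangle$. The scalar currents only account for the subalgebra they generate, so I must rule out further central elements producing operators outside $\langle H_p\rangle$. The clean route is to show that the centre of the current algebra consists of exactly the scalar currents, i.e.\ that $U(sl_r[u])$ has trivial centre, so that $\mathcal{Z}\bigl(U(gl_r[u])\bigr)=U\bigl(\C I_r\otimes\C[u]\bigr)$; one must be careful here, since naive loop-Casimirs such as $\sum_{a,b}T^{ab}_pT^{ba}_q$ are \emph{not} central once the $u$-degrees are carried. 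Equivalently, and in closer parallel with the preceding theorem, I would take the associated-graded/rational degeneration of the quantum-determinant computation \eqref{qdet}: under $gl_r[u]\cong\mathrm{gr}\,Y(gl_r)$ with the Polychronakos operators $D_i$ degenerating to $\nabla_i$, the generating central series $\det_q(u)$ of $Y(gl_r)$ passes to an expression of the form $\prod_i(u-\nabla_i-c)\big/\prod_i(u-\nabla_i)$ whose coefficients are symmetric functions of the $\nabla_i$ and hence lie in $\langle H_p\rangle$. The delicate part will be controlling the Yangian filtration so that the image of $\mathcal{Z}(U(gl_r[u]))$ is exactly the associated graded of the image of $\mathcal{Z}(Y(gl_r))$, thereby certifying that no central element escapes $\langle H_p\rangle$.
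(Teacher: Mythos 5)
Your proposal is correct, and its first two paragraphs coincide with the paper's own (implicit) proof: the paper offers no separate argument because the theorem summarises the construction preceding it --- the current-algebra relations for $T^{ab}_p=\sum_{i}E^{ab}_i\nabla_i^{p}$, preservation of $\widetilde{M}^{S_N}$ via the $S_N$-equivariance of the type-$A$ Dunkl operators, the identification of the images of the scalar currents with $\sum_i\nabla_i^{l-1}$ (whose restrictions are the power sums of Dunkl operators and hence generate all the $H_p$ of \eqref{higherham}), and the formal centrality and commutativity statements. Where you diverge is your third paragraph: the paper reads ``the centre of the current algebra'' in the weak sense of the Lie-algebra centre, i.e.\ the scalar currents $\sum_a e^{aa}\otimes u^{-l}$, whose image is taken to be \eqref{currentcentre} essentially by definition, so there is no reverse inclusion to establish, and what you identify as the main obstacle is not part of the statement as intended. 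If one insists on the stronger reading --- the centre of the enveloping algebra of the full current algebra --- then the fact you need, namely that $U(sl_r[u])$ has trivial centre so that the centre of $U(gl_r[u])$ is exactly the polynomial algebra on the scalar currents, is a known result that one would simply cite; your warning that loop Casimirs such as $\sum_{a,b}T^{ab}_pT^{ba}_q$ fail to be central once $u$-degrees are carried is accurate, but the filtration-and-degeneration argument from the quantum determinant that you sketch is neither carried out in the paper nor needed for the theorem as stated.
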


\begin{example}
    The simplest additional integral for the non-deformed rational spin Calogero--Moser system can be obtained either by considering the Gelfand--Tsetlin subalgebra
    in the current algebra or by degenerating the integral from the trigonometric
    case, which gives
    \begin{equation*}
        h = \sum\limits_{i = 1}^N E^{11}_i \partial_{x_i} - \sum\limits_{j \neq
        i}^N E^{11}_i \frac{c}{x_i - x_j} (1 - P_{ij}).
    \end{equation*}
    In the deformed case, for $N = 3$, $m = 1$, $n = 1$, and $c = 
    \frac{1}{2}$, this integral  becomes
     \begin{align*}
        h^{\rm def}  = \frac{1}{2}(E^{11}_1 + E^{11}_2) \partial_y &+ E^{11}_3
        \partial_{x} -   \frac{1}{\sqrt{2} y - 2x} 
        (E^{11}_1 - E^{11}_1 P_{13} + E^{11}_2 - E_2^{11} P_{23})  \\ 
        &- \frac{1}{ 2x - \sqrt{2}y} E^{11}_3 
        (2 - P_{13} - P_{23}).
    \end{align*}
\end{example}

\begin{remark}
Another way to obtain additional integrals for various rational Hamiltonians which we consider is to apply the restriction procedure of Section \ref{spin case} and a generalisation of Theorem \ref{maintheorem} to elements of the centraliser ${\mathcal H}_c^W$ which commute with $\sum_{i=1}^N \nabla_i^2$ and do not have the form $p(\nabla)$ for a polynomial $p$. A large family of such elements in ${\mathcal H}_c^W$ can be given in terms of Dunkl Laplace--Runge--Lenz vector and Dunkl angular momenta (see \cite{feiginhakobyan}).

\end{remark}

\subsection*{Acknowledgements}
We are very grateful to C.\,Korff for many joint discussions in the early stages of this work.
We would also like to thank   V.\,Caudrelier, O.\,Chalykh, N.\,Cramp\'e, and J.\,Lamers for useful comments.

The work of M.\,Feigin was supported by the Engineering and Physical Sciences Research Council [grant number  EP/W013053/1]. 
The work of M.~Vrabec was funded by a Carnegie--Caledonian PhD scholarship from the Carnegie Trust for the Universities of Scotland. M.~Vasilev is grateful to the University of Glasgow for financial support.

\appendix

\section{An explicit higher integral}\label{B_2 integral}

Recall the operator $L$ from Example~\ref{B_2 example}, which is a two-dimensional restriction of a Hamiltonian for the root system $B_3$.
The $4$th-order integral $L_4 = \widetilde{\Res}_{\pi}( \sum_{i = 1}^3 \nabla_i^4)$ of the operator $L$ 
for $c_1 = 1$ is the matrix differential operator $L_4 = (L_4^{(i,j)})_{i,j=1}^2$ with 
\begin{align*}
    L_4^{(1,1)} &= 
    \partial_{x_1}^4 + \partial_{x_2}^4
    -\frac{19}{2 x_1^2} \partial_{x_1}^2
    -\frac{35}{2x_2^2} \partial_{x_2}^2
    +\frac{3 x_1^2-19 x_2^2}{x_1^3 \left(x_1^2-x_2^2\right)}\partial_{x_1}
    +\frac{35 x_1^2-51 x_2^2}{x_2^3\left(x_1^2-x_2^2\right)}\partial_{x_2} \\ \ \\
    &\quad +\frac{385 x_1^6+511 x_1^4 x_2^2 -479 x_1^2 x_2^4 -161 x_2^6}{16 x_1^4 x_2^4\left(x_1^2-x_2^2\right)},
\end{align*}
\begin{align*}
    L_4^{(2,1)} &= 
    \frac{8}{x_1^2}\partial_{x_1}^2-\frac{8}{x_2^2} \partial_{x_2}^2
    -\frac{16 \left(3x_1^2-x_2^2\right)}{x_1^3 \left(x_1^2-x_2^2\right)}\partial_{x_1} +\frac{16 \left(x_1^2-3 x_2^2\right)}{x_2^3\left(x_1^2-x_2^2\right)}\partial_{x_2} \\ \ \\
    &\quad 
   +\frac{2 \left(7 x_1^6+ x_1^4x_2^2+x_1^2x_2^4+7 x_2^6\right)}{x_1^4x_2^4 \left(x_1^2-x_2^2\right)},
\end{align*}
\begin{align*}
    L_4^{(1,2)} &=
    -\frac{8 \left(3 x_1^2 x_2^2-x_2^4\right)}{x_1^2 \left(x_1^2-x_2^2\right){}^2}\partial_{x_1}^2
    -\frac{8 \left(x_1^2+x_2^2\right)}{\left(x_1^2-x_2^2\right){}^2}\partial_{x_2}^2
    -\frac{16 x_1 x_2}{\left(x_1^2-x_2^2\right){}^2}\partial_{x_1}\partial_{x_2} \\ \ \\
    &\quad +\frac{8 \left(3 x_1^6+ x_1^4 x_2^2 -2 x_1^2 x_2^4 +2 x_2^6\right)}{x_1^3\left(x_1^2-x_2^2\right){}^3}\partial_{x_1}
    +\frac{8 \left(4 x_1^4-11 x_1^2 x_2^2 +3 x_2^4\right)}{x_2\left(x_1^2-x_2^2\right){}^3}\partial_{x_2} \\ \ \\
    &\quad -\frac{2 \left(21 x_1^{10}-87 x_1^8 x_2^2 +166 x_1^6 x_2^4 -20 x_1^4 x_2^6 -23 x_1^2 x_2^8 +7 x_2^{10}\right)}{x_1^4 x_2^2 \left(x_1^2-x_2^2\right){}^4},
\end{align*}
and

\begin{align*}
    L_4^{(2,2)} &= \partial_{x_1}^4+\partial_{x_2}^4 
    -\frac{51 x_1^4-6 x_1^2 x_2^2 +19 x_2^4}{2 x_1^2 \left(x_1^2-x_2^2\right){}^2}\partial_{x_1}^2
    -\frac{3 x_1^4+26 x_1^2x_2^2 +35x_2^4}{2x_2^2 \left(x_1^2-x_2^2\right){}^2 }\partial_{x_2}^2 
    -\frac{32 x_1 x_2 }{\left(x_1^2-x_2^2\right){}^2}\partial_{x_1}\partial_{x_2} \\ \ \\
    &\quad  
    +\frac{51 x_1^6-41 x_1^4 x_2^2 +73 x_1^2x_2^4 -19x_2^6}{x_1^3 \left(x_1^2-x_2^2\right){}^3 }\partial_{x_1} 
    +\frac{3 x_1^6+7 x_1^4x_2^2 -71 x_1^2x_2^4 -3 x_2^6}{x_2^3\left(x_1^2-x_2^2\right){}^3}\partial_{x_2} \\ \ \\
    &\quad -\frac{63 x_1^{12}-60  x_1^{10}x_2^2-1959 x_1^8x_2^4 +8200 x_1^6x_2^6 -2567 x_1^4x_2^8 +580 x_1^2x_2^{10} -161 x_2^{12}}{16 x_1^4x_2^4 \left(x_1^2-x_2^2\right){}^4  }.
\end{align*}

\begin{small}

\end{small}
 
\end{document}